\newcount\Comments
\Comments=0 
\def \fullversion {1}

\ifx \fullversion \undefined
\documentclass[envcountsame]{llncs}
\else
\documentclass[11pt]{article}
\usepackage{fullpage,paper}
\fi

\usepackage{dsfont,graphicx,color,hyperref,url,amsmath,amssymb}
\newcommand{\cut}[1]{}
\newcommand{\PDF}{\ensuremath{\mathsf{PDF}}}
\newcommand{\CDF}{\ensuremath{\mathsf{CDF}}}
\title{Privacy Games}
\author{Yiling Chen\thanks{Supported in part by NSF grant CCF-1301976.} \qquad Or Sheffet\thanks{Supported in part by NSF grant CNS-1237235.}  \qquad Salil Vadhan\thanks{Supported by NSF grant CNS-1237235, a gift from Google, Inc., and a Simons Investigator grant.}
\\
Center for Research on Computation and Society\\School of Engineering and Applied Sciences\\Harvard University\\ \texttt{\{yiling,osheffet,salil\}@seas.harvard.edu} 
}
\date{}

\newcommand{\erase}[1]{}
\newcommand{\kibitz}[2]{\ifnum\Comments=1\textcolor{#1}{#2}\fi}

\newcommand{\os}[1]{\kibitz{red}{\bf\noindent [OS: #1]}}

\newcommand{\type}{\ensuremath{t}}

\renewcommand{\Pr}{{\bf Pr}} 

\ifx \fullversion\undefined
\renewcommand{\paragraph}[1]{\vspace{1mm}\noindent\textit{#1}}
\fi

\begin{document}
\maketitle

\ifx \fullversion \undefined \vspace{-0.75cm} \fi
\begin{abstract}
The problem of analyzing the effect of privacy concerns on the behavior of selfish utility-maximizing agents has received much attention lately. 
Privacy concerns are often modeled by altering the utility functions of agents to consider also their privacy loss~\cite{Xiao13,GhoshR11,NissimOS12,ChenCKMV13}. 
Such privacy aware agents prefer to take a randomized strategy even in very simple games in which non-privacy aware agents play pure strategies. In some cases, the behavior of privacy aware agents follows the framework of Randomized Response, a well-known mechanism that preserves differential privacy.

Our work is aimed at better understanding the behavior of agents in settings where their privacy concerns are explicitly given. 
We consider a toy setting where agent $A$, in an attempt to discover the secret type of agent $B$, offers $B$ a gift that one type of $B$ agent likes and the other type dislikes. As opposed to previous works, $B$'s incentive to keep her type a secret isn't the result of ``hardwiring'' $B$'s utility function to consider privacy, but rather takes the form of a payment between $B$ and $A$. We investigate three different types of payment functions and analyze $B$'s behavior in each of the resulting games. As we show, under some payments, $B$'s behavior is very different than the behavior of agents with hardwired privacy concerns and might even be deterministic. Under a different payment we show that $B$'s BNE strategy does fall into the framework of Randomized Response.

\end{abstract}

\ifx \fullversion \undefined \vspace{-1.1cm}  \fi

\ifx \fullversion\undefined \else \newpage \setcounter{tocdepth}{2}\tableofcontents \newpage \fi
\section{Introduction}
\label{sec:intro}
\ifx \fullversion \undefined \vspace{-0.4cm} \fi

In recent years, as the subject of privacy becomes an increasing concern, many works have discussed the potential privacy concerns of economic utility-maximizing agents.
Obviously, utility-maximizing agents are worried about the effect of revealing personal information in the current game on future transactions, and wish to minimize potential future losses. In addition, some agents may simply care about what some outside observer, who takes no part in the current game, believes about them. Such agents would like to optimize the effect of their behavior in the current game on the beliefs of that outside observer. Yet specifying the exact way in which information might affect the agents' future payment or an outside observer's beliefs is a complicated and intricate task.

Differential privacy (DP), a mathematical model for privacy, developed for statistical data analysis~\cite{DworkMNS06,DworkKMMN06}, avoids the need for such intricate modeling by providing a worst-case bound on an agents' exposure to privacy-loss. Specifically, by using a $\epsilon$-differentially private mechanism, agents can guarantee that the belief of \emph{any} observer about them changes by no more than a multiplicative factor of $e^\epsilon\approx 1+\epsilon$ once this observer sees the outcome of the mechanism~\cite{Dwork06}
. Furthermore, as pointed out in~\cite{GhoshR11,NissimOS12}, using a $\epsilon$-differentially private mechanism the agents guarantee that, in expectation, \emph{any} future loss increases by no more than a factor of $e^\epsilon-1\approx \epsilon$. A recent line of work~\cite{Xiao13,GhoshR11,NissimOS12,ChenCKMV13} has used ideas from differential privacy to model and analyze the behavior of privacy-awareness in game-theoretic settings. The aforementioned features of DP allow these works to bypass the need to model future transactions. Instead, they model privacy aware agents as selfish agents with utility functions that are ``hardwired'' to trade off between two components: a (positive) reward from the outcome of the mechanism vs a (negative) loss from their non-private exposure. This loss can be upper-bounded using DP, and hence in some cases can be shown to be dominated by the reward (of carefully designed mechanisms), showing that privacy concerns don't affect an agent's behavior.


However, in other cases, the behavior of privacy-aware agents may differ drastically from the behavior of classical, non-privacy aware agents.
For example, consider a toy-game in which $B$ tells $A$ which of the two free gifts that $A$ offers (or \emph{coupons} as we call it, for reasons to be explained later) $B$ would like to receive. We characterize $B$ using one of two types, $0$ or $1$; where 
type $0$ prefers the first gift and type $1$ prefers the second one. (This is a rephrasing of the ``Rye or Wholewheat'' game discussed in~\cite{NissimOS12}.) Therefore it is simple to see that a non-privacy-aware agent always (deterministically) asks for the gift that matches her type. In contrast, if we model the privacy loss of a privacy-aware agent using DP as in the work of Ghosh and Roth~\cite{GhoshR11} (and the value of the coupon is large enough), a privacy-aware agent takes a randomized strategy. (See Section~\ref{subsubsec:privacy_aware_agents}.)  
Specifically, the agent plays \emph{Randomized Response}, a standard differentially private mechanism that outputs a random choice slightly biased towards the agent's favorable action. 

However, it was argued~\cite{NissimOS12,ChenCKMV13} that it is not realistic to use the worst-case model of DP to quantify the agent's privacy loss and predict her behavior. Differential privacy should only serve as an \emph{upper bound} on the privacy loss, whereas the agent's expected privacy loss can (and should in fact) be much smaller --- depending on the agent's predictions regarding future events,  adversary's prior belief about her, the types and strategies of other agents, and the random choices of the mechanism and of other agents. As discussed above, these can be hard to model, so it is tempting to use a worst-case model like differential privacy.

But what happens if we can formulate the agent's future transactions? What if we know that the agent is concerned with the belief of a specific adversary, and we can quantify the effects of changes to that belief? Is the behavior of a classical selfish agent in that case well-modeled by such a ``DP-hardwired'' privacy-aware agent? Will she even randomize her strategy? In other words, we ask:
\begin{center}
\begin{minipage}[c]{0.9\textwidth}
\centering
\textit{What is the behavior of a selfish utility-maximizing agent in  a setting with clear privacy costs?}
\end{minipage}
\end{center}
More specifically, we ask whether we can take the above-mentioned toy-game and alter it by introducing payments between $A$ and $B$ such that the behavior of a privacy-aware agent in the toy-game matches the behavior of classical (non-privacy aware) agent in the altered game. In particular, in case $B$ takes a randomized strategy --- does her behavior preserve $\epsilon$-differential privacy, and for what value of $\epsilon$? The study of these questions may also provide insights relevant for traditional, non-game-theoretic uses of differential privacy --- helping us understand how tightly differential privacy addresses the concerns of data subjects, and thus providing guidance in the setting of the privacy parameter $\epsilon$ or the use of alternative, non-worst-case variants of differential privacy (such as~\cite{BassilyGKS13}).

\paragraph{Our model.} In this work we consider multiple games that model an interaction between an agent which has a secret type and an adversary whose goal is to discover this type. Though the games vary in the resulting behavior of the agents, they all follow a common outline which is similar to the toy game mentioned above. Agent $A$ offers $B$ a free coupon, that comes in one of two types $\{0,1\}$. Agent $B$ has a secret type $\type\in \{0,1\}$ chosen from a known prior $(D_0,D_1)$, such that a type-$\type$ agent  has positive utility $\rho_t$ for type-$\type$ coupon and zero utility for a type-$(1-\type)$ coupon. And so the game starts with $B$ sending $A$ a signal $\hat\type$ indicating the requested type of coupon. (Formally, $B$'s utility for the coupon is $\rho_{\type} \mathds{1}_{[\hat\type=\type]}$ for some parameters $\rho_0, \rho_1$.) Following this interaction, agent $C$, who viewed the signal $\hat\type$ that $B$ sent, challenges $B$ into a game --- with $C$ taking action $\tilde\type$ and incurring a payment from $B$ of  $P(\tilde\type,\type)$. To avoid the need to introduce a third party into the game, we identify $C$ with $A$.\footnote{Hence the reason for the name ``The Coupon Game''. We think of $A$ as $G$ -- an ``evil'' car-insurance company that offers its client a coupon either for an eyewear store or for a car race; thereby increasing the client's insurance premium based on either the client's bad eyesight or the client's fondness for speedy and reckless driving.} 
Figure~\ref{fig:game_outline} gives a schematic representation of the game's outline. 

We make a few observations of the above interaction. 
We aim to model a scenario where $B$ has the most incentive to hide her true type whereas $A$ has the most incentive to discover $B$'s type. Therefore, all of the payments we consider have the property that if $B$'s type is $\type^*$ then $t^* = \arg\max_{\tilde \type}P(\tilde\type,\type^*)$. Furthermore, the game is modeled so that the payments are transferred from $B$ to $A$, which makes $A$'s and $B$'s goals as opposite as possible. (In fact, past the stage where $B$ sends a signal $\hat\type$, we have that $A$ and $B$ plays a zero-sum game.)
We also note that $A$ and $B$ play a Bayesian game (in extensive form) as $A$ doesn't know the private type of $B$, only its prior distribution. We characterize Bayesian Nash Equilibria (BNE) in this paper and will show that in each game, the BNE is unique except when parameters of the game satisfy certain equality constraints. It is not difficult to show that the strategies at every BNE of our games are part of a Perfect Bayesian Equilibrium (PBE), i.e. a subgame-perfect refinement of the BNE. However, we focus on BNE in this paper as the equilibrium refinement doesn't bring any additional insight to our problem.
\erase{ 
Third, following classical game theory, in each game we characterize the Nash Equilbirium (BNE). As we show, in each game the BNE is unique, unless that parameters of the game are set in  a way that satisfy certain equality constraints. Therefore, other solution concepts like subgame-perfect equilibrium and its Bayesian equivalent (should they exist in the game), must also be the same unique BNE Furthermore, as the prior on $B$'s type is known to all agents, we have that this BNE must all be perfect Bayes equilibrium. \os{Yiling - I think this is true and fairly trivial, am I right?} }

\begin{figure}[t]
\centering \includegraphics[scale=0.35]{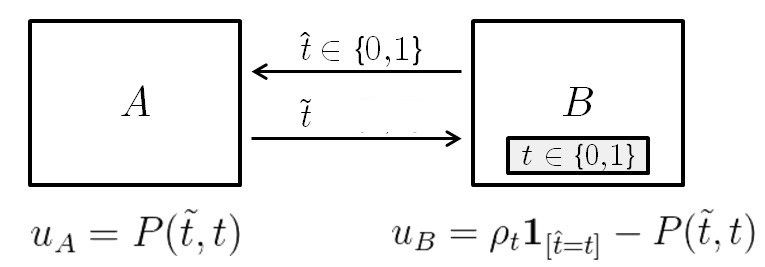}
\caption{A schematic view of the privacy game we model.\label{fig:game_outline}}
\ifx \fullversion \undefined \vspace{-0.5cm} \fi
\end{figure}

\paragraph{Our results and paper organization.} First, in Section~\ref{sec:preliminaries}, following preliminaries we discuss the DP-hardwired privacy-aware agent as defined by Ghosh and Roth~\cite{GhoshR11} and analyze her behavior in our toy game. Our analysis shows that given sufficiently large coupon valuations $\rho_t$, both types of $B$ agent indeed play Randomized Response. We also discuss conditions under which other models of DP-hardwired privacy-aware agents play a randomized strategy.

Following preliminaries, we consider three different games. These games follow the general coupon-game outline, yet they vary in their payment function. The discussion for each of the games follows a similar outline. We introduce the game, then analyze the two agents' BNE strategies and see if the strategy of the $B$ agent is indeed randomized or pure (and in case it is randomized --- whether or not it follows Randomized Response for some value of $\epsilon$). We also compare the coupon game to a ``benchmark game'' where $B$ takes no action and $A$ guesses $B$'s type without any signal from $B$. Investigating whether it is even worth while for $A$ to offer such a coupon, we compare $A$'s profit between the two games.\footnote{The benchmark game is not to be confused with the toy-game we discussed earlier in this introduction. In the toy game, $A$ takes no action and $B$ decides on a signal. In the benchmark game, $B$ takes no action and $A$ decides which action to take based on the specific payment function we consider in each game.}  The payment functions we consider are the following.
\begin{enumerate}
\item In Section~\ref{sec:scoring_rule} we consider the case where the payment function is given by a \emph{proper scoring rule}. Proper scoring rules allows us to quantify the $B$'s cost to any change in $A$'s belief about her type. We show that in the case of symmetric scoring rules (scoring rules that are invariant to relabeling of event outcomes) both types of $B$ agent follow a randomized strategy that causes $A$'s posterior belief on the types to resemble Randomized Response. That is, initially $A$'s belief on $B$ being of type-$0$ (resp. type-$1$) is $D_0$ (resp. $D_1$); but $B$ plays in a way such that after viewing the $\hat\type=0$ signal, $A$'s belief that $B$ is of type-$0$ (resp. type-$1$) is $\tfrac {1 + \epsilon} 2$ (resp. $\tfrac {1-\epsilon}2$) for some value of $\epsilon$ (and vice-versa in the case of the $\hat\type=1$ signal with the same $\epsilon$).
\item In Section~\ref{sec:matching_pennies} we consider the case where the payments between $A$ and $B$ are the result of $A$ guessing correctly $B$'s type. $A$ views the signal $\hat\type$ and then guesses a type $\tilde\type\in\bits$ and receives a payment of $\mathds{1}_{[\tilde\type=\type]}$ from $B$. This payment models the following viewpoint of $B$'s future losses: there is a constant gap (of one ``unit of utility'') between interacting with an agent that knows $B$'s type to an agent that does not know her type. We show that in this case, if the coupon valuations are fixed as $\rho_0$ and $\rho_1$, then at least one type of $B$ agent plays deterministically. However, if $B$'s valuation for the coupon is sampled from a continuous distribution, then $A$'s strategy effectively dictates a threshold with the following property: any $B$ agent whose valuation for the coupon is below the threshold lies and signals $\hat\type = 1-\type$, and any agent whose valuation is above the threshold signals truthfully $\hat\type = \type$.
Hence, an $A$ agent who does not know $B$'s valuation thinks of $B$ as following a randomized strategy.
\item In Section~\ref{sec:opt-out-possible} we consider a variation of the previous game where $A$ also has the option to opt out and not challenge $B$ into a payment game --- to report~$\bot$ and in return get no payment (i.e., $P(\bot,\type)=0$). We show that in such a game, under a very specific setting of parameters, the only BNE is such where both types of $B$ agent take a randomized strategy. Under alternative settings of the game's parameters, the strategy of $B$ is such that at least one of the two types plays deterministically.
\end{enumerate}
\ifx \cameraready \undefined
Conclusions and future directions appear in Section~\ref{sec:conclusions}, where we provide a discussion of our results. 
\else
Future directions are deferred to the full version of the paper, due to space limitation.
\fi
We find it surprising to see how minor changes to the privacy payments lead to diametrically different behaviors. In particular, we see the existence of a threshold phenomena. Under certain parameter settings in the game we consider in item 3 above,
we have that if the value of the coupon is above a certain threshold then at least one of the two types of $B$ agent plays deterministically; and if the value of the coupon is below this threshold, $B$ randomizes her behavior s.t. $\hat\type=\type$ w.p. close to $\tfrac 1 2$.
\ifx \fullversion \undefined \vspace{-0.6cm}  \fi
\subsection{Related Work}
\label{subsec:related_work}
\ifx \fullversion \undefined \vspace{-0.2cm}  \fi
The study of the intersection between mechanism design and differential privacy began with the seminal work of McSherry and Talwar~\cite{McsherryT07}, who showed that an $\epsilon$-differentially private mechanism is also $\epsilon$-truthful. The first attempt at defining a privacy-aware agent was of Ghosh and Roth~\cite{GhoshR11} who quantified the privacy loss using a linear approximation $v_i\cdot \epsilon$ where $v_i$ is an individual parameter and $\epsilon$ is the level of differential privacy that a mechanism preserves. Other applications of differentially privacy mechanisms in game theoretic settings were studied by Nissim et al~\cite{NissimST12}. The work of Xiao~\cite{Xiao13} initiated the study of mechanisms that are truthful even when you incorporate the privacy loss into the agents' utility functions. Xiao's original privacy loss measure was the mutual information between the mechanism's output and the agent's type. Nissim et al~\cite{NissimOS12} (who effectively proposed a preliminary version of our coupon game called ``Rye or Wholewheat'') generalized the models of privacy loss to only assume that it is \emph{upper bounded} by $v_i \cdot \epsilon$. Chen et al~\cite{ChenCKMV13} proposed a refinement where the privacy loss is measured with respect to the given input and output.
 Fleischer and Lyu~\cite{FleischerL12} considered the original model of agents as in Ghosh and Roth~\cite{GhoshR11} but under the assumption that $v_i$, the value of the privacy parameter of each agent, is sampled from a known distribution.

Several papers in economics look at the potential loss of agents from having their personal data revealed. In fact, one folklore objection to the Vickrey auction is that in a repeated setting, by providing the sellers with the bidders' true valuations for the item, the bidders subject themselves to future loss should the seller prefer to run a reserved-price mechanism in the future.
In the context of repeated interaction between an agent and a company, there have been works~\cite{ConitzerTW12,BergemannBM13} studying the effect of price differentiation based on an agent allowing the company to remember whether she purchased the same item in the past. Interestingly, strategic agents realize this effect and so they might ``haggle'' --- reject a price below their valuation for the item in round $1$ so that they'd be able to get even lower price in round $2$. In that sense, the fact that the agents publish their past interaction with the company  actually helps the agents. Other work~\cite{CalzolariP06} discusses a setting where a buyer sequentially interacts with two different sellers, and characterizes the conditions under which the first seller prefers not to give the buyer's information to the second seller. Concurrently with our work, Gradwohl and Smorodinsky~\cite{GradwohlS14}, whose motivation is to analyze the effect of privacy concerns, introduce a framework of games in which an agent's utility is affected by both her actions and how her actions are perceived by a third party.

The privacy games that we propose and analyze in this paper fall into the class of signaling games~\cite{MWG95}, where a sender ($B$ in our game) with a private type sends a message (i.e. a signal) to a receiver ($A$ in our game) who then takes an action. The payoffs of both players depend on the sender's message, the receiver's action, and the sender's type. Signaling games have been widely used in modeling behavior in economics and biology. The focus is typically on understanding when signaling is informative, i.e. when the message of the sender allows the receiver to infer the sender's private type with certainty, especially in settings when signaling is costly (e.g. Spence's job market signaling game~\cite{spence73}). In our setting, however, informative signaling violates privacy. We are interested in characterizing when the sender plays in a way such that the receiver cannot infer her type deterministically.

\ifx \fullversion \undefined \vspace{-0.5cm}  \fi
\ifx \fullversion \undefined \else \newpage \fi
\section{Preliminaries}
\label{sec:preliminaries}
\ifx \fullversion \undefined \vspace{-0.4cm}  \fi
\subsection{Equilibrium Concept}
\ifx \fullversion \undefined \vspace{-0.3cm} \fi

We model the games between $A$ and $B$ as Bayesian extensive-form games. However, instead of using the standard Perfect Bayesian Equilibrium (PBE), which is a refinement of Bayesian Nash Equilibrium (BNE)  for extensive-form games, as our solution concept, we analyze BNE for our games. It can be shown that all of the BNEs considered in our paper can be ``extended'' to PBEs (by appropriately defining the beliefs of agent A about agent B at all points in the game). We thus avoid defining the more subtle concept of PBE as the refinement doesn't provide additional insights for our problem. Below we define BNE.   

A \emph{Bayesian} game between two agents $A$ and $B$ is specified by their type spaces $(\Gamma_A, \Gamma_B)$, a prior distribution $\Pi$ over the type spaces (according to which nature draws the private types of the agents), sets of available actions $(C_A, C_B)$, and utility functions, $u_i : \Gamma_A \times \Gamma_B \times C_A \times C_B \to \mathbb{R}$, $i \in \{A, B\}$.  
A \emph{mixed} or \emph{randomized} strategy of agent $i$ maps a type of agent $i$ to a distribution over her available actions, i.e. $\sigma_i: \Gamma_i \to \Delta(C_i)$, where $\Delta(C_i)$ is the probability simplex over $C_i$.  
\ifx \cameraready \undefined
When $\sigma_i$ deterministically maps a type to an action, it is called a \emph{pure strategy}. The Bayesian Nash Equilibrium (BNE) of the two-player game is defined as follows. 
\fi
\begin{definition}
\label{def:NE}
A strategy profile $(\sigma_A, \sigma_B)$ is a \emph{Bayesian Nash Equilibrium} if 
\[
\E [ u_i (T_i, T_{-i}, \sigma_i(T_i), \sigma_{-i}(T_{-i})) | T_i=t_i] \geq \E [u_i (T_i, T_{-i}, \sigma'_i(T_i), \sigma_{-i}(T_{-i})) | T_i=t_i] 
\]
for all $i \in \{A, B\}$, all types $t_i \in \Gamma_i$ occurring with positive probability, and all strategies $\sigma'_i$, where $\sigma_{-i}$ and $T_{-i}$ denote the strategy and type of the other agent respectively and the expectation is taken over the randomness of agent type $T_{-i}$ and the randomness of the strategies, $\sigma_i$, $\sigma_{-i}$ and $\sigma'_i$.
\end{definition}
\ifx \cameraready \undefined
In other words, a strategy profile $(\sigma_A, \sigma_B)$ is a BNE if both agents maximize their expected utility by playing $\sigma_i$ in responding to the other player's strategy $\sigma_{-i}$, i.e. they both play \emph{best response}. 


As mentioned in Section \ref{subsec:related_work}, our games between $A$ and $B$ belong to the class of signaling games. For signaling games, the terms {\em separating equilibrium} and {\em pooling equilibrium} are often used to characterize when signaling is fully informative. At a separating equilibrium, a player's strategy allows the other player to deterministically infer her private type, while at a pooling equilibrium multiple types of a player may take the same action, preventing the other player to infer her type with certainty. 
\fi

\ifx \fullversion \undefined \vspace{-0.8cm} \fi
\subsection{Differential Privacy}
\label{subsec:DP}
\ifx \fullversion \undefined \vspace{-0.4cm} \fi
In order to define differential privacy, we first need to define the notion of neighboring inputs. Inputs are elements in $\mathcal{X}^n$ for some set $\mathcal{X}$, and two inputs $\mathcal{I},\mathcal{I}'\in \mathcal{X}^n$ are called neighbors if the two are identical on the details of all individuals (all coordinates) except for at most one.

\begin{definition}[\cite{DworkMNS06}]
\label{def:privacy}
An algorithm $\textsf{ALG}$ which maps inputs into some range $\mathcal{R}$ satisfies \emph{$\epsilon$-differential privacy} if for all pairs of neighboring inputs $\mathcal{I},\mathcal{I}'$ and for all subsets $\mathcal{S}\subset\mathcal{R}$ it holds that
$\Pr[\mathsf{ALG}(\mathcal{I}) \in \mathcal{S}] \leq e^\epsilon \Pr[\mathsf{ALG}(\mathcal{I'}) \in \mathcal{S}]$.
\end{definition}
One of the simplest algorithms that achieve $\epsilon$-differential privacy is called \emph{Randomized Response}~\cite{KasiviswanathanLNRS08,dwork2010differential}, which dates back to the 60s~\cite{Warner65}. This algorithm is best illustrated over a binary input, where each individual is represented by a single binary bit (therefore a neighboring instance is a neighbor in which one individual is represented by a different bit), Randomized Response works by perturbing the input. For each individual $i$ represented by the bit $b_i$, the algorithm randomly and independently picks a bit $\hat b_i$ s.t. $\Pr[\hat b_i = b_i] = \tfrac {1+\epsilon} 2$ for some $\epsilon \in [0,1)$.
It follows from the definition of the algorithm that it satisfies $\ln(\tfrac{1+\epsilon}{1-\epsilon}) \approx 2\epsilon$-differential privacy. Randomized Response is sometimes presented as a distributed algorithm, where each individual randomly picks $\hat b_i$ locally, and reports $\hat b_i$ publicly. 
Therefore, it is possible to view this work as an investigation of the type of games in which selfish utility-maximizing agents truthfully follow Randomized Response, rather than sending some arbitrary bit as $\hat b_i$.

In this work, we define certain games and analyze the behavior of the two types of $B$ agent in the BNE of these games. And so, denoting $B$'s strategy as $\sigma_B$, we consider the implicit algorithm $\sigma_B(t)$ that tells a type-$\type$ agent what probability mass to put on the $0$-signal and on the $1$-signal. Knowing $B$'s strategy $\sigma_B$, we say that $B$ satisfies $\ln(X_{\rm game})$-differential privacy where\footnote{We use the convention $\tfrac 0 0 =1$.}
\[  X_{\rm game}\stackrel{\rm def} = X_{\rm game}(\sigma_B)=\max_{\type,\hat\type\in \{0,1\}} \left(   \frac{\Pr[\sigma_B(t)=\hat \type]} {\Pr[\sigma_B(1-t)= \hat \type ]}  \right) \]
We are interested in finding settings where $X_{\rm game}(\sigma_B^*)$ is finite, where $\sigma_B^*$ denotes $B$'s BNE strategy. We say $B$ \emph{plays a Randomized Response strategy} in a game whenever her BNE strategy $\sigma_B^*$ satisfies $\Pr[\sigma_B^*(0)=0] = \Pr[\sigma_B^*(1)=1]=p$ for some $p\in [1/2,1)$.

\subsubsection{Privacy-Aware Agents.}
\label{subsubsec:privacy_aware_agents}
The notion of privacy-aware agents has been developed through a series of works~\cite{Xiao13,GhoshR11,NissimOS12,ChenCKMV13}. 
The utility function of our privacy-aware agent $B$ is of the form $u_B = u_B^{out}-u_B^{priv}$. The first term, $u_B^{out}$ is the utility of agent $B$ from the mechanism.
The second term, $u_B^{priv}$, represents the agent's privacy loss. The exact definition of $u_B^{priv}$ (and even the variables $u_B^{priv}$ depends on) varies between the different works mentioned above, but all works bound the privacy-loss of an agent that interacts with a mechanism that satisfies $\epsilon$-differential privacy by  $u_B^{priv} \leq v\cdot \epsilon$ for some $v>0$. Here we argue about the behavior of a privacy-aware agent with the maximal privacy loss function, which is the type of agent considered by Ghosh and Roth~\cite{GhoshR11} (i.e., the agent's privacy loss when interacting with a mechanism that satisfies $\epsilon$-differential privacy is exactly $v\cdot\epsilon$ for some $v>0$).

Recall our toy game: $B$ sends a signal $\hat\type$ and gets a coupon of type $\hat\type$. Therefore, the outcome of this simple game is $\hat\type$, precisely the action that $B$ takes. $B$'s type is picked randomly to be $0$ w.p. $D_0$ and $1$ w.p. $D_1$, and a $B$ agent of type $\type$ has valuation of $\rho_t$ for a coupon of type $\type$. Therefore, in this game $u_B^{out} = \rho_t\mathds{1}_{[\hat \type = \type]}$. The mechanism we consider is $\sigma_B^*$, $B$'s utlity-maximizing strategy, which we think of as the implicit algorithm that tells a type-$\type$ agents what probability mass to put on sending the $\hat\type=0$ signal and what mass to put on the $\hat\type=1$ signal. As noted above, this strategy satisfies $\ln(X_{\rm game})$-differential privacy, and so $u_B^{priv}(\sigma_B^*) = v\cdot\ln(X_{\rm game})$ for some parameter $v>0$. 
Assuming $D_0\rho_0\neq D_1\rho_1$, our proof shows that this privacy-aware agent chooses essentially between two alternatives in our toy game: either both types take the same deterministic strategy and send the same signal ($\Pr[\sigma_B^*(0)=b]=\Pr[\sigma_B^*(1)=b]=1$ for some $b\in\bits$); or the agent randomizes her behavior and plays using Randomized Response: $\Pr[\sigma_B^*(0)=0] = \Pr[\sigma_B^*(1)=1] \in [\tfrac 1 2,1)$. We show that for sufficiently large values of the coupon the latter alternative is better than the first. 

\newcommand{\behaviorPrivacyAware}{
Let $B$ be a privacy-aware agent, whose privacy loss is given by $v\ln(X_{\rm game})$ for some $v>0$. Assume that there exists an $\alpha>0$ s.t. for sufficiently large values of $\rho_0, \rho_1$ it holds that $\min\{\rho_0,\rho_1\} \geq  \alpha\cdot(\rho_0+\rho_1)$. Then, the unique strategy $\sigma_B^*$ that maximizes $B$'s utility is randomized and satisfies: $\Pr[\sigma_B^*(0)=0] = \Pr[\sigma_B^*(1)=1]=p^*$ for some $p^*\in[\tfrac 1 2,1)$.
}
\begin{theorem}
\label{thm:behavior_privacy_aware}
\behaviorPrivacyAware
\end{theorem}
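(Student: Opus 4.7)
The plan is to represent $B$'s strategy by $(p_0,p_1)\in[0,1]^2$ with $p_0:=\Pr[\sigma_B(0)=0]$ and $p_1:=\Pr[\sigma_B(1)=1]$, so that $B$'s expected utility is
$$U(p_0,p_1)\;=\;D_0\rho_0\,p_0+D_1\rho_1\,p_1\;-\;v\ln X(p_0,p_1),$$
where $X(p_0,p_1)=\max\bigl\{p_0/(1-p_1),\,(1-p_1)/p_0,\,(1-p_0)/p_1,\,p_1/(1-p_0)\bigr\}$ under the $0/0=1$ convention. First I would establish the key structural fact: for every $X_0\ge 1$, the sublevel set $P_{X_0}:=\{(p_0,p_1)\in[0,1]^2:X(p_0,p_1)\le X_0\}$ is the convex quadrilateral cut out by the four linear inequalities $1\le p_0+X_0 p_1\le X_0$ and $1\le X_0 p_0+p_1\le X_0$, with vertices $(0,1)$, $(1,0)$, $(X_0/(X_0+1),X_0/(X_0+1))$, and $(1/(X_0+1),1/(X_0+1))$ (obtained by pairwise intersection).

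Writing $a:=D_0\rho_0$, $b:=D_1\rho_1$, and $S:=a+b$, maximizing the linear outcome term $ap_0+bp_1$ over $P_{X_0}$ is an LP whose optimum lies at a vertex. The four vertex values are $b$, $a$, $SX_0/(X_0+1)$, and $S/(X_0+1)$; the last is always $\le S/2\le\max(a,b)$ and is therefore dominated. Consequently, $\max U$ reduces to the better of two one-parameter branches over $X_0\ge 1$: a \emph{pooling} branch $\max(a,b)-v\ln X_0$, maximized at $X_0=1$ with value $\max(a,b)$ and attained by $(1,0)$ or $(0,1)$; and a \emph{Randomized Response} branch $SX_0/(X_0+1)-v\ln X_0$, which under the substitution $p=X_0/(X_0+1)$ becomes $h(p):=Sp-v\ln(p/(1-p))$. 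Solving $h'(p)=0$ yields the critical point $p^*=\tfrac12+\sqrt{\tfrac14-v/S}\in[\tfrac12,1)$ whenever $S\ge 4v$; strict concavity of $h$ on $(\tfrac12,1)$ identifies $p^*$ as the unique diagonal maximum.

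To see that the RR branch wins for large coupons, expand $p^*\approx 1-v/S$ to obtain $h(p^*)=S-v-v\ln(S/v)+o(1)$, while the pooling value equals $\max(a,b)=S-\min(a,b)$. The gap is therefore $\min(a,b)-v-v\ln(S/v)+o(1)$. The $\alpha$-hypothesis gives $\min(a,b)\ge \alpha\min(D_0,D_1)(\rho_0+\rho_1)$, which grows linearly in $\rho_0+\rho_1$, whereas $v\ln(S/v)$ grows only logarithmically. Hence for sufficiently large $\rho_0,\rho_1$ the gap is strictly positive and the RR branch strictly wins, yielding the claimed randomized form of $\sigma_B^*$.

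The main obstacle I anticipate is pinning down \emph{uniqueness} of the global maximizer $(p^*,p^*)$. For this I need (a) the reduction $X_0\mapsto V(X_0)-v\ln X_0$ (with $V(X_0)$ the LP value) is maximized at a unique $X_0^*$, and (b) at that $X_0^*:=p^*/(1-p^*)$ the LP on $P_{X_0^*}$ has a unique optimal vertex. For (a), I would combine the strict pooling-versus-RR inequality with a direct check that $g(X_0):=SX_0/(X_0+1)-v\ln X_0$ is strictly unimodal on the interval $[\max(a/b,b/a),\infty)$ (its derivative $S/(X_0+1)^2-v/X_0$ has at most two roots, only one of which is above this threshold). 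For (b), observe that $ap_0+bp_1$ is parallel to an edge of $P_{X_0^*}$ incident to the upper-diagonal vertex exactly when $X_0^*\in\{a/b,b/a\}$; but the $\alpha$-hypothesis bounds $a/b$ and $b/a$ by a constant depending only on $\alpha,D_0,D_1$, which is dominated by $X_0^*\approx S/v\to\infty$ for large $\rho$'s. Combining (a), (b) with the strict RR-versus-pooling inequality identifies $(p^*,p^*)$ with $p^*\in(\tfrac12,1)$ as the unique global maximizer.
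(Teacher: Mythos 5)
Your proposal is correct, and it converges to the same endgame as the paper---the diagonal reduction $h(p)=Sp-v\ln\bigl(p/(1-p)\bigr)$, the critical point $p^*=\tfrac12\bigl(1+\sqrt{1-4v/S}\bigr)$, and the asymptotic comparison in which the randomized branch beats pooling because $\min(a,b)=\Omega(S)$ under the $\alpha$-hypothesis while the privacy penalty grows only like $v\ln(S/v)$---but it gets there by a genuinely different route. The paper eliminates asymmetric strategies by hand: it shows any $(p,q)$ with $p+q\le 1$ is dominated by pooling, excludes the boundary cases $(1,q>0)$ and $(p>0,1)$ via infinite privacy loss, and then differentiates $u_B$ at a hypothetical interior maximum with $q^*>p^*$ (where $X_{\rm game}=p/(1-q)$ is locally smooth) to compute that any such critical point has value $D_1\rho_1-v\ln\bigl(D_1\rho_1/(D_0\rho_0)\bigr)<\max\{D_0\rho_0,D_1\rho_1\}$, hence is dominated. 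Your LP argument replaces all of this with one structural observation---the sublevel sets $\{X\le X_0\}$ are parallelograms with vertices $(0,1)$, $(1,0)$, and the two diagonal points (which I verified: the four ratio constraints are exactly $1\le p_0+X_0p_1\le X_0$ and $1\le X_0p_0+p_1\le X_0$)---so vertex enumeration collapses the problem to the pooling and Randomized-Response branches with no case analysis over which of the four ratios is active and no differentiation of the only piecewise-smooth $X_{\rm game}$. Your route also buys a more transparent uniqueness proof, which in the paper is left implicit in the case analysis: your step (a) works because the two roots of $vX_0^2+(2v-S)X_0+v=0$ have product $1$, so at most one exceeds the threshold, and your step (b) works because $X_0^*\approx S/v\to\infty$ while $a/b$ and $b/a$ are bounded by $\tfrac{\max(D_0,D_1)}{\min(D_0,D_1)}\cdot\tfrac{1-\alpha}{\alpha}$ under the $\alpha$-hypothesis. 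What the paper's calculus route buys instead is less setup and the exact value at asymmetric critical points; note that both arguments implicitly assume $D_0,D_1>0$ so that $\min(a,b)$ indeed grows linearly in $\rho_0+\rho_1$.
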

\newcommand{\PAA}{
\begin{proof}
Recall, the type of $B$ is chosen randomly to be $0$ w.p. $D_0$ and $1$ w.p. $D_1$. Given a strategy $\sigma_B$ for $B$, we denote $p=\Pr[\sigma_B(0)=0]$ and $q=\Pr[\sigma_B(1)=1]$ (so $\Pr[\sigma_B(0)=1] = 1-p$ and $\Pr[\sigma_B(1) = 0] = 1-q$). Therefore, \[X_{\rm game}(\sigma_B) = X_{\rm game}(p,q) = \max \left\{  \frac {p}{1-q}, \frac {1-q}{p}, \frac {q}{1-p}, \frac {1-p}{q} \right\}\]
Note that $X_{\rm game}(p,q) \geq 1$, with equality iff $p=1-q$ (which means $\sigma_B(t)$ is independent of $t$ and $B$ reveals no information about her type).
And so $B$ aims to maximizes the following utility function: $u_B = D_0\rho_0 p + D_1\rho_1 q - v\ln(X_{\rm game})$.
When the strategy that optimizes $B$'s utility, denoted $(p^*,q^*)$, satisfies $p^*=q^* = 1/2 +\epsilon$ for some $\epsilon \in [0,\tfrac 1 2)$ then we say that $B$ plays using Randomized Response. 

First, observe that if $p+q<1$ then $X_{\rm game} > 1$ and the utility of $B$ is $D_0\rho_0 p + D_1\rho_1 q - v\ln(X_{\rm game}) \leq D_0\rho_0p+D_1\rho_1 q$, so $B$ can always improve the utility by replacing setting either $(p,q)=(1,0)$ or $(p,q) = (0,1)$. The same argument holds for any $(p,q)$ where $p+q=1$ and both are not integral. (If $D_0\rho_0=D_1\rho_1$ then the agent in indifferent between any $(p,q)$ satisfying $p=1-q$.) Secondly, observe that the maximum cannot be obtained for $(p=1,q>0)$ or $(p>0,q=1)$, because in that case $X_{\rm game}$ shoots to infinity, so the privacy loss is infinite. Therefore, if there exist a strategy $(p,q)$ s.t. $p>1-q$ and $p,q\in(0,1)$ whose utility is strictly greater than $\max\{ D_0\rho_0, D_1\rho_1\}$, then it is a utility maximizing strategy. (Otherwise, one of the two strategies $(0,1)$ or $(1,0)$ maximizes $B$'s utility.)

Suppose that the maximum is obtained on some $(p^*,q^*)$ with $p^*+q^*>1$ and $q^*>p^*$. This means that $X_{\rm game} = \tfrac p {1-q}>1$. For any $(p,q)$ in a small enough neighborhood of $(p^*,q^*)$ we can differentiate $u_B$ and it holds that \[ D_0\rho_0 - \frac v {X_{\rm game}(p^*,q^*)} \left(\tfrac {\partial}{\partial p} X_{\rm game}(p^*,q^*)\right) = 0~,~~D_1\rho_1 - \frac v {X_{\rm game}(p^*,q^*)} \left(\tfrac {\partial}{\partial q} X_{\rm game}(p^*,q^*)\right) = 0\]
with $\tfrac {\partial}{\partial p} X_{\rm game} = \tfrac 1 {1-q}$ and $\tfrac {\partial}{\partial q} X_{\rm game} = \tfrac {p}{(1-q)^2}$, we have $\tfrac {D_1\rho_1}{D_0\rho_0} = \tfrac {p^*}{1-q^*}$, and so $D_0\rho_0 p^* + D_1 \rho_1 q^* = D_1\rho_1$. Denote $X^*_{\rm game} \stackrel{\rm def}= X_{\rm game}(p^*,q^*) = \frac {p^*} {1-q^*}$, and deduce that in this case the maximal utility is $D_0\rho_0 p^* + D_1 \rho_1 q^*-v\ln(X^*_{\rm game})=D_1\rho_1 - v\ln\left(\tfrac{D_1\rho_1}{D_0\rho_0}\right) < D_1\rho_1 \leq \max\{D_0\rho_0, D_1\rho_1\}$. Hence $B$ is still better off playing either $(0,1)$ or $(1,0)$. The case with $p^*>q^*$ (or equivalently $q^*(1-q^*)>p^*(1-p^*)$) is symmetric, and so $B$ prefers playing $(0,1)$ or $(1,0)$.

It remains to check the case of $p^*=q^*$, with $p^*+q^*=2p^*>1$. In this case we have $X_{\rm game} = \tfrac {p^*}{1-p^*}$, and the utility function of $B$ is the univariate function $(D_0\rho_0+D_1\rho_1)p+ v\ln\left(\tfrac p {1-p}\right)$. Setting the derivative $u_B'(p^*) = 0$ we have $D_0\rho_0 + D_1\rho_1 = \tfrac {v}{p^*(1-p^*)}$, or $p^* = \tfrac 1 2 \left(1 + \sqrt{1-\tfrac {4v}{D_0\rho_0 + D_1\rho_1}}\right)$. Denoting $Y = D_0\rho_0+D_1\rho_1$,  we now use the assumption that $\rho_0,\rho_1 = \Omega(\rho_0+\rho_1)$ and observe that $\max\{D_0\rho_0,D_1\rho_1\} = (1-c)Y$ for some constant $c>0$. Therefore, $B$ prefers playing this randomized strategy if 
\[u_B(p^*) = \tfrac 12 Y \left(1 + \sqrt{1-\tfrac{4v}Y}\right) - v \ln\left( \frac {1+\sqrt{1-\frac {4v}Y}} {1-\sqrt{1-\frac {4v}Y}} \right) > (1-c)Y\] Since $\lim_{Y\to\infty} \tfrac {u_B(p^*)}{Y} = 1$ then for a large enough value of $Y$, the above inequality holds.
\end{proof}

As an immediate corollary of the proof, consider any alternative definition of a privacy aware agent in which the privacy valuation $u_B^{priv}$ (i) depends only on the strategy $\sigma_B$, is (ii) non-negative, (iii) upper bounded by $v\ln(X_{\rm game})$ for some $v>0$ and (iv) $u_B^{priv}=\infty$ whenever $X_{\rm game} = \infty$. We argue that the utility maximizing strategy of such an agent is also randomized. (Observe that we no longer guarantee that $B$'s optimal strategy $\sigma_B^*$ satisfies $\Pr[\sigma_B^*(0)=0]=\Pr[\sigma_B^*(1)=1]$.)

To see that, observe that whenever $p=1-q$ we have that $X_{\rm game}=0$ so the privacy loss of an agent is $0$. Therefore, playing either $(p,q)=(1,0)$ or $(0,1)$, the agent can guarantee a utility of $\max\{D_0\rho_0,D_1\rho_1\}$. In contrast, should the agent play any $(p,q)$ with $p<1-q$, then her utility is upper bounded by $D_0p+D_1q \leq \max\{D_0\rho_0,D_1\rho_1\}$, because the privacy loss is non-negative. Therefore, the agent prefers playing $(p,q)=(1,0)$ or $(0,1)$ to any $(p,q)$ with $p<1-q$. Secondly, since we assume infinite privacy loss whenever $X_{\rm game}$, then $B$ utility maximizing strategy cannot satisfy that $p=1$ and $q > 0$ (or vice-versa). Lastly, the proof of Theorem~\ref{thm:behavior_privacy_aware} gives a strategy  $(p,q)$ with $p>1-q$ where the lower bound on $B$'s utility is greater than $\max\{D_0\rho_0,D_1\rho_1\}$. It follows that $B$ strictly prefers playing some strategy $(p,q)$ with $p,q\in (0,1)$ over playing $(p,q)=(1,0)$ or $(p,q)=(0,1)$.

\ifx \fullversion\undefined
\paragraph{Two types of $B$ agent as different players.} 
\else
\subsubsection{The two types of $B$ agent as different players.} 
\fi
The above analysis assumed $B$ is an agent playing this coupon game, decides on a strategy before the realization of her type, and sticks to that strategy even after her type is revealed to her. It is possible though to think of the two types of $B$ agents as two different agents ex-post -- after each agent is revealed her own type. As we show, the analysis in this case is slightly different. Observe that in this case we discuss a straight-forward Nash-equilibrium, as both agents know their respective types. In the following, we continue using our notation from  earlier, where $\Pr[\sigma(i) = \hat\type]$ denotes the probability a $B$ agent of type $\type=i$ sends the signal $\hat\type$ according to strategy $\sigma$.

\begin{theorem}
\label{thm:two_players_types_of_B_agents}
Consider the $2$-player game where player $i\in\{0,1\}$ is a type $\type=i$ $B$ agent. Assume $\rho_0=\rho_1=\rho$ and that $\rho$ is sufficiently large. Then there exists some $z^* \in (0,\tfrac v \rho)$ s.t. any NE of the game falls into one of three categories
\begin{itemize}
\item $\Pr[\sigma(0)=0]=\Pr[\sigma(1)=0]=1-z$ for some $z\in [0,z^*]$. (Both agents take the same strategy and send the signal $\hat\type=0$ with high probability $1-z$.)
\item $\Pr[\sigma(0)=0]=\Pr[\sigma(1)=0]=z$ for some $z\in [1-z^*,1]$. (Both agents take the same strategy and send the signal $\hat\type=1$ with high probability $1-z$.)
\item $\Pr[\sigma(0)=0]=\Pr[\sigma(1)=1]=z$ for some $z\in [1-\tfrac v \rho, z^*]$. (Both agents play randomized response and report truthfully $\hat\type=\type$ with the same probability $z$.)
\end{itemize}
\end{theorem}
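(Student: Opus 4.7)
The plan is to parameterize strategies as $p := \Pr[\sigma(0) = 0]$ and $q := \Pr[\sigma(1) = 1]$, reducing the Nash-equilibrium problem to joint first-order analysis. Each player shares the common privacy penalty: $u_0(p, q) = \rho p - v\ln X(p, q)$ and $u_1(p, q) = \rho q - v\ln X(p, q)$, where $X(p, q) = \max\{p/(1-q),\ (1-q)/p,\ q/(1-p),\ (1-p)/q\}$. Signing $p + q - 1$ and $p - q$ splits $[0,1]^2$ into four smooth cells on which $X$ reduces to a single ratio: $X = p/(1-q)$ on $\{p+q>1,\ p \le q\}$, $X = q/(1-p)$ on $\{p+q>1,\ p \ge q\}$, with reciprocal ratios on the $p+q<1$ half, and $X \equiv 1$ along the pooling diagonal $p+q=1$. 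The function $\ln X$ is continuous on $[0,1]^2$, has kinks along $p+q=1$ and $p=q$, and diverges whenever exactly one of $p,q$ lies in $\{0,1\}$; hence any NE must either have both coordinates strictly interior or be one of the two pure pooling profiles $(1,0)$ or $(0,1)$.

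Next I compute player $0$'s best response to fixed $q \in (0,1)$ by differentiating $f_0(p) := u_0(p,q)$ cell by cell. On the $p+q<1$ slab both sub-derivatives $\rho + v/p$ and $\rho + v/(1-p)$ are strictly positive, so $f_0$ is strictly increasing up to the pooling kink at $p = 1-q$. On the $p+q>1$ slab the sub-derivatives $\rho - v/p$ and $\rho - v/(1-p)$ yield a convex piece on $p \le q$ (with interior local min at $p = v/\rho$) and a concave piece on $p \ge q$ (with interior local max at $p = 1 - v/\rho$). Consequently, the candidate maximizers of $f_0$ are (I) the pooling kink $p = 1-q$, a local max iff the one-sided condition $\rho \le v/\min(q, 1-q)$ holds; (II) the RR kink $p = q$, arising as a candidate only for $q > 1/2$ and then a local max iff $q \ge 1 - v/\rho$; and (III) the interior point $p = 1 - v/\rho$, active only when $v/\rho < q < 1 - v/\rho$. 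An analogous trichotomy holds for player $1$'s BR in $q$ by the $(p,q) \leftrightarrow (q,p)$ symmetry of $u_1$.

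I classify NE by intersecting the two BR correspondences. Matching pooling candidates for both players forces $p + q = 1$ together with $q \le v/\rho$ and $p \le v/\rho$ (from the one-sided conditions for each player); since $p + q = 1$, one coordinate is $\le v/\rho$ and the other $\ge 1 - v/\rho$, yielding the two pooling families (Category 1 and its signal-relabeled twin Category 2). Matching RR candidates forces $p = q =: z$ with $z \ge 1 - v/\rho$ (the binding side condition; the alternative $z \ge v/\rho$ is weaker for $\rho > 2v$), yielding Category 3. The simultaneous-interior matching $p = q = 1 - v/\rho$ sits at the left endpoint of Category 3, and mixed-candidate matchings violate first-order conditions in their active cells (for instance, matching player $0$'s RR kink with player $1$'s pooling kink requires $p = q = 1/2$, but the local-max condition $q \ge 1 - v/\rho$ fails for large $\rho$).

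The main obstacle is the global-max verification at each candidate NE, which pins down the upper threshold on $z$ and is where the ``sufficiently large $\rho$'' hypothesis enters. For Category 1, the interior competitor $p = 1 - v/\rho$ is active only when $q > v/\rho$, so for $z = q \le v/\rho$ there is no competitor and the pooling kink is automatically global. For Category 3, the pooling kink $p = 1 - z$ competes with the RR kink $p = z$; the comparison $u_0(z, z) \ge u_0(1-z, z)$ reduces to the transcendental inequality $\rho(2z - 1) \ge v\ln\bigl(z/(1-z)\bigr)$, whose solution set on $[1 - v/\rho, 1)$ is a connected interval $[1 - v/\rho, z^*]$ with $z^* < 1$, and the ``large $\rho$'' hypothesis is needed to ensure this interval is nondegenerate. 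Category 2 follows from Category 1 by signal relabeling, completing the characterization.
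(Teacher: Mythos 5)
Your setup is sound and closely tracks the paper's own argument: your convex piece on $p\le q$ and concave piece on $p\ge q$ are exactly the paper's functions $g$ and $f$, your best-response trichotomy (pooling kink, RR kink, interior point $1-\tfrac v\rho$) is correct, and your Category-3 analysis --- RR equilibria exactly on $[1-\tfrac v\rho, z^*]$ with $z^*$ the root of $\rho(2z-1)=v\ln\left(\tfrac{z}{1-z}\right)$, nondegenerate for large $\rho$ --- is right. The genuine gap is in the global verification of the \emph{pooling} families. At a pooling candidate $(p,q)=(1-z,z)$ with $z\le \tfrac v\rho$, you verify only the player facing the small fixed coordinate (player $0$, with $q=z\le \tfrac v\rho$, for whom indeed no competitor is active). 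But player $1$ faces the fixed coordinate $p=1-z\ge 1-\tfrac v\rho$, and by your own trichotomy the RR kink $q=p=1-z$ is then an \emph{active local maximum} for her (your condition $p\ge 1-\tfrac v\rho$ holds). Her pooling payoff is $\rho z$, while deviating to the RR kink pays $\rho(1-z)-v\ln\left(\tfrac{1-z}{z}\right)$; this deviation is strictly profitable unless $\rho z\ge \rho(1-z)-v\ln\left(\tfrac{1-z}{z}\right)$, which is your Category-3 transcendental inequality evaluated at $1-z$. Concretely, at $z=\tfrac v\rho$ the pooling payoff is $v$ whereas the RR deviation pays $\rho-v-v\ln(\tfrac\rho v-1)\to\infty$, so profiles such as $(1-\tfrac v\rho,\,\tfrac v\rho)$, which your classification admits into Category 1, are not equilibria for large $\rho$. (A smaller slip in the same step: the matched pooling conditions should read $\min(q,1-q)\le \tfrac v\rho$ and $\min(p,1-p)\le \tfrac v\rho$; requiring $p\le\tfrac v\rho$ and $q\le \tfrac v\rho$ simultaneously is impossible under $p+q=1$ when $\rho>2v$, though your following sentence lands on the right conclusion.)

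As a consequence your pooling range $[0,\tfrac v\rho]$ overshoots: the binding constraint is the missing comparison above, and it cuts the pooling-on-$0$ family down to $z\in[0,\,1-z^*]$ with $z^*$ your Category-3 root (mirrored for pooling on signal $1$). This is exactly what produces the theorem's threshold lying in $(0,\tfrac v\rho)$: the $z^*$ of Categories 1--2 in the statement is $1$ minus your $z^*$ (the statement recycles the symbol, and your $z^*\in(1-\tfrac v\rho,1)$ matches only its use in Category 3), and since $h(z)=\rho(2z-1)-v\ln\left(\tfrac z{1-z}\right)$ satisfies $h(1-\tfrac v\rho)>0$ for large $\rho$, one indeed gets $1-z^*<\tfrac v\rho$. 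Your proof as written never establishes that the pooling bound is strictly below $\tfrac v\rho$, nor that it is tied to the RR threshold at all. The fix is one extra step: at every pooling candidate, also verify global optimality for the mostly-lying player, i.e.\ require $\rho z\ge \rho(1-z)-v\ln\left(\tfrac{1-z}{z}\right)$, equivalently $1-z\ge z^*$. The paper obtains this implicitly via the player/signal symmetry: its conclusion that for $z\in[z^*,1]$ the type-$0$ agent does not deviate from $(1-z,z)$, transported to the symmetric player, is precisely the constraint your argument omits.
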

\begin{proof}
We continue using the same notation from Theorem~\ref{thm:behavior_privacy_aware}: $p=\Pr[\sigma(0)=0]$ and $q=\Pr[\sigma(1)=1]$, and so $X_{\rm game}=X_{\rm game}(p,q)$ as denoted in the proof of Theorem~\ref{thm:behavior_privacy_aware}. In particular, when $p+q\geq 1$ it holds that $X_{\rm game}(p,q) = \tfrac p {1-q}$ when $p\leq q$, and $X_{\rm game}(p,q) = \tfrac q {1-p}$ when $p \geq q$.

First of all, observe that the utilities of both agents are symmetric: $u_{B,0}(p,q) = \rho p - v\ln(X_{\rm game}(p,q))$ and $u_{B,1}(p,q) = \rho q - v\ln(X_{\rm game}(p,q))$. Secondly, observe that if one agent plays determinisitcally $\Pr[\sigma(\type)=\hat\type]=1$ then unless the other type deterministically sends the same signal, then $X_{\rm game}(p,q)=\infty$ causing both agents to have utility of $-\infty$. It is therefore clear that the strategies $(p,q)=(1,0)$ and $(p,q)=(0,1)$ are both NEs.

To find the remaining NEs of the game, we fix a certain strategy for the $\type=1$ agent, denoted $z = \Pr[\sigma(1)=1]$, and see what is the strategy $x=\Pr[\sigma(0)=0]$ that type $\type=0$ agent prefers deviating to. Since both agents are symmetric, then our analysis also translates to an analysis of type $\type=1$. Before continuing with our analysis, we point out to the following two functions.
\begin{itemize}
\item Fix $z$ and denote $f(x)=\rho x - v\ln(\tfrac {z} {1-x})$. Since $f'(x) = \rho - \tfrac v {1-x}$ is decreasing on the interval $[0,1)$, we have that $f$ is maximized at $x = 1-\tfrac v \rho$. In particular, $f$ is strictly increasing on the interval $[0,1-\tfrac v \rho]$ and strictly decreasing on the interval $[1-\tfrac v \rho, 1)$.
\item Fix $z$ and denote $g(x) = \rho x -v\ln(\tfrac x {1-z})$. Since $g'(x) = \rho - \tfrac v x$ and it is an increasing function on the interval $[0,1]$ then $g(x)$ is strictly decreasing on the $(0, \tfrac v \rho)$ interval and strictly increasing on the $[\tfrac v \rho,1]$ interval. 
\end{itemize}

We return on our NE analysis.
First, for any $z$, it is evident that $\type=0$ agent has incentive to deviate if $x < 1-z$. (In response to $z$, the $\type=0$ agent increases her utility by deviating to playing $\Pr[\sigma(0)=0]=1-z$ since $X_{\rm game}(1-z,z) = 0$.)

Assume $z < 1/2$ for now. Therefore $x \in [1-z,1]$, otherwise the $\type=0$ agent has incentive to deviate. Since $x\geq 1-z\geq1/2 > z$ then $X_{\rm game} = \tfrac {z} {1-x}$ so type $\type=0$ agent's utility is $f(x)$. Since $f$ is strictly decreasing on the interval  $[1-\tfrac v \rho,1)$ we have that if $1-z \geq 1-\tfrac v \rho$ then the type $\type=0$ agent has no incentive to deviate when $x=1-z$. I.e., type $\type=0$ agent does not deviate from any strategy $(p,q) =(1-z,z)$ with $z \leq \tfrac v \rho$. In addition, type $\type=0$ doesn't deviate from $(p,q) = (1-\frac v \rho, z)$ for $\tfrac v\rho <z <1/2$.

Assume now the case $1/2 \leq z$. Again, we only need to consider $x\in [1-z,1]$, so either $x \in [1-z,z)$ or $x\in [z,1]$. In the former case, the utility of the type $\type=0$ agent is $g(x)$, and in the latter her utility is $f(x)$. Therefore:
\begin{itemize}
\item when $z < 1-\tfrac v \rho$ she considers only two possible strategies: $x=1-\tfrac v \rho$ (which maximizes $f(x)$ on the interval $[z,1]$), or $x=z$ (which maximizes $g(x)$ on the interval $[1-z,z]$). As $g(z) = f(z) < f(1-\tfrac v \rho)$ we deduce that in this case, type $\type=0$ agent does not deviate only from the strategy $(1-\tfrac v \rho, z)$.
\item when $z \geq 1-\tfrac v \rho$ she considers only two possible strategies: $x=z$ (which maximizes $f(x)$ on the interval $[z,1]$) , or $x=1-z$ (which might maximize $g(x)$ on the interval $[1-z,z]$). As $g(1-z) = v$ and $f(z) = \rho z - v\ln(\tfrac {z}{1-z})$ we have that $f(z)-g(1-z) \leq 0$ for any $z>z^*$ for some $z^*$ (the solution of $f(z)-g(z)=0$). Observe that $1-\tfrac v \rho < z^*$. We deduce that for any $z\in [1-\tfrac v \rho, z^*]$ the type $\type=0$ agent doesn't deviate from the strategy $(p,q)=(z,z)$; and for $z\in [z^*,1]$ type $\type=0$ agent does not deviate from $(p,q) = (1-z,z)$.
\end{itemize}

Recall that the type $\type=1$ agent is symmetric to type $\type=0$ agent, with the same utility function. This implies that any $(1-\tfrac v \rho,q)$ cannot be a NE since type $\type=1$ agent prefers to deviate, unless $q=1-\tfrac v \rho$. Therefore, we essentially characterized the NEs of the game, as specified in the theorem statement.
\end{proof}

}
\newcommand{\privacyAwareAgentInsteadAPX}{
\ifx \cameraready \undefined
Due to space limitations, the proof is deferred to Appendix~\ref{apx_sec:privacy_aware_agent}.
\else
The proof is deferred to the full version of the paper.
\fi
 The proof of Theorem~\ref{thm:behavior_privacy_aware} also applies to some alternative models of a privacy-aware agent. 
In addition to Theorem~\ref{thm:behavior_privacy_aware}, we also analyze, for completeness, an alternative scenario where type $0$ and type $1$ are two competing agents. Observe that this is no longer a Bayesian game with a single player but rather a standard complete-information game with two players. We show that this game also has NEs where both types play randomized strategies that follow Randomized Response (i.e.,$ \Pr[\sigma_B^*(0) = 0] = \Pr[\sigma_B^*(1)=1]) > \tfrac 1 2$).
}
\ifx \fullversion \undefined
\privacyAwareAgentInsteadAPX
\else
\PAA
\fi

\ifx \fullversion \undefined \vspace{-0.5cm} \fi
\ifx \fullversion \undefined \else \newpage \fi
\section{The Coupon Game with Scoring Rules Payments}
\label{sec:scoring_rule}
\ifx \fullversion \undefined \vspace{-0.4cm} \fi
In this section, we model the payments between $A$ and $B$ using a proper scoring rule (see below). This model is a good ``first-attempt'' model for the following two reasons. (i) Proper scoring rules assign profit to $A$ based on the accuracy of her belief, so $A$ has incentives to improve her prior belief on $B$'s type. (ii) As we show, in this model it is possible to quantify the $B$'s trade-off between an $\epsilon$-change in the belief and the cost that $B$ pays $A$. In that aspect, this model gives a clear quantifiable trade-off that explains what each additional unit of $\epsilon$-differential privacy buys $B$. Interestingly, proper scoring rules were recently applied in the context of differential privacy~\cite{GhoshLRS14} (yet in a very different capacity). 

\newcommand{\properScoringRulesWithAPX}{
Proper scoring rules (see surveys~\cite{Winkler96,Gneiting:07}) were devised as a method to elicit experts to report their true prediction about some random variable. 
For a $\bits$-valued random variable $X$, an expert is asked to report a prediction $x\in[0,1]$ about the probability that $X=1$.  
We pay her $f_1(x)$ if indeed $X=1$ and $f_0(x)$ otherwise. A \emph{proper scoring rule} is a pair of functions $(f_0, f_1)$ such that $\arg\max_x \E_{\type\leftarrow X}[f_\type(x)] = \Pr[X=1]$. Hence a risk-neutral agent's best strategy is to report $x=\Pr[X=1]$. Most frequently used proper scoring rules are \emph{symmetric} (or label-invariant) rules, where $\forall x, f_1(x) = f_0(1-x)$ (also referred to as neutral scoring rules in~\cite{ChenDPV14}). With symmetric proper scoring rules, the payment to an expert reporting $x$ as the probability of a random variable $X$ to be $1$, is identical to the payment of an expert reporting $(1-x)$ as the probability of the random variable $(1-X)$ to be $1$. Additional background regarding proper scoring rules is deferred to 
\ifx \cameraready \undefined
Appendix~\ref{apx_sec:proper_scoring_rules}.
\else
the full version of this paper.
\fi
}
\properScoringRulesWithAPX

\newcommand{\backgroundProperScoringRules}{
\subsection{Background: Proper Scoring Rules}
\label{subsec:proper_scoring_rules}

Proper scoring rules (see surveys~\cite{Winkler96,Gneiting:07}) were devised as a method to elicit experts to report their true prediction as to the probability of an event happening. That is, given a Bernoulli random variable $X$, we ask an expert to report her estimation of $\mu=\Pr[X=1]$. Given that the expert reports $x$ we pay her $f_1(x)$ if indeed $X=1$ and pay her $f_0(x)$ otherwise. A \emph{proper scoring rule} is a pair of functions $(f_0, f_1)$ such that $\arg\max_x \E_{\type\leftarrow X}[f_\type(x)] = \mu$ where the maximum is obtained for a unique report. That is, it is in the expert's best interest to report the true prior. It was shown~\cite{Savage:71,Gneiting:07} that a pair of twice-differentiable functions $(f_0,f_1)$ give a proper scoring rule iff there exists a convex function $g$ (i.e. $g'' > 0$ on the $[0,1]$ interval) s.t. $f_0(x) =  g(x) - xg'(x)$, $f_1(x) = g(x) + (1-x) g'(x)$. Using the derivatives of both functions ($f_0'(x) = -xg''(x)$ and $f_1'(x) = (1-x)g''(x)$), we deduce that $f_0$ is a strictly decreasing function and $f_1$ is a strictly increasing function on the $[0,1]$ interval.
And so, given that $X=1$ w.p. $\mu$, we have that the expected payment for an expert predicting $x$ is
\begin{equation}
F_\mu(x) = (1-\mu)f_0(x) + \mu f_1(x) = g(x) - (x-\mu)g'(x) \label{eq:F_mu}
\end{equation}
which is maximized at $x=\mu$, where $F_\mu(\mu) = g(\mu)$. 

Most commonly discussed proper scoring rules are \emph{symmetric} (or label-invariant) proper scoring rules, that are oblivious to that outcomes of $X$ (also referred to as neutral scoring rules in~\cite{ChenDPV14}). That is, symmetric scoring rules have the property that for any two Bernoulli random variables $X$ and $X'$ s.t. $\Pr[X=1]=\Pr[X'=0]$ the expected payment for an expert predicting $x$ for $X$ is identical to the payment for an expert predicting $1-x$ for $X'$. Such symmetric scoring rules are derived from a convex function $g$ that is symmetric around $\tfrac 1 2$. I.e.: $g(x)=g(1-x)$, and so $g'(x) = -g'(1-x)$ and $g''(x)=g''(1-x)$.

Concrete examples of proper scoring rules, such as the quadratic scoring rule, the spherical scoring rule and the logarithmic scoring rules, are discussed in Section~\ref{subsec:specific_scoring_rules}.
}
\ifx \fullversion \undefined \vspace{-0.5cm} \fi
\subsection{The Game with Scoring Rule Payments}
\label{subsec:scoring_rule_game}
\ifx \fullversion \undefined \vspace{-0.3cm} \fi
We now describe the game, and analyze its BNE. In this game $A$ interacts with a random $B$ from a population that has $D_0$ fraction of type $0$ agents and $D_1$ fraction of type $1$ agents. Wlog we assume throughout Sections~\ref{sec:scoring_rule}, \ref{sec:matching_pennies} and \ref{sec:opt-out-possible} that $D_0 \geq D_1$. $A$ aims to discover $B$'s secret type. She has utility that is directly linked to her posterior belief on $B$'s type
and $A$ reports her belief that $B$ is of type $1$. $A$'s payments are given by a proper scoring rule, composed of two functions $(f_0,f_1)$, so that after reporting a belief of $x$, a $B$ agent of type $\type$ pays $f_\type(x)$ to $A$.

\paragraph{A benchmark game.} First consider the following straight forward (and more boring) game where $B$ does nothing, $A$ merely reports $x$ -- her belief that $B$ is of type $1$. In this game $A$ gets paid according to a proper scoring rule --- i.e., $A$ gets a payment of $F_{D_1}(x) \stackrel{\rm def}{=} D_0 f_0(x) + D_1 f_1(x)$ in expectation. Since $(f_0, f_1)$ is a proper scoring rule, $A$ maximizes her expected payment by reporting $x=D_1$. So, in this game $A$ gets paid $g(D_1) \stackrel{\rm def} {=} f_{D_1}(D_1)$ in expectation, whereas $B$'s expected cost is $g(D_1)$. (Alternatively, a $B$ agent of type $0$ pays $f_0(D_1)$ and a $B$ agent of type $1$ pays $f_1(D_1)$.)

\paragraph{The full game.} We now turn our attention to a more involved game. Here $A$, aiming to have a more accurate posterior belief on $B$'s type, offers $B$ a coupon. Agents of type $\type$ prefer a coupon of type $\type$. And so, $B$ chooses what type to report $A$, who then gives $B$ the coupon and afterwards makes a prediction about $B$'s probability of being of type $1$. The formal stages of the game are as follows.
\begin{enumerate}
\addtocounter{enumi}{-1}
\item $B$'s type, $\type$, is drawn randomly with $\Pr[\type=0]=D_0$ and $\Pr[\type=1] =D_1$.
\item $B$ reports to $A$ a type $\hat \type=\sigma_B(t)$ and receives utility of $\rho_\type$ if indeed $\hat \type = \type$. We assume throughout this section that $\rho_0=\rho_1=\rho$. 
\item $A$ reports a prediction $x$, representing $\Pr[\type=1 ~|~ \sigma_B(\type)=\hat\type]$, and receives a payment from $B$ of $f_\type(x)$.
\end{enumerate}

\newcommand{\BNEScoringRules}{
Consider the coupon game with payments in the form of a symmetric proper scoring rule and with the following added assumption about the value of the coupon: $f_1(D_0)-f_1(D_1) < \rho < f_1(1)-f_1(0)=f_0(0)-f_0(1)$.
The unique BNE strategy of $B$ in this game, denoted $\sigma_B^*$, satisfies that $\Pr[\type = 0~|~\sigma_B^*(\type)=0] = \Pr[\type = 1~|~\sigma_B^*(\type)=1]$. 
}
\begin{theorem}
\label{thm:BNE_scoring_rules}
\BNEScoringRules
\end{theorem}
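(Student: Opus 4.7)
The plan is to (a) use propriety to pin down $A$'s best response as her Bayesian posterior, (b) show that each pure-strategy, pooling, and semi-mixed profile is ruled out by the hypotheses on $\rho$, and (c) in the remaining fully mixed regime combine the two indifference conditions with the symmetry of the scoring rule to force the claimed symmetric-posterior identity. Throughout I write $p=\Pr[\sigma_B(0)=0]$, $q=\Pr[\sigma_B(1)=1]$, and $x_{\hat\type}=\Pr[\type=1\,|\,\hat\type]$, and I exploit the algebraic identities $f_1(x)=f_0(1-x)$ and $f_1(x)-f_0(x)=g'(x)$, where the generator $g$ is strictly convex and symmetric about $1/2$ (so in particular $g'(D_0)=-g'(D_1)$).

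\textbf{Eliminating pure, pooling, and semi-mixed profiles.} Propriety makes $A$'s best response upon any on-path signal equal to the Bayes posterior $x_{\hat\type}$, which is explicit in $(p,q)$. The truthful separating profile $(p,q)=(1,1)$ yields $(x_0,x_1)=(0,1)$ and is destroyed by type-$1$'s deviation whenever $\rho<f_1(1)-f_1(0)$. For pooling on $0$ (profile $(1,0)$, on-path $x_0=D_1$, off-path $x_1$), the two no-deviation constraints $f_1(x_1)\geq f_1(D_1)+\rho$ (for type $1$) and $f_0(x_1)\geq f_0(D_1)-\rho$ (for type $0$) admit a common $x_1$ iff $\rho\leq -g'(D_1)=f_1(D_0)-f_1(D_1)$, which the hypothesis forbids. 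Pooling on $1$ and the four semi-mixed profiles (exactly one of $p,q$ in $(0,1)$, the other in $\{0,1\}$) either give an immediate $\rho\leq 0$ contradiction or reduce to $H(x)\leq H(0)$ for $H(x)\stackrel{\rm def}{=}f_0(x)+f_0(1-x)$; since $H'(x)=(1-2x)g''(x)$ makes $H$ strictly quasiconcave with its minimum on $[0,1]$ attained only at the endpoints, this forces the relevant interior posterior to an endpoint, contradicting the interiority of the other strategy coordinate.

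\textbf{The fully mixed regime and uniqueness.} With every other profile eliminated, both types must mix and be indifferent, so $\rho=f_0(x_0)-f_0(x_1)=f_1(x_1)-f_1(x_0)$. Equating and invoking $f_1(x)=f_0(1-x)$ yields $H(x_0)=H(x_1)$. Since $H$ is strictly quasiconcave and symmetric about $1/2$, either $x_0=x_1$ (which forces $\rho=0$, contradicting $\rho>f_1(D_0)-f_1(D_1)\geq 0$) or $x_0+x_1=1$, which is exactly the identity in the theorem. For uniqueness, substituting $x_1=1-x_0$ into type-$0$'s indifference condition and using $g(x)=g(1-x)$ collapses it to $\rho=-g'(x_0)$; since $g'$ is strictly increasing with $g'(1/2)=0<\rho$ and $g'(1)=f_1(1)-f_1(0)>\rho$, this pins down a unique $x_0\in(0,1/2)$, after which the two Bayes equations for $(x_0,x_1)$ uniquely determine $(p,q)$.

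\textbf{Main obstacle.} The delicate step is ruling out pooling on $0$, because BNE permits \emph{any} off-path belief for $A$ and I must show that no $x_1\in[0,1]$ simultaneously deters both types from deviating. The clean resolution hinges on recognizing that the hypothesis's threshold $f_1(D_0)-f_1(D_1)$ equals $-g'(D_1)$, which by direct computation is exactly the critical value of $\rho$ at which the two endpoints $f_1^{-1}(f_1(D_1)+\rho)$ and $f_0^{-1}(f_0(D_1)-\rho)$ coincide (both at $D_0$); this identification is what makes the particular algebraic form of the hypothesis the minimal condition needed.
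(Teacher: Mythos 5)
Your proof is correct and follows essentially the same route as the paper's: you pin $A$'s best response to the Bayes posterior, eliminate the pure, pooling, and semi-mixed profiles using the symmetric sum $H(x)=f_0(x)+f_0(1-x)=f_0(x)+f_1(x)$ (which is exactly the paper's $2F_{1/2}$, with the same unimodality-about-$\tfrac12$ observations), and in the fully mixed regime combine the two indifference conditions with $H(x_0)=H(x_1)$ to force $x_0+x_1=1$ and then $\rho=-g'(x_0)$. Your iff-feasibility analysis for pooling on $0$, identifying the theorem's lower threshold as $-g'(D_1)=f_1(D_0)-f_1(D_1)$ where the two deterrence constraints' endpoints coincide at $D_0$, is a mild sharpening of the paper's necessary-condition argument (and the unmentioned profile $(p,q)=(0,0)$ falls immediately to your $\rho\leq 0$ category), but it is not a different approach.
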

Note that a Randomized Response strategy $\sigma_B$ for $B$ would instead have $\Pr[\sigma_B(0)=0]=\Pr[\sigma_B(1)=1]$. This condition is different from the condition in Theorem~\ref{thm:BNE_scoring_rules} when $\Pr[t=0]\neq\Pr[t=1]$ (i.e., $D_0\neq D_1$). 
\newcommand{\proofTheoremProperScoringRule}{
\begin{proof}
We first analyze both agents' utilities and strategies. The utility of $A$ is solely based on the payments of the proper scoring rule: $E_{\type\leftarrow \{D_0,D_1\}} [ f_\type(x) ]$. $A$ has to decide on two potential reports: $x_0$ and $x_1$, where for $b\in\bits$, $x_b$ represents $A$'s belief about $\Pr[\type=1 ~|~ \hat \type = b]$. Therefore, a strategy $\sigma_A$ of $A$ maps a signal $\hat\type$ into a report. The utility of $B$ has two components~--- $B$ gains a certain amount of utility $\rho_\type$ from reporting $A$ the true type, but then has to pay $A$ her scoring rule payments. Therefore a strategy $\sigma_B$ maps each of $B$'s types to a signal. Given a strategy $\sigma_B$ we use the following notation:
\begin{eqnarray*}
p= \Pr[\sigma_B(0)=0], && q = \Pr[\sigma_B(1)=1]
\end{eqnarray*}
This way, $B$'s utility function takes the form
\begin{eqnarray*}
&u_B & = D_0 u_{B,0} + D_1 u_{B,1} \cr
\textrm{where} & u_{B,0} & = p\left( \rho - f_0(x_{0})\right) + (1-p)\left(-f_0(x_{1})\right) \cr
& u_{B,1} & = q\left( \rho - f_1(x_{1})\right) + (1-q)\left(-f_1(x_{0})\right) \cr
\end{eqnarray*}

When $A$ sees the signal $\hat\type$ the probability over $B$'s type is given by Bayes Rule:
\begin{eqnarray}
y_0 = y_0(p,q) \stackrel{\rm def}= \Pr[ \type = 1 ~|~ \hat\type = 0] = \frac {D_1(1-q)}{D_0p + D_1(1-q)} = \frac 1 {1 + \frac {D_0p}{D_1(1-q)}}
\label{eq:y_0} \\
y_1 = y_1(p,q) \stackrel{\rm def}= \Pr[ \type = 1 ~|~ \hat\type = 1] = \frac {D_1q}{D_0(1-p) + D_1q} = \frac 1 {1 + \frac {D_0(1-p)}{D_1q}} 
\label{eq:y_1}\end{eqnarray} 
and since $A$'s payments come from a proper scoring rule it follows that $A$ reports $x_0 = \sigma_A(0)=y_0$  and $x_1=\sigma_A(1)=y_1$. In other words, given that $B$'s BNE strategy is $(p^*,q^*)$, then $A$ plays best-response of $x_0^* = y_0(p^*,q^*), x_1^* = y_1(p^*,q^*)$.

We now turn to analyze $B$'s utility. Denote the strategy that $A$ plays as $x_0$ and $x_1$. Then agent $B$ decides on $p$ and $q$ that maximize the utility function
\[ u_B = D_0 \cdot \left( p(\rho - f_0(x_0)) - (1-p)f_0(x_1) \right) + D_1\cdot\left( q(\rho - f_1(x_1))-(1-q)f_1(x_0) \right) \]
It is simple to characterize $B$'s best response to $A$'s strategy of $(x_0,x_1)$. 
\begin{flalign}
& \textrm{If }\rho > f_0(x_0)-f_0(x_1) \textrm{ then } p =1\cr
& \textrm{If }\rho < f_0(x_0)-f_0(x_1)  \textrm{ then } p =0 \cr
& \textrm{If }\rho = f_0(x_0)-f_0(x_1)  \textrm{ then $B$ may play any } p\in[0,1] &\cr 
& \textrm{If }\rho > f_1(x_1)-f_1(x_0)  \textrm{ then } q =1  \cr
& \textrm{If }\rho < f_1(x_1)-f_1(x_0) \textrm{ then } q =0 \cr
&\textrm{If }\rho = f_1(x_1)-f_1(x_0)  \textrm{ then $B$ may play any } q \in [0,1] \label{eq:Bs_strategies}
\end{flalign}

We now wish to characterize the game's BNEs. First, we claim that in a BNE, with $B$ playing $\sigma_B^* = (p^*,q^*)$, it cannot be that $p^*<1-q^*$. This follows from the fact that $y_0(p,q) > y_1(p,q) \Leftrightarrow p < 1-q$. It means that $A$'s best response to such $(p^*,q^*)$ is to answer some $(x_0,x_1)$ s.t. $x_0 > x_1$. But since $f_0$ is a decreasing function, $f_1$ is an increasing function and $\rho > 0$, then $B$'s best response to such $(x_0,x_1)$ is to deviate to $(1,1)$. Similarly, should $(p^*,q^*)$ be such that $p^*=1-q^*$ \emph{and both} $p^*,q^*\in(0,1)$, then $A$'s best response $(x_0,x_1)$ is $(\tfrac 1 2,\tfrac 1 2)$, which implies again that $B$ prefers to deviate to $(1,1)$. It follows that, with the exception of $(1,0)$ and $(0,1)$, any BNE strategy of $B$ satisfies $p^*>1-q^*$, and so any BNE strategy of $A$ satisfies $x_0 < x_1$.

Before continuing with the proof we would like to make two observations, which we will repeatedly use. Let $X$ be a uniform Bernoulli random variable. We examine the expected payment to an expert reporting a belief of $z$ as to the probability of the event $X=1$, which we denote as $F_{1/2}(z)= \tfrac 1 2 (f_0(z)+f_1(z))$. The function $F_{1/2}$ is a concave function with a unique maximum at $z=\tfrac 1 2$, and it is strictly increasing on the $[0,\tfrac 1 2]$ interval and strictly decreasing on $[\tfrac 1 2,1]$ interval. Therefore, for any $a$ there exists at most two distinct preimages $z_1 \leq \tfrac 1 2 \leq z_2$ satisfying $F_{1/2}(z_1)=F_{1/2}(z_2) = a$. Recall that we assume $(f_0,f_1)$ is a symmetric proper scoring rule (so $f_1(z)=f_0(1-z)$ for any $z\in[0,1]$). So our first observation is: for any $z_1,z_2$ satisfying $ F_{1/2}(z_1)=F_{1/2}(z_2)$ and $z_2 > z_1$, we have that $z_2 = 1-z_1$ with $z_1\in[0,1/2)$ and $z_2 \in (1/2,1]$. Using again the fact that $(f_0,f_1)$ is a symmetry proper scoring rule and the fact that $F_{1/2}$ is maximized at $z=\tfrac 1 2$, we make our second observation: for any $z,z'$ satisfying $F_{1/2}(z)\geq F_{1/2}(z')$ it must hold that $|z-\tfrac 1 2| \leq |z'-\tfrac 1 2|$, which implies that $z \in [z',1-z']$ if $z' \leq 1/2$.

We now return to the proof of the theorem using case analysis as to the potential BNE strategies of $B$. We will rely also on our assumption that $D_0 \geq D_1$.
\begin{itemize}
\item $(p^*,q^*)=(1,1)$, i.e. $B$ always plays $\hat\type=\type$. This means that $A$ sets $x_0=0$ and $x_1=1$. (I.e., $A$ always predicts $\type = b$ given the signal $\hat\type = b$.) 
\\$(\ast)$ We deduce that if $\rho \geq f_0(0)-f_0(1)$ and $\rho \geq f_1(1)-f_1(0)$, then the game has a BNE of \[(x_0^*,x_1^*) = (0,1), ~~~ (p^*,q^*)=(1,1)\] We comment that since $(f_0,f_1)$ is a symmetric proper scoring rule, then we have that $f_0(0)-f_0(1) = f_1(1)-f_1(0)$.

\item $(p^*,q^*) = (1,0)$, i.e. $B$ only sends the $\hat\type=0$ signal. So when $A$ sees the $\hat\type=0$ signal she sets $x_0=D_1$ just as in the benchmark yet. But $A$ is indifferent as to the choice of $x_1$ since the $\hat\type=1$ signal is never sent. In order for this to be a BNE it must holds that $f_1(x_1)-f_1(D_1)\geq \rho \geq f_0(D_1)-f_0(x_1)$ so that both types of $B$ agent would keep sending the $\hat\type=0$ signal. So $x_1$ satisfies that $F_{1/2}(x_1)=\tfrac 1 2\left(f_0(x_1)+f_1(x_1)\right) \geq F_{1/2}(D_1) = \tfrac 1 2 \left(  f_0(D_1) + f_1(D_1) \right)$. Based on our second observation, we have that $x_1\in[D_1, D_0]$.\\
$(\ast)$ We deduce that if the parameters of the game are set such that there exists $v\in [D_1,D_0]$ satisfying both $f_0(v)\geq f_0(D_1)- \rho$ and $f_1(v) \geq f_1(D_1) + \rho$ then the game has a BNE of \[(x_0^*,x_1^*) = (D_1,v), ~~~ (p^*,q^*)=(1,0)\] 
As $f_1$ is an increasing function, it must hold that $\rho \leq f_1(D_0)-f_1(D_1)$. In other words, when $\rho > f_1(D_0)-f_1(D_1)$ then this cannot be a BNE. 

\item $(p^*,q^*)=(0,1)$. This means that $B$ only sends the $\hat\type=1$ signal. So now $A$ sets $x_1=D_1$ but $A$ is indifferent regarding  the value of $x_0$. In order for $B$ not to deviate from$(0,1)$ then $x_0$ should satisfy both $\rho \leq f_0(x_0)-f_0(D_1)$ and $\rho \geq f_1(D_1)-f_1(x_0)$. This implies that $F_{1/2}(x_0) \geq F_{1/2}(D_1)$ and our second observation gives that $x_0 \in [D_1,D_0]$. But observe that $f_0(x_0) \geq \rho-f_0(D_1) > f_0(D_1)$. This contradicts the fact that $f_0$ is a strictly decreasing function.

\item $p^*=1$ while $q^* \in (0,1)$. This means $A$ sets $x_1 = 1$ (because only type $1$ agents can send $\hat\type=1$), while setting $x_0=y_0(p^*,q^*)>0$. To keep $B$ from deviating then $x_0$ should satisfy that $\rho \geq f_0(x_0) - f_0(1)$ and $\rho = f_1(1)-f_1(x_0)$. Therefore $F_{1/2}(1) \geq F_{1/2}(x_0)$, so our observation yields the contradiction $1\in [x_0,1-x_0]$.

\item $q^*=1$ while $p^*\in (0,1)$. This case is symmetric to the previous case, and we get a similar contradiction using $F_{1/2}(0) \geq F_{1/2}(x_1)$.

\item $p^*,q^* \in (0,1)$ with $p^* > 1-q^*$. We know that $A$'s best response is setting $x_0^*=y_0(p^*,q^*)$ and $x_1^*=y_1(p^*,q^*)$ and we have already shown that $y_0 < y_1$. In order for $B$ to play best response against $(y_0,y_1)$ we must have that $\rho = f_0(y_0) - f_0(y_1) = f_1(y_1)-f_1(y_0)$ so $F_{1/2}(y_0)=F_{1/2}(y_1)$. Based on our first observation from before we have that $y_1 = 1-y_0$. In other words, $B$ picks $p^*$ and $q^*$ s.t. the signals $\hat\type =0$ and $\hat\type=1$ are symmetric:
\begin{eqnarray*}
& \Pr[\type = 1 ~|~ \hat\type = 1] & = y_1 = 1- y_0\cr
&& = 1 - \Pr [\type=1 ~|~ \hat\type = 0] = \Pr[\type = 0 ~|~ \hat\type=0]
\end{eqnarray*}
so regardless of the value of $b$, the expression $\Pr[\type = \hat\type ~|~ \hat \type = b]$ is the same.\\
Observe that we have $\rho = f_0(y_0)-f_0(y_1) = f_0(y_0) - f_0(1-y_0) =-g'(y_0)$ or $\rho=g'(y_1)$. (Recall, $(f_0,f_1)$ are derived using a convex function $g$ as detailed in Section~\ref{subsec:proper_scoring_rules}.) In other words, $B$ sets $(p^*,q^*)$ by  first finding $y_1\in(\tfrac 1 2,1]$ s.t. $g'(y_1)=\rho$, then finding $(p^*,q^*)$ that satisfy Equation~\eqref{eq:p_q_and_ratios} and yield $y_1$. Formally, $B$ finds $(p^*,q^*)$ that satisfy 
\begin{equation}
\rho = g'( \frac{D_1q^*} {D_0(1-p^*)+D_1q^*}) = -g'( \frac{D_1(1-q^*)} {D_0p^*+D_1(1-q^*)})
\label{eq:derivative_by_p}
\end{equation}
Recall that $g$ is convex and $g''>0$ on the $[0,1]$ interval. This implies that as $\rho$ increases, the point $y_1(p^*,q^*)$ gets further away from $\tfrac 1 2$ and closer to $1$.
\end{itemize}
\end{proof}

Recall that in order for $B$ to play according to Randomized Response, $B$ should set $p^*=q^*$. Yet, in this game, a rational agent $B$ plays s.t. $A$'s posterior on $B$'s type is symmetric. Indeed, $1-y_0 = y_1$~implies
\begin{equation}
\frac {D_0p^*}{D_0p^*+D_1(1-q^*)} = \frac{D_1q^*} {D_0(1-p^*)+D_1q^*} ~~~\Rightarrow~~~ D_0^2p^*(1-p^*) = D_1^2q^*(1-q^*) \label{eq:p_q_and_ratios}
\end{equation}
and so, unless $D_0=D_1$, we have that $p^*\neq q^*$.

Lastly, we comment about $A$'s payment. Using the notation of Equation~\eqref{eq:F_mu}, when $\hat\type=0$ then $A$ gets an expected payment of $F_{y_0}(y_0) = g(y_0)$, and when $\hat\type=1$ then $A$ gets $F_{y_1}(y_1) = g(y_1)$. But as $y_0=1-y_1$ and the scoring rule is symmetric, we have that $A$ gets the same payment regardless of the signal, so $A$'s payment is $g(y_1)$. Recall that $y_1$ is the point where $\rho=g'(y_1)$.

So, is this game worth while for $A$? Imagine that $A$ could choose between either this coupon game, or the ``benchmark game'' in which $A$ guesses $B$'s type without viewing any signal from $A$. Recall, in the benchmark game, $A$ gets an expected profit of $g(D_1) = g(D_0)$. Recall that $g$ is a convex function that is minimized at $x=\tfrac 1 2$. Therefore, $g(y_1) > g(D_0)$ if $\tfrac 1 2 \leq D_0 < y_1$ which also implies $g'(D_0) < g'(y_1) = \rho$. In other words, $A$ gains more money than in the benchmark game only if $A$ offers a coupon of high-value. 

\paragraph{The case with $\rho_0 \neq \rho_1$.} We briefly discuss the case where $\rho_0$ and $\rho_1$ are not equal. First of all, observe that now there could be situations in which the BNE is of the form $(1,q^*)$ with a non-integral $q^*$, or the symmetric $(p^*,1)$. This is because the previous contradiction no longer holds. More interestingly the BNE we get: $(y_0,y_1)$ and  $(p^*,q^*)$ still satifies Equations~\eqref{eq:y_0} and~\eqref{eq:y_1}, and also
\begin{eqnarray*}
\rho_0 = f_0(y_0)-f_0(y_1) &,& \rho_1 = f_1(y_1)-f_1(y_0)
\end{eqnarray*}
which, using $\rho_0\rho_1$ can be manipulated into
\[ \frac {\rho_1}{\rho_0 + \rho_1} f_0(y_0) + \frac{\rho_0}{\rho_0 + \rho_1}f_1(y_0) =  \frac{\rho_1}{\rho_0 + \rho_1} f_0(y_1) + \frac {\rho_0}{\rho_0 + \rho_1} f_1(y_1)  \]
In otherwords,  setting $\mu= \tfrac {\rho_0}{\rho_0+\rho_1}$, we have $F_\mu(y_0) = F_\mu(y_1)$. Alternatively, it is possible to subtract the two equalities and deduce:
\[ \tfrac 1 2(\rho_0-\rho_1)= \tfrac 1 2 (f_0(y_0) + f_1(y_0)) - \tfrac1 2 (f_0(y_0) + f_1(y_1))  = F_{1/2}(y_0)-F_{1/2}(y_1)\]
These two conditions (along with $y_0 < y_1$) dictate the value of $y_0,y_1$, and thus the values of $(p^*,q^*)$. Sadly, it is no longer the case that $y_1=1-y_0$.
}
\newcommand{\insteadOfProofForScoringRulesAPX}{
\ifx \cameraready \undefined
The proof of Theorem~\ref{thm:BNE_scoring_rules} is deferred to Appendix~\ref{apx_sec:proper_scoring_rules}, where we also compare $A$'s profit in the benchmark game to her profit from her BNE strategy in the full game.
\else
The proof of Theorem~\ref{thm:BNE_scoring_rules} is in the full version of this paper, where we also compare $A$'s profit in the benchmark game to her profit from her BNE strategy in the full game.
\fi
}
\ifx \fullversion \undefined
\insteadOfProofForScoringRulesAPX
\else
\proofTheoremProperScoringRule
\fi

In Appendix~\ref{subsec:specific_scoring_rules} we discusse the implications of using specific scoring rules.


\newcommand{\specificScoringRules}{
\subsection{Strategies Under Specific Scoring Rules}
\label{subsec:specific_scoring_rules}
We now plug-in different types of proper and symmetric scoring rules, and find what $p^*$ and $q^*$ are in each case. We analyze the game for a value of $\rho$ s.t. the BNE is obtained where neither $p^*$ nor $q^*$ are integral. We also characterize what is the $\epsilon$ in $A$'s posterior probability --- the value of $\max_b\left\{\ln\left(\frac {\Pr[\hat \type=\type ~|~ \type=b] } {\Pr[\hat\type=1-\type ~|~ \type = b]} \right)\right\}$.

There exists 3 canonical rules often used in literature: Quadratic, Spherical and Logarithmic.

\paragraph{Quadratic Scoring Rule.} The quadratic scoring rule is defined by the functions $(f_0(x),f_1(x)) = (2-2x^2,4x-2x^2)$. The quadratic scoring rule is generated by the convex function $g(x) = x^2+(1-x)^2+1 = 2-2x+2x^2$. (So, $g'(x) = -2+4x$ and $g''(x)=-2$.) Therefore $(f_0'(x),f_1'(x)) = ( -4x, 4(1-x) )$.

Observe that since $g' \in [-2,2]$, Equation~\eqref{eq:derivative_by_p} gives that $\rho \in [-2,2]$ as well. Hence, Equation~\eqref{eq:derivative_by_p} takes the form
\begin{eqnarray*}
& \rho = 2 - \frac 4 {1+ \frac {D_0p}{D_1(1-q)}} &\Rightarrow~~  \frac {D_0p}{D_1(1-q)} =\frac {2+\rho} {2-\rho} \Rightarrow~~ p = \frac {D_1}{D_0} \frac {2+\rho} {2-\rho} (1-q)\cr
& \rho = -2 + \frac 4 {1+ \frac {D_0(1-p)}{D_1q}} &\Rightarrow~~  \frac {D_0(1-p)}{D_1q} = \frac {2-\rho}{2+\rho} \Rightarrow~~ q = \frac {D_0}{D_1} \frac {2+\rho} {2-\rho} (1-p)\cr
\end{eqnarray*}
So we have  \[ p = \frac {D_1}{D_0} \frac {2+\rho} {2-\rho} - \left(\frac {2+\rho} {2-\rho}\right)^2(1-p) ~\Rightarrow ~~ p = \left(\frac {2+\rho} {2-\rho}\right) \left(\frac {D_0}{D_1}-\frac {2+\rho} {2-\rho}\right) / \left(1-\left(\frac {2+\rho} {2-\rho}\right)^2 \right)\]
which boils down to
\[ p  = \frac {2+\rho} 4 \left(\frac{\frac {2+\rho} {2-\rho}-\frac {D_0}{D_1}} {\frac {2+\rho} {2-\rho}-1}\right) = \frac {2+\rho} 4 \left( \frac { 2(1-\frac {D_0}{D_1})+\rho(1+\frac {D_0}{D_1}) } {2\rho}  \right) = \frac {2+\rho} 4 \left(\frac 1 2(1+\frac {D_0}{D_1}) - \frac {D_0-D_1}{\rho D_1} \right)  \]

And similarly,
\[ q = \frac {D_0}{D_1} \frac {2+\rho} {2-\rho} - \left(\frac {2+\rho} {2-\rho}\right)^2(1-q) ~\Rightarrow ~~ q = \left(\frac {2+\rho} {2-\rho}\right) \left(\frac {D_1}{D_0}-\frac {2+\rho} {2-\rho}\right) / \left(1-\left(\frac {2+\rho} {2-\rho}\right)^2 \right)\]
which gives
\[ q  = \frac {2+\rho} 4 \left(\frac{\frac {2+\rho} {2-\rho}-\frac {D_1}{D_0}} {\frac {2+\rho} {2-\rho}-1}\right) = \frac {2+\rho} 4 \left( \frac { 2(1-\frac {D_1}{D_0})+\rho(1+\frac {D_1}{D_0}) } {2\rho}  \right) = \frac {2+\rho} 4 \left(\frac 1 2(1+\frac {D_1}{D_0}) + \frac {D_0-D_1}{\rho D_1} \right)  \]

More importantly, under these $p$ and $q$ values, $y_0 = \frac {2-\rho} 4$ and $y_1 = \frac {2+\rho} 4$. So from $A$'s perspective, there is a Randomized Response move here with $e^\epsilon = y_1/y_0$, hence 
\[\epsilon = \ln(\frac {2+\rho}{2-\rho} )\]
The expected utility of $A$ is $u_A = g(y_0) = y_0^2+y_1^2+1 = \frac {8+2\rho^2}{16}+1 = 2 - \tfrac 1 2 + \frac {\rho^2}8$. This is in comparison to $g(D_1) = 1+D_0^2+D_1^2 = 2-2D_1+2D_1^2 = 2-2D_1(1-D_1) = 2-2D_1D_0$. It follows that $A$ prefers the $2$nd game (with the coupon) to the first only if $\tfrac 1 2 - \tfrac {\rho^2}8 < 2 D_0D_1$ or $\rho^2 > 4-16D_0D_1$. Clearly, with $D_0=D_1=\tfrac 1 2$ we have that $A$ prefers the coupon game over the benchmark-game.

\paragraph{Spherical Scoring Rule.} The spherical scoring rule is defined by the functions
\[ (f_0(x),f_1(x) ) = ( \frac {1-x} {\sqrt{x^2+(1-x)^2}}, \frac x {\sqrt{x^2+(1-x)^2}})\] which are generated using $g(x) = \sqrt{x^2+(1-x)^2}$. (So, $g'(x) = \frac {2x-1} {\sqrt{x^2+(1-x)^2}}$ and $g''(x) = (x^2+(1-x)^2)^{-\tfrac 3 2}$.)
Therefore $(f_0'(x),f_1'(x)) = (  -x(1-2x+2x^2)^{-\tfrac 3 2} , (1-x)(1-2x+2x^2)^{-\tfrac 3 2} )$.

Using the definition of $g'(x)$, Equation~\eqref{eq:derivative_by_p} now yields
\begin{eqnarray*}
& \rho \sqrt{y_0^2 + 1-2y_0+y_0^2} = -2y_0+1 &\Rightarrow (4-2\rho^2)y_0^2-(4-2\rho^2)y_0 + (1-\rho^2)=0 \cr
& \rho \sqrt{y_1^2 + 1-2y_1+y_1^2} = 2y_1-1 & \Rightarrow (4-2\rho^2)y_1^2-(4-2\rho^2)y_1 + (1-\rho^2)=0
\end{eqnarray*}
So $y_0$ and $y_1$ are the two different roots of the equation $x^2-x+\frac{1-\rho^2} {4-2\rho^2} = 0$, namely $\tfrac 1 2 \pm \tfrac1 2 \sqrt{\frac {\rho^2}{2-\rho^2} }$
Plugging in the values of $y_0$ and $y_1$ we have
\begin{eqnarray*}
&&\frac{ D_1q - D_0(1-p)}{D_1q + D_0(1-p)} = \sqrt{\frac {\rho^2}{2-\rho^2}} \cr
&& \frac{ D_0p - D_1(1-q)}{D_0p + D_1(1-q)} = \sqrt{\frac {\rho^2}{2-\rho^2}}
\end{eqnarray*}
and this is because we assume $D_0 p > D_1(1-q)$ and $D_1q > D_0(1-p)$. (That is, when we see the signal $\hat\type = 0$ it is more likely to come from a $\type=0$-type agent than a $\type=1$-agent, and similarly with the $\hat\type=1$ signal.)

After arithmetic manipulations, we have
\begin{eqnarray*}
& (1-\rho^2)(D_0^2p^2 + D_1^2(1-q)^2) = 2 D_0D_1p(1-q) & \Rightarrow (1-\rho^2)D_0p = D_1(1-q)\left( 1 \pm  \rho\sqrt{2-\rho^2}\right) \cr
& (1-\rho^2)(D_0^2(1-p)^2+D_1^2q^2) = 2D_0D_1(1-p)q &\Rightarrow 
(1-\rho^2)D_1q = D_0(1-p)\left(1 \pm  \rho \sqrt{2-\rho^2}\right)
\end{eqnarray*}
using the fact that $\rho \leq 1$ and that $D_0p>D_1(1-q)$ and $D_1q>D_0(1-p)$, then
\begin{eqnarray*}
&& D_0p = D_1(1-q)\frac{ 1 +  \rho\sqrt{2-\rho^2}}{ 1-\rho^2} \stackrel{\rm def}= {Z_\rho} D_1(1-q)\cr
&& D_1q = D_0(1-p)\frac{1 +  \rho \sqrt{2-\rho^2}}{1-\rho^2} \stackrel{\rm def}= Z_\rho D_0(1-p)
\end{eqnarray*}
(because $1-\rho\sqrt{2-\rho^2} \leq 1-\rho \leq 1-\rho^2$.)
We have that
\begin{eqnarray*}
&& D_1 = D_1q+D_1(1-q) = Z_\rho D_0(1-p) + \frac 1 {Z_\rho} D_0p \cr
&& D_0 = D_0p+D_0(1-p) = {Z_\rho} D_1(1-q) + \frac 1{Z_\rho} D_1q 
\end{eqnarray*}
We deduce
\[ p = \frac {Z_\rho^2 - Z_\rho \frac{D_1}{D_0}}{Z_\rho^2-1}, \qquad q = \frac {Z_\rho^2 - Z_\rho \frac{D_0}{D_1}}{Z_\rho^2-1}\]

More importantly, from $A$'s perspective, the signal is like a Randomized Response with parameter $e^\epsilon = y_1/y_0$ so
\[\epsilon = \ln ( \left(1 + \sqrt{\frac {\rho^2}{2-\rho^2}}\right) / \left(1 - \sqrt{\frac {\rho^2}{2-\rho^2}}\right))\]
The utility of $A$ from the game is now $g(y_0)$ which boils down to $\frac 1 {2-\rho^2}$. This is in contrast to $D_0^2 + D_1^2$, so $A$ prefers the game with the coupon over the baseline when $\rho^2 > 2 - \frac 1 {D_0^2+D_1^2} = \frac {(D_0-D_1)^2} {D_0^2+D_1^2}$.	Complimentary to that, $B$'s expected payment is 
\[\rho(D_0p+D_1q) - g(y_0) = \rho \left( \frac{D_0Z_\rho^2-D_1 Z\rho +D_1Z_\rho^2-D_0Z_\rho} {Z_\rho^2-1}\right) -\frac 1 {2-\rho^2} = \frac {\rho Z_\rho}{Z_\rho + 1} - \frac 2 {2-\rho^2}\]

\paragraph{Logarithmic Scoring Rule.} The logarithmic scoring rule is defined by the functions $ (f_0(x),f_1(x)) = (\ln(1-x), \ln(x))$ which are generated by $g(x) = - H(x) = x\ln(x)+(1-x)\ln(1-x)$. (So, $g'(x) = \ln(x) -\ln(1-x)$ and $g''(x) = \tfrac 1 x +\tfrac 1 {1-x}$.)
Therefore $(f_0'(x),f_1'(x)) = ( -\frac 1 {1-x}	, \frac 1 x  ) $.
Observe that the logarithmic scoring rule has \emph{negative} costs, and furthermore, we may charge infinite cost from an expert reporting $x=0$ or $x=1$. 

Using $g'(x)$, Equation~\eqref{eq:derivative_by_p} takes the form
\begin{eqnarray*}
& \rho = \ln(\frac{1-y_0}{y_0}) = \ln(\frac{y_1}{1-y_1}) &\Rightarrow y_0 = \frac 1 {1+e^\rho},~~y_1 = \frac 1 {1+e^{-\rho}}
\end{eqnarray*}
This implies that
\[ \frac {D_0 p}{D_1(1-q)} = \frac {D_1q}{D_0(1-p)}=e^{\rho} ~~\Rightarrow~~ p = \frac {e^{2\rho}-e^\rho \tfrac{D_1}{D_0}}{e^{2\rho}-1},~~~ q = \frac {e^{2\rho}-e^\rho \tfrac{D_0}{D_1}}{e^{2\rho}-1}\]

The Randomized Response behavior that $A$ observes is for $e^\epsilon = y_1/y_0$ which means that simply $\epsilon = \rho$. The utility for $A$ is now $g(y_0) = -\frac {\ln(1+e^{\rho})}{1+e^\rho} - \frac {\ln(1+e^{-\rho})}{1+e^{-\rho}}$. And the utility for $B$ is $u_B = \rho(D_0p+D_1q)-g(y_0) = \rho \frac {e^{2\rho}-e^\rho}{e^{2\rho}-1} - g(y_0) = \frac {\rho e^\rho}{e^\rho+1} - g(y_0)$.
}
\cut{
\subsection{Skewing the Scoring-Rule Payments to Compensate for the Prior}
\label{subsec:skewed_scoring_rule}

As observed in Section~\ref{subsec:scoring_rule_game}, the way $B$ plays is to use the signal $\hat\type$ obscures the prior $\{D_0,D_1\}$ over its original type. Here, we show that this phenomena is a result of the payments to $A$ being symmetric. In this section we suggest using a shifted scoring rule.

Observe, given a proper scoring rule $(f_0,f_1)$ and any two positive constants $w_0, w_1$, we can define $\tilde f_0(x) = w_0 f_0(x)$ and $\tilde f(x_1) = w_1 f_1(x)1$. These are no longer proper scoring rule. In particular, given a r.v. $X$ s.t. $\Pr[X=1]=x$ it is most beneficial for an expert to report $x'=\frac {w_1 x}{w_0(1-x)+w_1 x} = \frac 1 {1+\frac {w_0(1-x)}{w_1x}}$. (Alternatively, to report $x'$ s.t. $\frac {x'}{1-x'} = \frac {w_1}{w_0}\cdot \frac x {1-x}$, or $\frac {w_0}{w_1} \frac {x'}{1-x'} = \frac x {1-x}$.) 

Plugging this into the game, we have that if agent $B$ follows strategy $(p,q)$, then it is $A$'s best response that given the signal $\hat\type = 0$ she predicts $x_0$ s.t. $\frac {w_0}{w_1} \frac{x_0}{1-x_0} = \frac {D_1(1-q)}{D_0p}$; given the signal $\hat\type=1$ she predicts $x_1$ s.t. $\frac{w_0}{w_1} \frac {x_1}{1-x_1} = \frac {D_1q}{D_0(1-p)}$. It is now simple to see that by setting $w_0 = D_1$ and $w_1 = D_0$ we ``neutralize'' the effect of the prior. This sets $A$'s best response for $(p,q)$ as $(x_0,x_1) = \left(\frac {1-q} {p+1-q},\frac q {1-p+q}\right)$.

$B$'s best response strategies are just the same as given in Eq~\eqref{eq:Bs_strategies}, only w.r.t $(\tilde f_0,\tilde f_1)$. 
In particular, similar conclusions hold for the different equilibria.
\begin{itemize}
\item If $\rho \geq \max\{D_0(f_1(1)-f_1(0)),D_1(f_0(0)-f_0(1))\}$, then the game has a BNE of \[(x_0^*,x_1^*) = (0,1), ~~~ (p^*,q^*)=(1,1)\]
\item If $\exists v>D_1, D_1(f_0(D_1)-f_0(v)) < \rho < D_0(f_1(v)-f_1(D_1))$ then the game has a BNE of \[(x_0^*,x_1^*) = (D_1,v), ~~~ (p^*,q^*)=(1,0)\]
\item If $\exists v>D_1, D_0(f_1(D_1)-f_1(v)) < \rho < D_1(f_0(v)-f_0(D_1))$ then the game has a BNE of \[(x_0^*,x_1^*) = (v,D_1), ~~~ (p^*,q^*)=(0,1)\]
\item Similarly to before, we can rule out any other case where at least one of $\{p^*,q^*\}$ is integral. Most cases follow the exact same line of reasoning. The case where $p^*=1$ while $q^* \in (0,1)$ or the symmetric case where $q^*=1$ and $p^*\in(0,1)$ can be ruled out by observing that for a r.v. $X$ s.t. $\Pr[X=1]=D_0$ predicting (using the proper scoring rule) any fractional $x'$ yields better utility than predicting $0$ or $1$. 
\end{itemize}
We are now left with the case of $p^*,q^*\in(0,1)$. In this case, Equation~\eqref{eq:tradeoff_of_rho} takes the form
\[ D_1f_0(x_0) + D_0 f_1(x_0) = D_1f_0(x_1) + D_0 f_1(x_1)\] I.e., given a r.v. $X$ s.t. $\Pr[X=1]=D_0$, then $x_0$ and $x_1$ yield the same expected utility: $\E[f_X(x_0)] = \E[f_X(x_1)]$. The function $F(x) = D_1 f_0(x) + D_0 f_1(x) = D_0 f_0(1-x) + D_1 f_1(1-x)$ has derivative of $F'(x) = (D_0-x)g''(x)$, and so it is strictly increasing on the $(0,D_0)$-interval and strictly decreasing on the $(D_0,1)$-interval. 
So one solution is for $x_0=x_1$, which we can immediately rule out as it leads to $\rho=0$. We deduce that $x_0 < D_0 < x_1$ (if $x_1<x_0$ then $\rho$ has to be negative). 

Unfortunately, there is no closed-form solution to this problem, unless we plug-in the functions $f_0, f_1$. But observe that there is no reason to have $x_0=1-x_1$ (which Randomized Response, with $p=q$ should give).
}

\ifx \fullversion \undefined \vspace{-0.4cm} \fi
\ifx \fullversion \undefined \else \newpage \fi
\section{The Coupon Game with the Identity Payments}
\label{sec:matching_pennies}
\ifx \fullversion \undefined \vspace{-0.35cm} \fi
In this section, we examine a different variation of our initial game. As always, we assume that $B$ has a type sampled randomly from $\{0,1\}$ w.p. $D_0$ and $D_1$ respectively, and wlog $D_0 \geq D_1$. Yet this time, the payments between $A$ and $B$ are given in the form of a $2\times 2$ matrix we denote as $M$. This payment matrix specifies the payment from $B$ to $A$ in case $A$ ``accuses'' $B$ of being of type $\tilde\type\in\{0,1\}$ and $B$ is of type $\type$. In general we assume that $A$ strictly gains from finding out $B$'s true type and potentially loses otherwise (or conversely, that a $B$ agent of type $\type$ strictly loses utility if $A$ accuses $B$ of being of type $\tilde\type=\type$ and potentially gains money if $A$ accuses $B$ of being of type $\tilde\type=1-\type$).
In this section specifically, we consider one simple matrix $M$ -- the identity matrix $I_{2\times 2}$. Thus, $A$ gets utility of $1$ from correctly guessing $B$'s type (the same utility regardless of $B$'s type being $0$ or $1$) and $0$ utility if she errs. 
\ifx \fullversion \undefined \vspace{-0.6cm} \fi
\subsection{The Game and Its Analysis}
\label{subsec:coupon_game}
\ifx \fullversion \undefined \vspace{-0.3cm} \fi
\paragraph{The benchmark game.} The benchmark for this work is therefore a very simple ``game'' where $B$ does nothing, $A$ guesses a type and $B$ pays $A$ according to $M$. It is clear that $A$ maximizes utility by guessing $\tilde\type=0$ (since $D_0 \geq D_1$) and so $A$ gains in expectation $D_0$; where an agent $B$ of type $\type =0$ pays $1$ to $A$, and an agent $B$ of type $\type=1$ pays $0$ to $A$.

\paragraph{The full game.} Aiming to get a better guess for the actual type of $B$, we now assume $A$ first offers $B$ a coupon. As before, $B$ gets a utility of $\rho_\type$ from a coupon of the right type and $0$ utility from a coupon of the wrong type. And so, the game takes the following form now.
\begin{enumerate}
\addtocounter{enumi}{-1}
\item $B$'s type, denoted $\type$, is chosen randomly, with $\Pr[\type=0]=D_0$ and $\Pr[\type=1]=D_1$.
\item $B$ reports a type $\hat\type=\sigma_B(t)$ to $A$. $A$ in return gives $B$ a coupon of type $\hat\type$.
\item $A$ accuses $B$ of being of type $\tilde\type=\sigma_A(\hat\type)$ and $B$ pays $1$ to $A$ if indeed $\tilde\type=\type$.
\end{enumerate}

And so, the utility of agent $A$ is $u_A = \mathds{1}_{[\tilde\type=\type]}$. The utility of agent $B$ is a summation of two factors -- reporting the true type to get the right coupon and the loss of paying $A$ for finding $B$'s true type. So $u_B = \rho_\type\mathds{1}_{[\hat\type=\type]} - \mathds{1}_{[\tilde\type=\type]}$.
\newcommand{\couponMatchingPennies}{
In the coupon game with payments given by the identity matrix with $\rho_0\neq\rho_1$, any BNE strategy of $B$ is pure for at least one of the two types of $B$ agent. Formally, for any BNE strategy of $B$, denoted $\sigma_B^*$, there exist $\type,\hat\type\in\bits$ s.t. $\Pr[\sigma_B^*(\type)=\hat\type]=1$.
}
\begin{theorem}
\label{thm:coupon_matching_pennies}
\couponMatchingPennies
\end{theorem}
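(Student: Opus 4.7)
The plan is to parameterize $B$'s strategy by $p = \Pr[\sigma_B(0){=}0]$ and $q = \Pr[\sigma_B(1){=}1]$, and $A$'s strategy by $x_b = \Pr[\sigma_A(b){=}1]$ for $b\in\{0,1\}$. The theorem asks me to show that in any BNE at least one of $p^*, q^*$ is in $\{0,1\}$, so I will argue by contradiction: assume both $p^*, q^*$ lie in $(0,1)$ and derive an algebraic incompatibility from the two resulting indifference conditions.

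First I would write down $B$'s expected utility from each available signal as a function of $(x_0,x_1)$. A type-$0$ agent signaling $\hat t=0$ obtains the matching coupon and then pays $1$ exactly when $A$ guesses $\tilde t=0$, giving utility $\rho_0-(1-x_0)$; signaling $\hat t=1$ forfeits the coupon value and pays $1$ when $A$ guesses $\tilde t=0$ after the signal $\hat t = 1$, giving utility $-(1-x_1)$. Hence a type-$0$ agent strictly prefers $\hat t=0$ iff $\rho_0>x_0-x_1$, strictly prefers $\hat t=1$ iff $\rho_0<x_0-x_1$, and is indifferent (a necessary condition for $p^*\in(0,1)$) iff $\rho_0=x_0-x_1$. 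A symmetric computation for a type-$1$ agent gives expected utilities $\rho_1-x_1$ and $-x_0$ for signals $\hat t=1$ and $\hat t=0$ respectively, so she is indifferent (a necessary condition for $q^*\in(0,1)$) iff $\rho_1=x_1-x_0$.

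Now I would close the argument in one line. If both $p^*\in(0,1)$ and $q^*\in(0,1)$ at some BNE with $A$'s best-response profile $(x_0^*,x_1^*)$, then both indifference equations must hold simultaneously, forcing $\rho_0=x_0^*-x_1^*=-(x_1^*-x_0^*)=-\rho_1$. Since both coupon valuations are strictly positive by assumption, this is impossible, yielding the desired contradiction. Therefore at least one of $p^*,q^*$ lies in $\{0,1\}$, i.e.\ at least one type of $B$ agent plays a pure strategy.

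I do not anticipate any genuine obstacle here: the proof is essentially a two-equation linear incompatibility, once the expected utilities are transcribed correctly. The only care needed is to correctly interpret the payment $\mathds{1}_{[\tilde t=t]}$ in terms of $A$'s mixed response probabilities $x_0,x_1$, and to remember that an agent mixes in equilibrium only when exactly indifferent. I also note that the stated hypothesis $\rho_0\neq\rho_1$ does not appear to be used in this minimal ``one type is pure'' claim (only $\rho_0,\rho_1>0$ is required); it is presumably invoked to rule out boundary ties and to pin down which type plays pure in the finer equilibrium characterization that follows in the section.
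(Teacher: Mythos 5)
Your overall strategy---reduce to the two indifference conditions and show they cannot hold simultaneously---is sound, and it is in fact exactly the pivot of the paper's own argument; but your transcription of the type-$0$ indifference condition contains a sign error that invalidates the contradiction as written. With your parameterization $x_b=\Pr[\sigma_A(b)=1]$, equating the two utilities you correctly computed, $\rho_0-(1-x_0)$ and $-(1-x_1)$, gives $\rho_0=(1-x_0)-(1-x_1)=x_1-x_0$, \emph{not} $x_0-x_1$. Type $1$'s condition is, as you computed, $\rho_1=x_1-x_0$. So the two indifference conditions do not point in opposite directions; they are the \emph{same} linear condition on $A$'s strategy (in the paper's notation $x=\Pr[\sigma_A(0)=0]$, $y=\Pr[\sigma_A(1)=1]$, both read $\rho_\type=x+y-1$, and indeed $x+y-1 = x_1-x_0$ in your variables). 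Hence if both $p^*,q^*\in(0,1)$ at a BNE, the forced conclusion is $\rho_0=x_1^*-x_0^*=\rho_1$, and the contradiction must come from the hypothesis $\rho_0\neq\rho_1$---not, as in your write-up, from positivity of the valuations via the false identity $\rho_0=-\rho_1$ (note $x_1^*-x_0^*$ can perfectly well be positive, so positivity rules out nothing).

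This matters beyond cosmetics, because your closing remark---that $\rho_0\neq\rho_1$ is not actually used and $\rho_0,\rho_1>0$ suffices---is false, and the sign error is precisely what hid this. When $\rho_0=\rho_1=\rho<1$ the game has fully mixed BNEs: let $A$ play $\Pr[\sigma_A(0)=0]=1$ and $\Pr[\sigma_A(1)=1]=\rho$, so that $x_1^*-x_0^*=\rho$ and \emph{both} types are exactly indifferent; then $B$ may play any $(p,q)$ with $D_1q=D_0(1-p)$, e.g.\ $p=q=D_0$, which makes $A$ indifferent after the $\hat\type=1$ signal (so mixing $\rho$ there is a best response) while $D_0p\geq D_1(1-q)$ keeps $\Pr[\sigma_A(0)=0]=1$ optimal. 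The paper records exactly this family (including the Randomized Response point $p=q=D_0$) in the remark following the theorem, so your strengthened claim is directly contradicted. Once the sign is corrected, your one-line incompatibility is a valid and pleasantly direct proof of the stated purity claim; what the paper's longer case analysis buys beyond it is the full characterization of the equilibria---that $B$'s BNE strategies lie on the line $D_1q=D_0(1-p)$, with the non-indifferent type's pure action determined by the sign of $\rho_0-\rho_1$---which the theorem statement itself does not require.
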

In the case where $\rho_0=\rho_1$ then $B$ has infinitely many randomized BNE strategies, including a BNE strategy $\sigma_B^*$ s.t. $\tfrac 1 2 \leq \Pr[\sigma_B^*(0)=0] = \Pr[\sigma_B^*(1)=1] < 1$ (Randomized response).
\newcommand{\proofTheoremIdentityMatrix}{
\begin{proof}
First, we denote the strategies of agents $A$ and $B$. We denote
\begin{flalign*}
 \textrm{For $B$: }  & ~~p = \Pr[\sigma_B(0)=0] \textrm{, and }~  q = \Pr[\sigma_B(1)=1] \cr
 \textrm{For $A$: } 
 & ~~x =\Pr[\sigma_A(0)=0] \textrm{, and }~ y=\Pr[\sigma_A(1)=1]
\end{flalign*}
Using these $4$ parameters,we analyze the utility functions of the agents of the game. We start with the utility function of $A$:
\begin{align*}
 u_A = D_0 p x + D_0 (1-p)(1-y) + D_1 q y + D_1 (1-q)(1-x)
\end{align*}
This function characterizes $A$'s best response strategy as follows. $A$ determines $x  =\Pr[\sigma_A(0)=0]$ based on the relation between $D_0p$ ($= \Pr[\type=0 \wedge \hat\type=0]$) and $D_1(1-q)$ ($= \Pr[\type=1 \wedge \hat\type=0] $) --- if $D_0p$ is the larger term, then $x=1$; if $D_1(1-q)$ is the larger term, then $x=0$; and if both are equal then $A$ is free to set any $x\in [0,1]$. Similarly, the relationship between $D_1q = \Pr[\type=1 \wedge \hat\type=1]$ and $D_0(1-p) = \Pr[\type=0\wedge\hat\type=1]$ determines the value of $y = \Pr[\sigma_A(1)=1]$. 

We therefore denote the following two lines on the $[0,1]\times [0,1]$ square of possible choices for $p$ and $q$
\begin{eqnarray*}
&\ell_1 : & q = 1- \tfrac {D_0}{D_1} p \textrm{, (i.e., $D_0p=D_1(1-q)$)}\cr
&\ell_2 : & q = \tfrac {D_0}{D_1} (1-p)\textrm{, (i.e., $D_0(1-p)=D_1q$)}
\end{eqnarray*}
These are $A$'s ``lines of indifference'': when $B$ plays $(p,q)\in \ell_1$ then $A$ is indifferent to any value of $x$ in the range  $[0,1]$, and when $B$ plays $(p,q)\in\ell_2$ then $A$ is indifferent between any value of $y$.

Observe that $\ell_1$ and $\ell_2$ have the same slope, and so they are parallel, and that the point $(p,q)=(1,0)$ is above $\ell_1$ yet on $\ell_2$. It follows that $\ell_2$ is above $\ell_1$ (unless $D_0=D_1=\tfrac 1 2$ in which case the two lines coincide). The two lines are shown in Figure~\ref{fig:strategy_space_for_B}.

\cut{
\begin{flalign*}
\textrm{For $A$:} &\cr
& \textrm{ if } D_0 p > D_1 (1-q), ~\textrm{ i.e. } \Pr[t=0 \wedge \hat\type=0] > \Pr[\type=1 \wedge \hat\type=0] \textrm{, then } x = 1\cr
& \textrm{ if } D_0 p < D_1 (1-q), ~\textrm{ i.e. } \Pr[t=0 \wedge \hat\type=0] < \Pr[\type=1 \wedge \hat\type=0] \textrm{, then } x = 0\cr
& \textrm{ if } D_0 p = D_1 (1-q), ~\textrm{ i.e. } \Pr[t=0 \wedge \hat\type=0] = \Pr[\type=1 \wedge \hat\type=0] \textrm{, then } A \textrm{ can play any } x\in[0,1]\cr
& \textrm{ if } D_1 q > D_0 (1-p), ~\textrm{ i.e. } \Pr[t=1 \wedge \hat\type=1] > \Pr[\type=0 \wedge \hat\type=1] \textrm{, then } y = 1\cr
& \textrm{ if } D_1 q < D_0 (1-p), ~\textrm{ i.e. } \Pr[t=1 \wedge \hat\type=1] < \Pr[\type=0 \wedge \hat\type=1]  \textrm{, then } y = 0\cr
& \textrm{ if } D_1 q = D_0 (1-p), ~\textrm{ i.e. } \Pr[t=1 \wedge \hat\type=1] = \Pr[\type=0 \wedge \hat\type=1]  \textrm{, then } A \textrm{ can play any } y\in[0,1]\cr
\textrm{For $B$:} &\cr
& \textrm{ if } \rho_0 >  x+y-1 \textrm{ then } p = 1\cr
& \textrm{ if } \rho_0 <  x+y-1 \textrm{ then } p=0\cr
& \textrm{ if } \rho_0 =  x+y-1 \textrm{ then } B \textrm{ can play any } p\in[0,1]\cr
& \textrm{ if } \rho_1 >  x+y-1 \textrm{ then } q = 1\cr
& \textrm{ if } \rho_1 <  x+y-1 \textrm{ then } q=0\cr
& \textrm{ if } \rho_1 =  x+y-1 \textrm{ then } B \textrm{ can play any } q\in[0,1]
\end{flalign*}
Using $A$'s best response strategies, we look at the $[0,1]\times [0,1]$ square of possible choices for $p$ and $q$, and denote two lines on this square. 
\begin{eqnarray*}
&\ell_1 : & q = 1- \tfrac {D_0}{D_1} p \textrm{, (i.e., $D_0p=D_1(1-q)$)}\cr
&\ell_2 : & q = \tfrac {D_0}{D_1} (1-p)\textrm{, (i.e., $D_0(1-p)=D_1q$)}
\end{eqnarray*}
The two lines are parallel (if $D_0 > D_1$) or collapse into a single line when $D_0=D_1 = \tfrac 1 2$. The best response analysis for $A$ means that if $(p,q)$ is a point above the $\ell_1$-line then $A$ sets $x=1$, and if $(p,q)$ is below the $\ell_1$-line then $A$ sets $x=0$. Similarly, if $(p,q)$ is above the $\ell_2$-line then $y=1$, and if $(p,q)$ is below the $\ell_2$-line then $y=0$. The strategy space for $B$ is shown in Figure~\ref{fig:strategy_space_for_B}.
}
\begin{figure}[t1]
\centering
\includegraphics[scale=0.35]{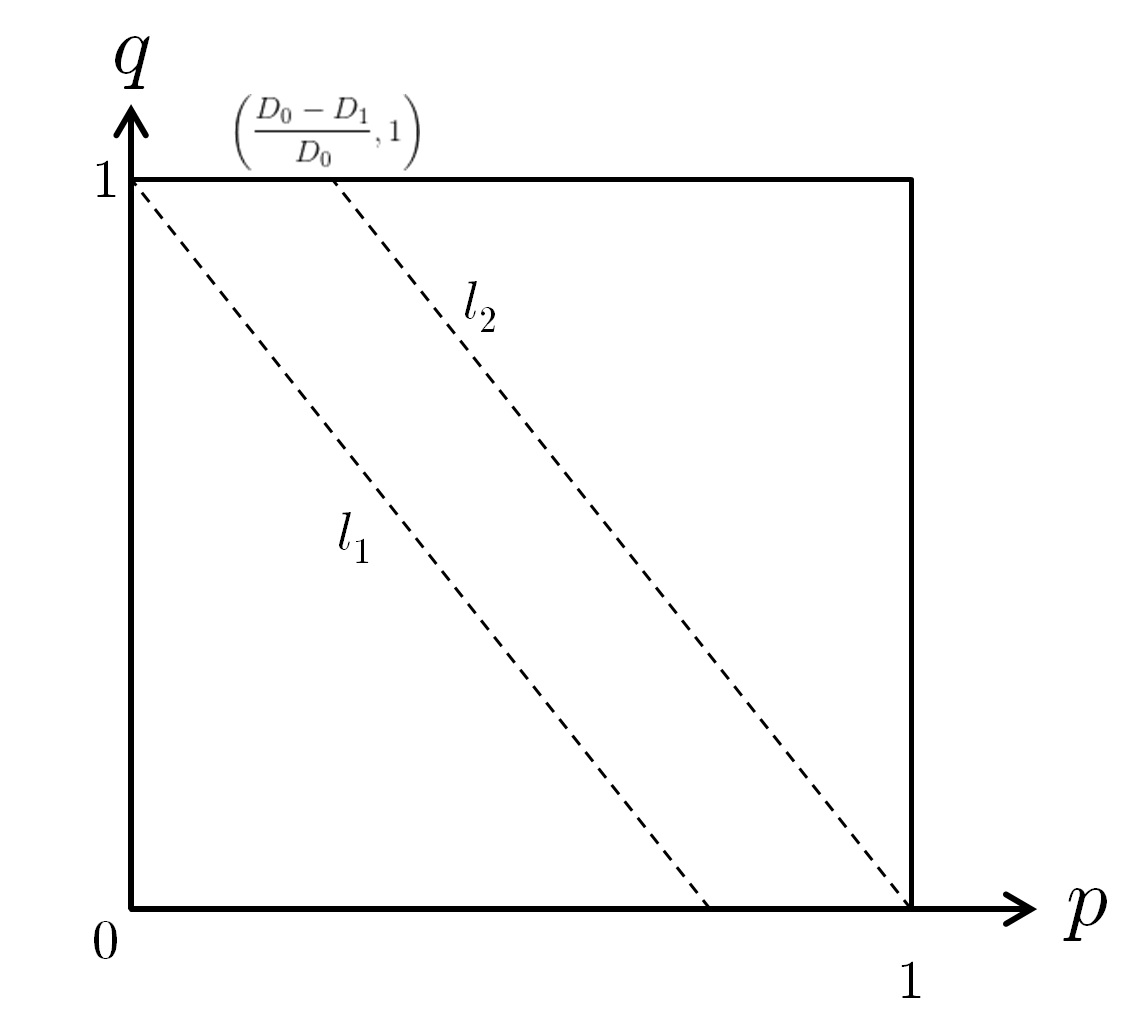}
\caption{\label{fig:strategy_space_for_B} The strategy space for agent $B$ and the corresponding best response of $A$. When $B$ plays a $(p,q)$-point above line $\ell_1$, then $A$ sets $x=1$; When $B$ plays a $(p,q)$-point above line $\ell_2$, then $A$ sets $y=1$.}
\end{figure}

We now turn our attention to the utility functions of $B$. The utility of $B$ of type $\type=0$ is
\[u_{B,0} =  p\cdot(\rho_0 - x) + (1-p)(-1+y)\]
and the utility of $B$ of type $\type=1$ is
\[u_{B,1} =  q\cdot(\rho_1 - y) + (1-q)(-1+x)\]
which means that $B$'s best response strategies are:
\begin{flalign*}
& \textrm{ if } \rho_0 >  x+y-1 \textrm{ then } p = 1\cr
& \textrm{ if } \rho_0 <  x+y-1 \textrm{ then } p=0\cr
& \textrm{ if } \rho_0 =  x+y-1 \textrm{ then } B \textrm{ can play any } p\in[0,1]\cr
& \textrm{ if } \rho_1 >  x+y-1 \textrm{ then } q = 1\cr
& \textrm{ if } \rho_1 <  x+y-1 \textrm{ then } q=0\cr
& \textrm{ if } \rho_1 =  x+y-1 \textrm{ then } B \textrm{ can play any } q\in[0,1]
\end{flalign*}

Using the best response strategies of both $A$ and $B$, we can analyze the game's potential BNEs. First we cover the simple case.

If $\max \{\rho_0,\rho_1\} > 1$: then at least one coupon have value strictly greater than $1$ and so one of the two types of $B$ agents strictly prefers deviating to playing deterministically. Wlog this type is the $t=0$ type, and so in any BNE of the game we have that $\Pr[\sigma^*_B(0)=0]=1$. 

The interesting case is when $\max\{\rho_0,\rho_1\} \leq 1$ and since we assume $\rho_0\neq\rho_1$ then for some type of $B$ agent the value of the coupon is strictly $<1$. (This intuitively makes sense~--- the coupon game becomes interesting only when $B$'s value for the coupon is below the max-payment from $B$ to $A$, hence $B$ has incentive to hide her true type.) 

We continue with a case analysis as to the potential BNE strategies of $B$.
\begin{itemize}
\item Strictly above the $\ell_2$ line, where $D_1q>D_0(1-p)$.\\
This means $B$ plays s.t. $D_1\Pr[\sigma^*_B(1)=1] > D_0 \Pr[\sigma^*_B(0)=1]$, and as a result
\[D_1\Pr[\sigma^*_B(1)=0] = D_1 - D_1q < D_0 - D_0(1-p) = D_0\Pr[\sigma^*_B(0)=0]\]
Therefore, given that $A$ observes any signal $\hat\type\in\{0,1\}$ it is more likely that a $B$ agent of type $\hat\type$ sent that this signal. So $A$ responds to such strategy by playing deterministically $\Pr[\sigma^*_A(\hat\type)=\hat\type]=1$ for any signal $\hat\type\in\bits$. As $A$ prefers to play $x=y=1$ and some type of $B$ agent has coupon valuation $<1$ then that type deviates (so either $p=0$ or $q=0$), and so the BNE strategy of $B$ cannot be above the $\ell_2$ line.

\item Strictly below the $\ell_2$ line, where $D_1q < D_0(1-p)$.\\
This means $B$ plays s.t. $D_1\Pr[\sigma^*_B(1)=1] < D_0 \Pr[\sigma^*_B(0)=1]$. So the $\hat\type=1$ signal is more likely to come from a $\type=0$ type agent, and so $A$'s best response is to set $(1-y)=\Pr[\sigma^*_A(1)=0]=1$. We thus have that $x+y-1 \leq 0$ whereas $\rho_0,\rho_1 > 0$. Hence $B$ deviates to playing $(p,q)=(1,1)$, and so the BNE of $B$ cannot be below the $\ell_2$ line.

\item On the $\ell_2$ line, where $D_1q = D_0(1-p)$.\\
This means $B$ plays s.t. $D_1\Pr[\sigma^*_B(1)=1] = D_0 \Pr[\sigma^*_B(0)=1]$, and as a result
\[D_1\Pr[\sigma^*_B(1)=0] = D_1 - D_1q < D_0 - D_0(1-p) = D_0\Pr[\sigma^*_B(0)=0]\] assuming $D_1<D_0$ (the special case where $D_0=D_1=\tfrac 1 2$ will be discussed later). And so when $A$ views the $\hat\type=0$ signal it is more likely that the type of $B$ agent is $t=0$ so $x=\Pr[\sigma^*_A(0)=0]=1$; whereas when $A$ views the $\hat\type=1$ signal both types of $B$ agents are as likely to send the signal, so $A$ is indifferent as to the value of $y = \Pr[\sigma^*_A(1)=1]$.\\
Since $\rho_0\neq \rho_1$, then by setting the single parameter $y$, $A$ can makes at most one of the two types of $B$ agent indifferent, while the other type plays a pure strategy. In other words, $B$'s BNE strategy can only be one of the two extreme point: $(p,q)=(1,0)$ or $(p,q) = (\tfrac {D_0-D_1}{D_1}, 1)$. The pure strategy of the non-indifferent type is determined by the relation between $\rho_0 \gtrless \rho_1$. So the possible BNEs are:
\begin{eqnarray*}
\textrm{If } \rho_0 > \rho_1, &&\sigma_A^* = (1,y) \textrm{ with } y\in [\rho_1,\rho_0],~ \sigma_B^*=(1,0)  \cr
\textrm{If } \rho_0 = \rho_1, && \sigma_A^* = (1,\rho_0), ~\sigma_B^*=(p,q) \textrm{ with } (p,q) \in \ell_2 \cr
\textrm{If } \rho_0 < \rho_1, && \sigma_A^* = (1, \rho_0), ~ \sigma_B^*= (\tfrac {D_0-D_1} {D_0}, 1) 
\end{eqnarray*}
In the special case where $D_0=D_1$ (the two lines are the same one) then $(p=1,q=0)$ and $(p=0,q=1)$ are both BNE regardless with $A$ BNE strategy being any $(x,y)$ satisfying with $\min\{\rho_0,\rho_1\} \leq x+y-1 \leq \max\{\rho_0,\rho_1\}$.
\end{itemize}

Observe that in the case with $\rho_0=\rho_1<1$, in a BNE, $B$ may play any strategy on the $\ell_2$-line and $A$ makes both types of $B$ agent indifferent to the value of $p,q$ by setting $y=\rho_0=\rho_1$. Since  the line $p=q$ (i.e., agent $B$ plays Randomized Response) does intersect the $\ell_2$ line at $(D_0,D_0)$, it is possible that $B$ plays Randomized Response (with $\epsilon = \ln(D_0/D_1)$). (And if $\rho_0=\rho_1=1$ then $B$ may play any $(p,q)$ on $\ell_2$ or above it whereas $A$ plays $x=y=1$.)
\end{proof}

At the BNE, $B$ plays in a way where the $\hat\type=0$ signal leads $A$ to play the same way she plays in the benchmark game (with no coupon) -- to always play $\tilde\type=0$, because $\Pr[\type = 0 ~|~ \hat\type=0] > \Pr[\type=1 ~|~ \hat\type=0]$. However, given the signal $\hat\type=1$, it holds that $\Pr[\type=0 ~|~ \hat\type =1] = \Pr[\type=1 ~|~ \hat\type=1]$ (since $B$ BNE strategy is on the $\ell_2$ line). In other words, after viewing the $\hat\type=1$ signal, $A$ has posterior belief on $B$'s type of $(\tfrac 1 2,\tfrac 1 2)$. We comment that if $B$ plays the strategy $(1,0)$, then this last statement is vacuous since the $\hat\type=1$ signal is never sent.

Observe, when $B$ plays any strategy on the $\ell_2$ line, the utility that $A$ gets from using the Nash-strategy of $(1, y)$ is $D_0p +D_0(1-p) = D_0$. In other words, moving from the benchmark game to this more complicated coupon game gives $A$ no additional revenue. In fact, the only agent that gains anything is $B$. In the benchmark game $B$'s utility is $-D_0$. In the coupon game, $B$'s utility is $D_0(\rho_0-1)$ when $\rho_0\geq \rho_1$, or $D_0(\rho_0-1) + D_1(\rho_1-\rho_0)$ when $\rho_1>\rho_0$.
}
\ifx \fullversion\undefined
\else
\proofTheoremIdentityMatrix
\fi
\ifx \fullversion \undefined \vspace{-0.4cm} \fi
\subsection{Continuous Coupon Valuations}
\label{subsec:diff_continuous_rho}
\ifx \fullversion \undefined \vspace{-0.2cm} \fi
We now consider the same game with the same payments, but under a different setting. Whereas before we assumed the valuations that the two types of $B$ agents have for the coupon are fixed (and known in advance), we now assume they are not fixed. In this section we assume the existence of a continuous prior over $\rho$, where each type $\type \in \{0,1\}$ has its own prior, so $\CDF_0(x) \stackrel {\rm def} = \Pr[\rho < x ~|~ t=0]$ with an analogous definition of $\CDF_1(x)$. We use $\CDF_B$ to denote the cumulative distribution function of the prior over $\rho$ (i.e., $\CDF_B(x) = \Pr[\rho < x] = D_0\CDF_0(x)+D_1\CDF_1(x)$). We assume the $\CDF$ is continuous and so $\Pr[\rho=y]=0$ for any $y$. Given any $z \geq 0$ we denote $\CDF_B^{-1}(z)$ the set $\{y :~ \CDF_B(y)=z\}$.
\newcommand{\proofDefferedContinuousValuationsAPX}{
\ifx \fullversion\undefined
Due to space constraints, this analysis is deferred to Appendix~\ref{apx_sec:matching_pennies}.
\else \ifx \cameraready \undefined
\else
the full version of the paper.
\fi
}
\newcommand{\matchingPenniesContinuousValuations}{
In every BNE $(\sigma_A^*,\sigma_B^*)$ of the coupon game with identity payments, where $D_0 \neq D_1$ and the valuations of the $B$ agents for the coupon are taken from a continuous distribution over $[0,\infty)$, the BNE-strategies are as follows.
\begin{itemize}
\item Agent $A$ always plays $\tilde\type=0$ after viewing the $\hat\type=0$ signal (i.e.,  $\Pr[\sigma_A^*(0) =0]=1$); and plays $\tilde\type=1$ after viewing the $\hat\type=1$ signal with probability $y^*$ (i.e., $\Pr[\sigma_A^*(1)=1] = y^*$), where $y^*$ is any value in $\CDF_B^{-1}(D_1)$ when $\Pr[\rho<1]\geq D_1$ and $y^*=1$ when $\Pr[\rho<1]<D_1$.
\item Agent $B$ reports truthfully (sends the signal $\hat\type=\type$) whenever her valuation for the coupon is greater than $y^*$, and lies (sends the signal $\hat\type=1-\type$) otherwise. That is, for every $t \in \{0, 1\}$ and $\rho \in [0,\infty)$, we have that if $\rho > y^*$ then $\Pr[\sigma_B^*(\type) = \type]=1$ and if $\rho < y^*$ then $\Pr[\sigma_B^*(\type)= \type]=0$.
\end{itemize}}
\begin{theorem}
\label{thm:matching-pennies-continuous-valuations}
\matchingPenniesContinuousValuations
\end{theorem}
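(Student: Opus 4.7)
\medskip\noindent\textbf{Proof plan.} My plan is to characterize the best responses of both agents separately and then see which $(x,y)$-pairs can arise as simultaneous best responses. The utility expressions from the proof of Theorem~\ref{thm:coupon_matching_pennies} transfer without change: given $A$'s strategy $(x,y)$, a type-$0$ $B$ agent with coupon valuation $\rho$ gets $\rho - x$ from sending $\hat\type=0$ versus $y-1$ from sending $\hat\type=1$, and a type-$1$ $B$ agent's comparison reduces to the same condition. Both types therefore share the single threshold $\tau := x + y - 1$, so $B$'s unique best response (up to the measure-zero event $\rho = \tau$) is the threshold strategy in the statement: tell the truth whenever $\rho > \tau$ and lie whenever $\rho < \tau$.

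Next I compute $A$'s best response assuming that $B$ uses such a threshold $\tau$. Integrating against the type-conditional distributions gives the joint probabilities $\Pr[\hat\type=0,\type=0] = D_0(1-\CDF_0(\tau))$ and $\Pr[\hat\type=0,\type=1] = D_1 \CDF_1(\tau)$, and symmetrically for $\hat\type=1$. Comparing these on each signal shows that $x=1$ is strictly optimal iff $\CDF_B(\tau) < D_0$ (with $x$ free on equality), while $y=1$ is strictly optimal iff $\CDF_B(\tau) < D_1$ (with $y$ free on equality).

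I then combine the two best-response conditions. Because $D_0 \neq D_1$ (and wlog $D_0 > D_1$), the two equality conditions cannot hold simultaneously, so at most one of $x,y$ can be strictly mixed. The main case, $\CDF_B(\tau) = D_1$, strictly forces $x=1$, which makes $\tau = y$; this gives exactly the asserted relation $y^* \in \CDF_B^{-1}(D_1)$ together with $x=1$. The alternative equality $\CDF_B(\tau) = D_0$ would force $y=0$ (since $D_0 > D_1$), hence $\tau = x - 1 \leq 0$; but then $\CDF_B(\tau) = 0$ (the support of $\rho$ is $[0,\infty)$), contradicting $\CDF_B(\tau) = D_0 > 0$. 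The same observation $\tau \leq 0$ rules out every fully pure profile except $x=y=1$, which is consistent with both best-response conditions iff $\CDF_B(1) < D_1$, matching the boundary regime in the theorem.

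The main subtlety I expect to handle with care is this boundary regime: when $\Pr[\rho<1] < D_1$, the set $\CDF_B^{-1}(D_1)$ does not meet $[0,1]$, so no interior value of $y$ makes $A$ indifferent on the $\hat\type=1$ signal, and the best-response conditions force $y^*=1$. The measure-zero freedom in $B$'s strategy at the single point $\rho=\tau$ is absorbed by continuity of the valuation distribution and does not affect $A$'s posterior, so the characterization in the statement is genuinely unique up to that null event.
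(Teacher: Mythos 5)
Your proposal is correct and follows essentially the same route as the paper's proof: both reduce $B$'s best response to the single threshold $\tau = x+y-1$, express $A$'s expected utility linearly in $x$ and $y$ with coefficients $D_0 - \CDF_B(\tau)$ and $D_1 - \CDF_B(\tau)$, and run the same case analysis, including the key contradiction that any candidate with $\tau \leq 0$ forces $\CDF_B(\tau) = 0 < D_1$. Your reorganization of the cases by which component of $A$'s strategy can be mixed, and your explicit treatment of the boundary regime $\CDF_B(1) < D_1$, are faithful variants of the paper's three-way split on $\CDF_B(T)$ versus $D_1$.
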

\proofDefferedContinuousValuationsAPX
\newcommand{\continuousCouponValuations}{
\begin{proof}
We assume $B$'s parameters are sampled as follows. First, we pick a type $\type$ s.t. $\Pr[\type=1]=D_1$ and $\Pr[\type=0] = D_0$. Then, given $\type$ we sample $\rho\leftarrow \PDF_\type$, where $\Pr[\rho \leq 0]=0$ for both types. And while $A$ knows $D_0,D_1, \PDF_0$ and $\PDF_1$, $A$ does not know $B$'s realized type and valuation.

We apply the same notation from before, denoting a strategy $\sigma_B$ of $B$ using $p$ and $q$ (where $p = \Pr[\sigma_B(0)=0]$ and $q=\Pr[\sigma_B(1)=1]$), and denoting a strategy $\sigma_A$ of $A$ using $x$ and $y$ (where $x=\Pr[\sigma_A(0)=0]$ and $y=\Pr[\sigma_A(1)=1]$).

The utility function of $B$ remains the same:
\[ u_{B,0,\rho} = p(\rho - x)+(1-p)(-1+y)~,~\qquad u_{B,1,\rho} = q(\rho-y) + (1-q)(-1+x) \]
So $B$'s best response to any strategy of $(x,y)$ of $A$ is given by
\[ \sigma_B^{br}(\rho,\type) = \begin{cases} \type, & \textrm{ if } \rho > x+y-1 \cr 1-\type, & \textrm{ if } \rho < x+y-1 \end{cases}\]
We call such a strategy a \emph{threshold strategy} characterized by a parameter $T$, where any agent whose $\rho < T$ plays $\hat\type=1-\type$ and any agent with $\rho >T$ plays $\hat\type=\type$.\footnote{Since $\rho$ is sampled from a continuous distribution, then the probability of the event $\rho=T$ is $0$.} Clearly, in any BNE, $B$ follows a threshold strategy for some value of $T$.

Therefore, since $A$'s BNE strategy is best response to $B$'s BNE strategy, it suffices to consider $A$'s best response against a threshold strategy. Given that $B$ follows a threshold strategy with threshold $T$ we have that $A$'s utility function is
\begin{eqnarray*}
& u_A=  & x D_0 (1-\CDF_0(T))  + (1-x)D_1 \CDF_1(T) \cr
&& + yD_1 (1-\CDF_1(T)) + (1-y)D_0 \CDF_0(T)\cr
&& = \CDF_B(T) + x(D_0 - \CDF_B(T)) + y(D_1-\CDF_B(T)) 
\end{eqnarray*} 
where we use the notation $\CDF_B = D_0 \CDF_0 + D_1\CDF_1$. As $A$ maximizes her strategy, we have that $A$ sets $x>0$ only if $\CDF_B(T)\leq D_0$. Similarly, $y>0$ only if $\CDF_B(T)\leq D_1$. Since $D_1\leq D_0$ we only have three cases to consider.
\begin{itemize}
\item If $\CDF_B(T) <  D_1$: In this case $A$'s best response is to set $x=y=1$ and $B$'s best-response to $(1,1)$ is to set the threshold parameter $T = x+y-1 = 1$. (So every $B$ agent with $\rho<1$ determinisitically sends the signal $\hat\type = 1-\type$ and any $B$ agent with $\rho > 1$ sends the signal $\hat\type=\type$.) Clearly, if it holds that $\CDF_B(1) < D_1$, i.e., that the probability of a random $B$ agent to have $\rho\leq 1$ is less than $D_1$, then we have a BNE.
\item If $\CDF_B(T) > D_1$: In this case $A$'s best response sets $y=0$ and $x\in [0,1]$. As $B$ is playing best response to $A$'s strategy then it means that the threshold parameter is set to $T=x-1 \leq 0$. (And so all $B$ agents have coupon valuation $\rho>0$ we have that all $B$ agents determinisitically send the signal $\hat\type=\type$.) But for such $T$ we have that $\CDF_B(T) \leq \CDF_B(0) = 0 < D_1$ we get an immediate contradiction.
\item If $\CDF_B(T) = D_1$: In this case $A$ sets $x = 1$ and is indifferent to the choice of $y$. Observe that $B$'s best response to $A$'s strategy of $(1,y)$ is to set the threshold parameter to $T=y$. We have that this is indeed a BNE if $y \in \CDF_B^{-1}(D_1)$. Assuming uniqueness to the inverse of $\CDF_B$ then $\sigma_A^*=(1,y^*)$ with $y^*=\CDF_B^{-1}(D_1)$ is $A$'s BNE strategy, and $B$'s BNE strategy is a threshold strategy with the threshold parameter set to $T=y^*$. We comment that in the case where $D_0 = D_1$ and $A$ is indifferent to the choice of $x$ as well, the BNE strategy of $A$ is defined using any $x^*,y^* \in [0,1]$ that satisfy $x^*+y^* \in \CDF_B^{-1}(D_1)$.
\end{itemize}

\cut{
So for a fixed $x,y$, we have that $\Pr[\type = 0 \textrm{ and } \rho < x+y-1] = D_0 \int_0^{x+y-1} \PDF_0(t) dt = D_0 \CDF_0(x+y-1)$ and similarly $\Pr[\type=1 \textrm{ and } \rho<x+y-1] = D_1 \CDF_1(x+y-1)$.
Since $\rho$ is sampled from a continuous distribution, we have that $\Pr[\rho=x+y-1] = 0$. Therefore, for a given $(x,y)$, the $A$ agent can quantify the probability of getting a $\hat\type=0$ signal and a $\hat\type=1$ signal. Hence, the utility of $A$ is
\begin{eqnarray*}
& u_A=  & x D_0 (1-\CDF_0(x+y-1))  + (1-x)D_1 \CDF_1(x+y-1) \cr
&& + yD_1 (1-\CDF_1(x+y-1)) + (1-y)D_0 \CDF_0(x+y-1)\cr
&& = xD_0+yD_1 + D_0(-x+1-y)\CDF_0(x+y-1) + D_1(-y+1-x)\CDF_1(x+y-1) \cr
&& = xD_0+yD_1 - (x+y-1)\CDF_B(x+y-1) \cr
&& = 1- D_1x - D_0y + (x+y-1) \left(1-\CDF_B(x+y-1)\right)
\end{eqnarray*} 
with $\CDF_B = D_0 \CDF_0 + D_1\CDF_1$.

To analyze what it $A$'s utility-maximizing strategy, we need to solve the following problem.
\begin{eqnarray*}
& \textrm{maximize} & 1- D_1x - D_0y + z(1-\CDF_B(z)) \cr
& \textrm{s.t.} & x,y \in [0,1] \cr
&& z = x+y-1
\end{eqnarray*}
Given that $(x^*,y^*,z^*)$ maximizes the above quantity, we use the fact that $D_0 \geq D_1$ to deduce that: if $1+z^*<1$ then it must be that $x^*=1+z^*$ and $y^*=0$, and if $1+z^*\geq 1$ then it must hold that $x^*=1$ and $y = z^*$. Therefore, this maximization problem is equivalent to maximizing a univariate function
\begin{eqnarray*}
& \textrm{maximize} & \begin{cases} &1- D_1(1+z) + z(1-\CDF_B(z)) \cr &\textrm{s.t. } z\in[-1,0] \end{cases} \textrm{ or } \begin{cases}  &1- D_1 - D_0z + z(1-\CDF_B(z)) \cr & \textrm{s.t. } z\in [0,1]\end{cases}
\end{eqnarray*}
Since we assume that no agent has negative valuation for the coupon, we have that $\CDF_B(z) = 0$ if $z<0$. So the first optimization problem turns out to maximize $D_0 - D_1 z + z = D_0 +  D_0z$ on the range $z \in [-1,0]$. As this function increases with $z$ it is easy to see that the maximum is $D_0$. Therefore it suffices to consider the latter optimization problem
\begin{eqnarray*}
& \textrm{maximize} & D_0 - D_0z + z(1-\CDF_B(z)) = D_0 + z(D_1 - \CDF_B(z))\cr
& \textrm{s.t.} & z \in [0,1] 
\end{eqnarray*}
The maximum here is $\geq D_0$ as $z=0$ gives a value of $D_0$. This makes intuitive sense as $A$ can always ignore the signal $\hat\type$ and gain $D_0$ by always playing $\tilde\type=0$.

Given that $z^*$ maximized this function, $A$ sets $(x^*,y^*) = (1,z^*)$. (Unless $D_0=D_1=\tfrac 1 2$ in which case $A$ can set any $(x^*,y^*)$ s.t. $x^*+y^*=1+z^*$.) Any $B$ agent whose valuation for the coupon satisfies $\rho > 1+z^*$ signals truthfully $\hat\type=\type$ and any agent whose valuation is $\rho < 1+z^*$ signals $\hat\type=1-\type$.

The function $z(D_1-\CDF_B(z))$ is a differentiable function, whose derivative is $D_1 - \CDF_B(z) - z\PDF_B(z)$. The derivative for $z=0$ is positive and it is a decreasing function of $z$. So we have that if it exists, then the point $z^*\in[0,1]$ s.t. $\CDF_B(z^*)+z^*\PDF_B(z^*) = D_1$ maximizes $A$'s utility, otherwise we set $z^*=1$. In any case, $z^*>0$ so $A$ gains strictly more than $D_0$, which is her gain in the straw-man game. I.e., in the continuous valuations model, $A$ is better off playing the coupon game.
}
\end{proof}
Observe that in this game, from $A$ perspective, without knowing the realized value of $\rho$, it appears that $B$ is playing a randomized strategy. Furthermore, should the coupon valuation and the type be chosen independently (i.e. $\PDF_0=\PDF_1$) then $A$ views $B$'s strategy as Randomized Response --- since for a randomly chosen $\rho$ it holds that $\Pr[\sigma(\rho,0) = 0] = \Pr[\sigma_B(\rho,1) = 1]$. In that case the behavior of $B$ preserves $\epsilon$-differential privacy for
\[  \epsilon = \ln \left( \frac {1-\CDF_B(y^*)}{\CDF_B(y^*)}\right) = \ln\left(\frac {D_0}{D_1}\right)  \]

\cut{
One can plug in various functions as $\PDF$s. In particular, when $\rho$ is sampled uniformly from the $[0,1]$-interval (regardless of the value of $\type$) then $z^* = \tfrac 1 2 D_1$, so $\epsilon = \ln( \tfrac{2-D_1}{D_1})$. When $\rho$ is sampled from the exponential distribution, i.e. $\PDF(z) = e^{-z}$, then $z^*$ is the value satisfying $(1-z)e^{-z} = D_0$. }
}
\continuousCouponValuations
\cut{
\subsection{The Game with a General $2\times 2$ Payment Matrix}
\label{subsec:arbitrary_matrix}

\os{Other than the fact that this is a good preview for the next section, don't really see why I should put it here. The results are nice, but uninteresting.}

Finally, we consider an extension of the game, where we replace the identity-matrix payments with general payments. Indeed, there is no intuitive reason why, from $A$'s perspective, realizing that $B$ has type $\type=0$ is worth just as much as finding $B$ has type $\type=1$. After all, it could be that type $\type=0$ are the healthy people and $\type=1$ represents having some medican condition, so finding a person of $\type=1$ should be more worthwhile for $A$. In our new payment matrix, we still require that $A$ gains utility if she correctly guesses $B$'s type, and loses if she accuses $B$ of being of the wrong type. In other words, we consider payments of the form
\[ M = \left[ \begin{array}{c|c}
M_{0,0} & -M_{0,1}\cr
\hline
-M_{1,0} & M_{1,1}
\end{array}\right] \]
where $A$ is the row player and $B$ is the column player. We assume $M_{0,0}, M_{0,1}, M_{1,0}, M_{1,1}$ are all non-negative.

\cut{
For example, we can imagine that being of type-$1$ is rather embarrassing. Therefore, $M_{1,1}$ can be much larger than $M_{0,0}$, but similarly $M_{1,0}$ is also probably larger than $M_{0,1}$. (Falsely accusing $B$ of being of the embarrassing type is costlier than falsely accusing a $B$ of type $1$ of belonging to the non-embarrassing majority.) And so, as leading example, we pick some parameter $a>1$ and think of $M$ as the payment matrix $\left[ \begin{array}{c | c}
1 & -1\cr
\hline -a & a 
\end{array}\right]$
}
\paragraph{The benchmark game.} First, consider the game where $B$ has no moves, and $A$ just gets to accuse $B$ of belonging to type $\tilde\type$. $A$ knows that $B$ plays the $0$-column w.p. $D_0$ and the $1$-column w.p. $D_1$, and so she plays the row that maximizes
\[ \tilde\type=\arg\max\{ D_0 M_{0,0} - D_1 M_{0,1}, D_1 M_{1,1} - D_0 M_{1,0} \} \] We denote this row as $r^*$. An equivalent formulation of this is that $r^*=0$ if $D_0(M_{0,0}+M_{1,0}) > D_1(M_{0,1}+M_{1,1})$ and $r^*=1$ otherwise.

\paragraph{The coupon game.} The game itself remains the same. 
\begin{flalign*}
 \textrm{For $B$: }  & ~~p = \Pr[\hat\type= 0 ~|~ \type =0] \textrm{, and }~  q = \Pr[\hat\type=1 ~|~ \type = 0] \cr
 \textrm{For $A$: } 
 & ~~x =\Pr[\tilde\type = 0 ~|~ \hat\type = 0] \textrm{, and }~ y=\Pr[\tilde\type=1~|~\hat\type=1]
\end{flalign*}
The new utility functions are generalization of the utility functions from before.
\begin{flalign*}
u_A =& D_0\left[ M_{0,0}p x - M_{1,0} p(1-x) + M_{0,0} (1-p) (1-y) - M_{1,0} (1-p) y  \right] \cr 
&+ D_1\left[ M_{1,1}q y - M_{0,1} q(1-y) + M_{1,1} (1-q) (1-x) - M_{0,1} (1-q) x  \right] \cr
=& x \left( M_{0,0} D_0 p -M_{0,1}D_1(1-q)\right) + (1-x) \left(-M_{1,0}D_0p + M_{1,1} D_1 (1-q)\right) \cr
&+ y \left(  -M_{1,0}D_0(1-p) + M_{1,1}D_1q  \right) + (1-y) \left( M_{0,0}D_0(1-p)-M_{0,1}D_1q  \right) \cr
u_B =& D_0 u_{B,0} + D_1 u_{B,1} \textrm{, where we define }\cr
u_{B,0} =& p ( \rho_0 - M_{0,0}x + M_{1,0}(1-x) ) + (1-p) ( -M_{0,0}(1-y) + M_{1,0} y ) \cr
u_{B,1} =&  q ( \rho_1 - M_{1,1}y + M_{0,1}(1-y) ) + (1-q) ( -M_{1,1}(1-x) + M_{0,1} x )
\end{flalign*}

And so, $x$ is determined by the point $(p,q)$ being above or below the line \[\ell_0: (M_{0,0} + M_{1,0})D_0p = (M_{0,1}+M_{1,1})D_1(1-q)\] and $y$ is determined by the point $(p,q)$ being above or below the line \[\ell_1: (M_{0,0} + M_{1,0})D_0(1-p) = (M_{0,1}+M_{1,1})D_1q\] Again, the two lines are parallel (or identical), with $(0,1)\in \ell_0$ and $(1,0)\in \ell_1$. Yet, this time, it is not clear which line is above the other. Looking at $p=1$, we have that the point $(1, 1- \frac {D_0(M_{0,0}+M_{1,0})} {D_1(M_{0,1}+M_{1,1})})$ is on the $\ell_0$-line, and the point $(1,0)$ is on the $\ell_1$ line. So, when $r^* = 0$ then the $\ell_0$-line is below the $\ell_1$-line, and vice-versa. In other words, the $\ell_{r^*}$-line is below the $\ell_{1-r^*}$-line.

The two parallel line partition the $[0,1]\times[0,1]$ square into three regions: below the both lines (where $A$ sets $x+y=0$), between the two lines (where $A$ sets $x+y=1$, and above the two lines (where $A$ sets $x+y=2$). On the lower of the two lines $A$ may set $x+y$ to be any number in $[0,1]$, and the higher of the two lines $A$ may set $x+y$ to be any number in $[1,2]$.\footnote{Here we use the assumption that all of $\{M_{0,0}, M_{0,1}, M_{1,0}, M_{1,1}\}$ are positive.}	

Similar to before, $B$'s best response is determined by the relationship between $\rho_t$ and $x+y-1$. The relationship $\rho_0 \gtrless (M_{0,0}+M_{1,0})(x+y-1)$ determines whether $p=1$ or $p=0$; the  relationship $\rho_1 \gtrless (M_{0,1}+M_{1,1})(x+y-1)$ determines whether $q=1$ or $q=0$. Observe that when $x+y \leq 1$ then both agents prefer to take the coupon (we assume the valuation for the coupon is positive), and when $x+y = 2$ then only agents with large enough valuation for the coupon take it.

\paragraph{Analyzing the Game in the case of fixed valuations.} We first consider the case where the valuations of the two types of $B$ agent for the coupon are fixen and known in advance. In this case, the analysis mimic the analysis in Section~\ref{subsec:coupon_game}
First consider the simple case where both $\rho_0 \geq M_{0,0}+M_{1,0}$ and $\rho_1 \geq M_{0,1}+M_{1,1}$. In this case, it is clear that $p=q=1$ and $x=y=1$ is the only BNE (if $\rho_0 = M_{0,0}+M_{1,0}$ or $\rho_1 = M_{0,1}+M_{1,1}$ then $B$ may play any $p$ or any $q$, as long as $x=y=1$ is a best response to $(p,q)$. Meaning, as long as $(p,q)$ are above both lines.

Now suppose that at least on type of agent, $b\in \{0,1\}$ has valuation $0 < \rho_b < M_{0,b}+M_{1,b}$. In such a case, and just like before, we claim that the BNE strategy of $B$, $(p^*,q^*)$, must lie on the highest line. Indeed, everywhere else, $A$'s best response is to set $(x,y)$ s.t. either $x+y=2$ or $x+y \leq 1$; $B$'s best response to $x+y=2$ is to set either $p$ or $q$ to $0$, and for any $x+y\leq 1$ is to set both $p=1$ and $q=1$. 

Therefore, $A$'s strategy is to set $x+y$ s.t. either $B$ of type-$0$ is indifferent, or $B$ of type-$1$ is indifferent. In other words, either
\[ x+y = \frac {\rho_0}{M_{0,0}+M_{1,0}} + 1 ~\textrm{, or }~~x+y = \frac {\rho_1}{M_{0,1}+M_{1,1}} + 1\]
$A$ makes the type-$b$ agent indifferent by setting one of the $x$ or $y$ to be $1$ (the one corresponding to~$r^*$) and the other to $\frac {\rho_b}{M_{0,b}+M_{1,b}}$. That is, $A$'s strategy sets $B$ agent of type-$b$ to be indifferent, whereas type-$(1-b)$ plays $\begin{cases}
1&\textrm{, if } \frac {\rho_{1-b}}{\rho_b} >  \frac {M_{0,1-b}+M_{1,1-b}}{M_{0,b}+M_{1,b}} \cr
\textrm{Anything on }[0,1]&\textrm{, if } \frac {\rho_{1-b}}{\rho_b}=  \frac {M_{0,1-b}+M_{1,1-b}}{M_{0,b}+M_{1,b}} \cr
0&\textrm{, if } \frac {\rho_{1-b}}{\rho_b}<  \frac {M_{0,1-b}+M_{1,1-b}}{M_{0,b}+M_{1,b}}
\end{cases}$.\\
If out of the three conditions it happens that the equality holds, any $(p,q)$  on the highest line is a BNE strategy for $B$. In that case, it is possible that $B$ plays Randomized Response --- with $(p,q)$ being on the intersection of the line $p=q$ with $\ell_{1-r^*}$.

In the case where the equality does not hold, then the BNE is characterized as follows. If $\ell_1$ is the highest line, then  \[ \begin{cases} (x=1, y=\frac {\rho_1}{M_{0,1}+M_{1,1}}), (p=1,q=0), &\textrm{ if } \frac {\rho_{0}}{\rho_1} >  \frac {M_{0,0}+M_{1,0}}{M_{0,1}+M_{1,1}} \cr
(x=1, y=\frac {\rho_0}{M_{0,0}+M_{1,0}}), (p=1 - \frac {D_1}{D_0} \frac{M_{0,1}+M_{1,1}}{M_{0,0}+M_{1,0}},q=1), &\textrm{ o/w } \end{cases}\]
If $\ell_0$ is the highest line, then  \[ \begin{cases} (x=\frac {\rho_1}{M_{0,1}+M_{1,1}}, y=1), (p=1,q=1- \frac {D_0(M_{0,0}+M_{1,0})} {D_1(M_{0,1}+M_{1,1})}), &\textrm{ if } \frac {\rho_{0}}{\rho_1} >  \frac {M_{0,0}+M_{1,0}}{M_{0,1}+M_{1,1}} \cr
(x=\frac {\rho_0}{M_{0,0}+M_{1,0}},y=1), (p=0,q=1), &\textrm{ o/w } \end{cases}\]

\paragraph{Analyzing the game in the case of continuous valuations.} As before, we continue the analysis with coupon valuations sampled from a distribution. Same as in Section~\ref{subsec:diff_continuous_rho} we denote $\PDF_0$ and $\PDF_1$ as the PDFs of the distributions from which the valuations for $0$ and $1$ are sampled from. We assume of course that $\PDF_0(x)=\PDF_1(x)=0$ for any $x<0$.

As before $x = \Pr[\tilde\type=0 | ~\hat\type=0]$ and $y = \Pr[\tilde\type = 1 |~ \hat\type=1]$. Again, given a $B$ agent of type either $0$ or $1$ and coupon valuation of $\rho$, its utility function is
\begin{eqnarray*}
u_{B,0} =& p ( \rho - M_{0,0}x + M_{1,0}(1-x) ) + (1-p) ( -M_{0,0}(1-y) + M_{1,0} y ) \cr
u_{B,1} =&  q ( \rho - M_{1,1}y + M_{0,1}(1-y) ) + (1-q) ( -M_{1,1}(1-x) + M_{0,1} x )
\end{eqnarray*}
So $B$ chooses whether or not to give the signal $\hat\type=\type$ based on $\rho \stackrel {?} \gtrless (M_{0,\type}+M_{1,\type})(x+y-1)$. We thus denote for $\type\in\{0,1\}$ the threshold value $T_\type = 
(M_{0,\type}+M_{1,\type})(x+y-1)$, and we have
\begin{eqnarray*}
\Pr[\hat\type = 1 | \type = 0] & = \CDF_0(T_0)\cr
\Pr[\hat\type = 0 | \type = 1] & = \CDF_1(T_1)\end{eqnarray*}

We can now formulate $A$'s utility from this game. 
\begin{eqnarray*}
& u_A & = D_0 \left(\CDF_0(T_0) (1-y) + (1-\CDF_0(T_0))x\right) + D_1\left( \CDF_1(T_1)(1-x) + (1-\CDF_1(T_1))y  \right) \cr
&& = x D_0 + yD_1 - (x+y-1) \left(D_0\CDF_0(T_0) + D_1\CDF_1(T_1) \right) \cr
&& = 1 - y D_0 - x D_1 + (x+y-1) \left( D_0(1-\CDF_0(T_0)) + D_1(1-\CDF_1(T_1)) \right) 
\end{eqnarray*}

Therefore, as $A$ maximizes her own utility, this optimization fucntion is dependent on the variable $z$ the represents $z = x+y-1$.
\begin{eqnarray*}
&\textrm{maximize} & 1 - y D_0 - x D_1 + z \left( D_0(1-\CDF_0(T_0)) + D_1(1-\CDF_1(T_1)) \right) \cr
&\textrm{s.t.} & z = x+y-1\cr
&& x,y\in[0,1]
\end{eqnarray*}
The same analysis from before gives that we can consider just the case where $z\in [0,1]$ and $x=1, y=z$. This leaves us with 
\begin{eqnarray*}
&\textrm{maximize}_{z\in[0,1]} & u_A(z) = D_0 - z D_0 + z \left( D_0(1-\CDF_0(T_0)) + D_1(1-\CDF_1(T_1)) \right) \cr
&\Leftrightarrow \textrm{maximize}_{z\in[0,1]} & u_A(z) = D_0 + z D_1 - z \left( D_0\CDF_0(T_0) + D_1\CDF_1(T_1) \right) \cr
\end{eqnarray*}
We therefore define the function $F_B(x) = D_0 \CDF_0\big( (M_{0,0}+M_{1,0})\cdot x \big) + D_1 \CDF_1\big( (M_{0,1}+M_{1,1})\cdot x \big)$. At this point, we have the same analysis from before, except that the function $\CDF_B$ is replaced with $F_B$. In particular, $A$ finds some value $z$ s.t. from $A$'s perspective $B$ plays Randomized Response with the same $\epsilon$ as before.
\cut{
Differentiating:
\begin{flalign*}
u_A' & = D_1 -D_0\CDF_0(T_0)-D_1\CDF_1(T_1) - D_0(M_{0,0}+M_{1,0})z\PDF_0(T_0) - D_1(M_{0,1}+M_{1,1})z\PDF_1(T_1)\cr
& = D_1(1 - \CDF_1(T_1) - T_1\CDF_1(T_1)) -D_0(\CDF_0(T_0)+ T_0 \PDF_0(T_0))\cr
& = D_1(1 - \CDF_1(T_1) - T_1\CDF_1(T_1)) +D_0(1-\CDF_0(T_0)-T_0 \PDF_0(T_0)) -D_0
\end{flalign*}
Solving for $u = (0,0)$ we have that
\[ (1-\CDF_0(T_0) - T_0\PDF_0(T_0))(1-\CDF_1(T_1) - T_1\PDF_1(T_1))=1\]
}
}

\ifx \fullversion \undefined \vspace{-0.7cm} \fi
\ifx \fullversion \undefined \else \newpage \fi
\section{The Coupon Game with an Opt Out Strategy}
\label{sec:opt-out-possible}
\ifx \fullversion \undefined \vspace{-0.45cm} \fi
In this section, we consider a version of the game considered in Section~\ref{sec:matching_pennies}. The revised version of the game we consider here is very similar to the original game, except for $A$'s ability to ``opt out'' and not guess $B$'s type. 

In this section, we consider the most general form of matrix payments. We replace the identity-matrix payments with general payment matrix $M$ of the form $M = \left[ \begin{array}{c|c}
M_{0,0} & -M_{0,1}\cr
\hline
-M_{1,0} & M_{1,1}
\end{array}\right]$
with the $(i,j)$ entry in $M$ means $A$ guessed $\tilde\type=i$ and $B$'s true type is $\type=j$, and so $B$ pays $A$ the amount detailed in the $(i,j)$-entry. We assume $M_{0,0}, M_{0,1}, M_{1,0}, M_{1,1}$ are all non-negative.

Indeed, when previously considering the identity matrix payments, we assumed the for $A$, realizing that $B$ has type $\type=0$ is worth just as much as finding $B$ has type $\type=1$. But it might be the case that finding a person of $\type=1$ should be more worthwhile for $A$. For example, type $\type=1$ (the minority, since we always assume $D_0\geq D_1$) may represent having some embarrassing medical condition while type $\type=0$ representing not having it. Therefore, $M_{1,1}$ can be much larger than $M_{0,0}$, but similarly $M_{1,0}$ is probably larger than $M_{0,1}$. (Falsely accusing $B$ of being of the embarrassing type is costlier than falsely accusing a $B$ of type $1$ of belonging to the non-embarrassing majority.) Our new payment matrix still motivates $A$ to find out $B$'s true type --- $A$ gains utility by correctly guessing $B$'s type, and loses utility by accusing $B$ of being of the wrong type.

\paragraph{The ``strawman'' game.} First, consider a simple game where $B$ makes no move ($A$ offers no coupon) and $A$ tries to guess $B$'s type without getting any signal from $B$. Then $A$ has three possible pure strategies: (i) guess that $B$ is of type $0$; (ii) guess that $B$ is of type $1$; and (iii) guess nothing. In expectation, the outcome of option (i) is $D_0 M_{0,0}-D_1 M_{0,1}$ and the outcome of option (ii) is $D_1 M_{1,1} - D_0 M_{1,0}$. If the parameters of $M$ are set such that both options are negative then $A$'s preferred strategy is to opt out and gain $0$. We assume throughout this section that indeed the above holds. (Intuitively, this assumption reflects the fact that we don't make assumptions about people's type without first getting any information about them.)
So we have
\ifx \fullversion \undefined \vspace{-3mm} \fi
\begin{eqnarray}
&&\frac {M_{0,0}}{M_{0,1}} < \frac {D_1}{D_0} \textrm{ , }\qquad \textrm{ and }~~~ \frac {M_{1,1}}{M_{1,0}} < \frac {D_0}{D_1} \label{eq:strawman_game_assumptions}
\end{eqnarray}
A direct (and repeatedly used) corollary of Equation~\eqref{eq:strawman_game_assumptions} is that $\tfrac {M_{0,0}}{M_{0,1}} < \tfrac {M_{1,0}}{M_{1,1}}$.

\paragraph{The full game.} We now give the formal description of the game.
\begin{enumerate}
\addtocounter{enumi}{-1}
\item $B$'s type, denoted $\type$, is chosen randomly, with $\Pr[\type=0]=D_0$ and $\Pr[\type=1]=D_1$.
\item $B$ reports a type $\hat\type$ to $A$. $A$ in return gives $B$ a coupon of type $\hat\type$.
\item $A$ chooses whether to accuse $B$ of being of a certain type, or opting out. 
\begin{itemize}
\item If $A$ opts out (denoted as $\tilde\type=\bot$), then $B$ pays $A$ nothing.
\item If $A$ accuses $B$ of being of type $\tilde\type$ then: if $\tilde\type=\type$ then $B$ pays $M_{\type,\type}$ to $A$, and if $\tilde\type=1-\type$ then $B$ pays $-M_{1-\type,\type}$ to $A$ (or $A$ pays $M_{1-\type,\type}$ to $B$).
\end{itemize}
\end{enumerate}
Introducing the option to opt out indeed changes significantly the BNE strategies of $A$ and $B$.
\newcommand{\NEWhichIsRR}{
If we have that $D_0^2 M_{0,0}M_{1,0} = D_1^2M_{0,1}M_{1,1}$ and the parameters of the game satisfy the following condition:
\begin{align}
&0 < \rho_1M_{1,0} - \rho_0M_{1,1} < M_{0,1}M_{1,0}-M_{0,0}M_{1,1} \cr
&0 < \rho_0M_{0,1} - \rho_1M_{0,0} <  M_{0,1}M_{1,0}-M_{0,0}M_{1,1} \label{eq:condition_of_randomized_BNE}
\end{align}
then the unique BNE strategy of $B$, denote $\sigma_B^*$,  is such that $B$ plays Randomized Response: $\tfrac 1 2 \leq \Pr[\sigma_B^*(0)=0]=\Pr[\sigma_B^*(1)=1] < 1$.}
\begin{theorem}
\label{thm:NE_which_is_RR}
\NEWhichIsRR
\end{theorem}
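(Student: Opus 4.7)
My plan is to parametrize $B$'s strategy by $p=\Pr[\sigma_B(0){=}0]$, $q=\Pr[\sigma_B(1){=}1]$ and $A$'s strategy after signal $\hat t = b$ by $x_b=\Pr[\sigma_A(b){=}0]$, $z_b=\Pr[\sigma_A(b){=}1]$ (so that $A$ opts out with the remaining mass). Computing $B$'s expected utilities as in Section~\ref{sec:matching_pennies}, the condition that both types of $B$ are indifferent between the two signals reduces to the linear system
\begin{align*}
(x_0-x_1)M_{0,0} + (z_1-z_0)M_{1,0} &= \rho_0,\\
(x_0-x_1)M_{0,1} + (z_1-z_0)M_{1,1} &= \rho_1.
\end{align*}
The strawman inequality from Eq.~\eqref{eq:strawman_game_assumptions} implies $M_{0,1}M_{1,0}>M_{0,0}M_{1,1}$, so this system is non-degenerate; solving it and using the hypothesis of Eq.~\eqref{eq:condition_of_randomized_BNE} shows that the unique solution satisfies $x_0-x_1 \in (0,1)$ and $z_1-z_0 \in (0,1)$.

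Next I would analyze $A$'s best response. After signal $\hat t = b$ with posterior $(P_0^{(b)},P_1^{(b)})$, the three actions $\{0,1,\bot\}$ yield expected utilities $P_0^{(b)}M_{0,0}-P_1^{(b)}M_{0,1}$, $P_1^{(b)}M_{1,1}-P_0^{(b)}M_{1,0}$, and $0$. Simultaneous indifference among all three would require $M_{0,0}M_{1,1}=M_{0,1}M_{1,0}$, contradicting the strawman inequality, so for each signal $A$ puts mass on at most two actions. The candidate Randomized Response profile corresponds to $A$ mixing accuse-$0$ with opt-out after $\hat t=0$ (so $z_0=0$) and accuse-$1$ with opt-out after $\hat t=1$ (so $x_1=0$). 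The two indifference conditions for $A$ then read
\begin{align*}
D_0 p\, M_{0,0} &= D_1(1-q)\,M_{0,1},\\
D_1 q\, M_{1,1} &= D_0(1-p)\,M_{1,0}.
\end{align*}
Enforcing $p=q$ in both simultaneously is exactly the condition $D_0^2 M_{0,0}M_{1,0}=D_1^2 M_{0,1}M_{1,1}$ in the theorem statement; under this condition the two equations collapse to $p/(1-p)=\sqrt{M_{0,1}M_{1,0}/(M_{0,0}M_{1,1})}$, giving a unique $p\in[1/2,1)$ by the strawman inequality. I would also check that plugging $(x_0,z_1) = (a,b)$ from the linear system above into $A$'s utility confirms that accuse-$0$ (resp.\ accuse-$1$) is indeed a best response after $\hat t=0$ (resp.\ $\hat t=1$), i.e.\ its payoff matches the opt-out value $0$ and weakly exceeds the other accusation.

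The main obstacle is uniqueness: I need to rule out every other BNE structure. This requires case analysis over (i) pure strategies or one-sided mixes by $B$, and (ii) alternative two-way mixes by $A$. For (i), in each $(p,q)\in\{0,1\}^2$ I would compute $A$'s best response from the posterior and verify that the hypothesis on $\rho_0,\rho_1$ (Eq.~\eqref{eq:condition_of_randomized_BNE}) gives some type a strictly profitable deviation; for instance, when $p=q=1$ the signal reveals the type, $A$ accuses truthfully, and the hypothesis that $\rho_t M_{1-t,t}<M_{0,1}M_{1,0}-M_{0,0}M_{1,1}$ makes the wrong-type deviation strictly better for at least one type. The cases with only one type mixed force the other type's indifference equation to fail once $A$'s best response is pinned down by the revealing side's signal. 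For (ii), a mix between accuse-$0$ and accuse-$1$ forces $x_b+z_b=1$ at some signal; plugging into $B$'s indifference equations and comparing with $A$'s utility at that posterior shows either the strawman inequality is violated or opt-out strictly dominates, contradicting the assumed support. Combining these eliminations, the Randomized Response profile above is the unique BNE, and the algebra pins down a single value of $p=q$.
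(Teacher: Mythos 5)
Your construction matches the paper's almost line for line: the same parametrization, the same reduction of $A$'s support via the separation $\tfrac{M_{1,1}D_1}{M_{1,0}D_0} < 1 < \tfrac{M_{0,1}D_1}{M_{0,0}D_0}$ implied by Equation~\eqref{eq:strawman_game_assumptions} (so after either signal $A$ never accuses the opposite type and never mixes the two accusations), the same linear system determining $A$'s equilibrium mix from the simultaneous indifference of both types of $B$ (your $(x_0-x_1,\,z_1-z_0)$ is exactly the paper's $(x_0^*,y_1^*)$ once $x_1=z_0=0$ is forced), the same pair of indifference lines $D_0pM_{0,0}=D_1(1-q)M_{0,1}$ and $D_1qM_{1,1}=D_0(1-p)M_{1,0}$ pinning down $B$'s strategy, and the same observation that $p=q$ is equivalent to $D_0^2M_{0,0}M_{1,0}=D_1^2M_{0,1}M_{1,1}$, with $p/(1-p)=\sqrt{M_{0,1}M_{1,0}/(M_{0,0}M_{1,1})}>1$. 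The existence half of your proposal is sound and correctly executed.

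The gap is in your uniqueness plan, which is where the paper spends essentially all of its effort (Theorem~\ref{thm:mixed_BNE_under_condition}, Table~\ref{tab:Conditions_for_NE}, and the exclusivity and coverage lemmas). For the pooling candidates $(p,q)=(0,1)$ and $(p,q)=(1,0)$, your recipe of ``computing $A$'s best response from the posterior'' does not apply: one signal is then never sent, the posterior at that signal is undefined, and in a BNE $A$'s action there is an arbitrary commitment constrained only by the requirement that it deter both types from deviating. These candidates cannot be eliminated by exhibiting a deviation against a fixed best response; one must show the \emph{deterrence system itself is infeasible}. For instance, $(0,1)$ survives iff there exist $x_0,x_1\in[0,1]$ with $\rho_0\le x_0M_{0,0}-x_1M_{1,0}$ and $\rho_1\ge x_0M_{0,1}-x_1M_{1,1}$, which holds iff $\rho_0\le M_{0,0}$ and $\rho_0M_{0,1}\le\rho_1M_{0,0}$ (Proposition~\ref{clm:conditions_for_P1}); the latter contradicts the strict inequality $\rho_0M_{0,1}-\rho_1M_{0,0}>0$ in Equation~\eqref{eq:condition_of_randomized_BNE}, and a symmetric argument handles $(1,0)$. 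Relatedly, your stated mechanism for the one-type-mixed candidates (the paper's $P_2$, $P_4$) is off: $A$ can generically make the \emph{mixed} type exactly indifferent; what fails under the strict upper bounds of Equation~\eqref{eq:condition_of_randomized_BNE} is the \emph{pure} type's weak-preference constraint (e.g., at $P_2$, indifference of type $0$ forces $y_1=(\rho_0-M_{0,0})/M_{1,0}$, and type $1$'s constraint then reads $\rho_1M_{1,0}-\rho_0M_{1,1}\ge M_{0,1}M_{1,0}-M_{0,0}M_{1,1}$, contradicting the hypothesis). Finally, your elimination of $(1,1)$ misquotes the hypothesis; the correct route is the identity $\rho_0(M_{0,1}M_{1,0}-M_{0,0}M_{1,1})=M_{0,0}(\rho_1M_{1,0}-\rho_0M_{1,1})+M_{1,0}(\rho_0M_{0,1}-\rho_1M_{0,0})$, which under the strict hypothesis gives $\rho_0<M_{0,0}+M_{1,0}$ (and similarly $\rho_1<M_{0,1}+M_{1,1}$), so some type strictly profits from lying when $A$ accuses truthfully. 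None of this is unfixable, but as written your step (i) would fail on the pooling cases, and uniqueness is precisely the part of the theorem that your constructive half does not deliver.
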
 
\noindent Proving Theorem~\ref{thm:NE_which_is_RR} is the goal of this section. 
\newcommand{\optoutGameDeferredToAPX}{ 
\ifx \cameraready \undefined
The proof itself is deferred to the next section. 
We give here, in Table~\ref{tab:Conditions_for_NE} a summary of the various BNEs of this game. The six cases detailed in Table~\ref{tab:Conditions_for_NE} cover all possible settings of the game and they are also mutually exclusive (unless some inequality holds as an equality). 
The notation in Table~\ref{tab:Conditions_for_NE} is consistent with our notation in the analysis of the game. A strategy $\sigma_B$ of agent $B$ is denoted as $(p,q)$ and a strategy $\sigma_A$ of agent $A$ is denoted as $(x_0,x_1,y_0,y_1)$. Formally, we denote $p=\Pr[\sigma_B(0)=0]$ and $q = \Pr[\sigma_B(1)=1]$; and $x_b = \Pr[\sigma_A(0)=b]$ and $y_b = \Pr[\sigma_A(1)=b]$ for $b\in\bits$. (So $A$'s opting out probabilities are $x_\bot = \Pr[\sigma_A(0)=\bot] = 1-x_0-x_1$ and $y_\bot = \Pr[\sigma_A(1)=\bot]=1-y_0-y_1$.)
\begin{table}[t]
\centering
\begin{tabular}{ | c | c | c | c |}
\hline
Case & Condition & $A$'s Strategy & $B$'s strategy \cr
No.& & (always: $x_1=y_0=0$) &\cr
\hline
$1$& $\rho_0 \geq M_{0,0}+M_{1,0}$ and~~~ $\rho_1\geq M_{0,1}+M_{1,1}$ & $(x_0,y_1)=(1,1)$ & $(1,1)$\cr\hline
$2$& $\rho_0 \leq M_{0,0}$ and~~~ $\tfrac{\rho_0}{\rho_1}\leq \tfrac{M_{0,0}}{M_{0,1}}$ & $(x_0,y_1) = (\tfrac {\rho_0}{M_{0,0}},0)$ & $P_1=(0,1)$ \cr\hline
$3$& $0 \leq \rho_0-M_{0,0}\leq M_{1,0}$ & $(x_0,y_1) = (1,\tfrac {\rho_0-M_{0,0}}{M_{1,0}})$ & $P_2$ \cr
& $\rho_1M_{1,0}-\rho_0M_{1,1} \geq M_{0,1}M_{1,0} -M_{0,0}{M_{1,1}}$ & &\cr\hline
$4$& $\rho_1 \leq M_{1,1}$ and~~~ $\tfrac{\rho_0}{\rho_1} \geq \tfrac{M_{1,0}}{M_{1,1}}$ & $(x_0,y_1)=(0,\tfrac{\rho_1}{M_{1,1}})$ & $P_3=(1,0)$ \cr\hline
$5$& $0 \leq \rho_1 - M_{1,1} \leq M_{0,1}$ & $(x_0,y_1)=(\tfrac {\rho_1-M_{1,1}}{M_{0,1}},1)$ & $P_4$ \cr
& $\rho_0M_{0,1}-\rho_1M_{0,0} \geq M_{0,1}M_{1,0}-M_{0,0}M_{1,1}$ & &\cr\hline
$6$& $0 \leq \rho_1M_{1,0} - \rho_0M_{1,1} \leq M_{0,1}M_{1,0}-M_{0,0}M_{1,1}$ & See below & $P_5$ \cr
& $0 \leq \rho_0M_{0,1} - \rho_1M_{0,0} \leq   M_{0,1}M_{1,0}-M_{0,0}M_{1,1}$ & & \cr\hline
\end{tabular}
\caption{\label{tab:Conditions_for_NE}\small The various conditions under which we characterize the BNEs of the Game. We use the notation $P_2 = (1 - \frac {D_1M_{1,1}}{D_0 M_{1,0}} , 1)$, $P_4 = (1, 1-\frac {D_0 M_{0,0}}{D_1M_{0,1}})$, and $P_5 = \left( \frac {D_0D_1M_{0,1}M_{1,0} - D_1^2 M_{0,1}M_{1,1}} {D_0D_1 M_{0,1}M_{1,0} - D_0 D_1 M_{0,0}M_{1,1}} , \frac {D_0D_1M_{0,1}M_{1,0} - D_0^2 M_{0,0}M_{1,0}} {D_0D_1 M_{0,1}M_{1,0} - D_0 D_1 M_{0,0}M_{1,1}} \right)$. The point $P_5$ lies at the intersection between two specific lines, and points $P_2$ and $P_4$ are the intersection points of each of those lines with the $(q=1)$-line and $(p=1)$-line resp. In case $6$, the strategy of $A$ is given by
$(x_0,y_1) = {\frac 1 {M_{1,0}M_{0,1} - M_{0,0}M_{1,1}}} (-M_{1,1}\rho_0 + M_{1,0}\rho_1, M_{0,1} \rho_0 - M_{0,0}\rho_1)$.}
\end{table}

The various conditions given in Table~\ref{tab:Conditions_for_NE} are \emph{feasibility} conditions. They guarantee that $A$ is able to find a strategy $(x_0,x_1,y_0,y_1)\in [0,1]^4$ that cause at least one of the two types of $B$ agent to be indifferent as to the signal she sends. Case $6$, which is the case relevant to Theorem~\ref{thm:NE_which_is_RR}, can be realized starting with any matrix $M$ satisfying $M_{0,0}M_{1,1} < M_{0,1}M_{1,0}$ (which is a necessary condition derived from Equation~\eqref{eq:strawman_game_assumptions}), which intuitively can be interpreted as having a wrong ``accusation''  being costlier than the gain from a correct ``accusation'' (on average and in absolute terms). Given such $M$, one can set $D_0$ and $D_1$ s.t. $\tfrac {D_0}{D_1} =\sqrt{\tfrac {M_{0,1}}{M_{0,0}}\cdot \tfrac {M_{1,1}} {M_{1,0}}}$ as to satisfy Equation~\eqref{eq:strawman_game_assumptions}. This can be interpreted as balancing the ``significance'' of type $0$ (i.e. $M_{0,0}M_{0,1}$) with the ``significance'' of type $1$ (i.e. $M_{1,0}M_{1,1}$), setting the more significant type as the less probable (i.e. if type $1$ is more significant than type $0$, than $D_1 < D_0$). We then pick $\rho_0,\rho_1$ that satisfy $\tfrac {M_{1,1}}{M_{1,0}}< \tfrac {\rho_1}{\rho_0}< \tfrac {M_{0,1}}{M_{0,0}} $ and scale both by the sufficiently small multiplicative factor so we satisfy the other inequality of case $6$. (In particular, setting $\tfrac {\rho_1}{\rho_0}=\tfrac {D_0}{D_1}$ is a feasible solution.)
Here, $\rho_0$ and $\rho_1$ are set such that the ratio $\tfrac {\rho_1}{\rho_0}$ balances the significance ratio w.r.t type $1$ accusation (i.e. $\tfrac {\rho_1}{\rho_0} > \tfrac {M_{1,1}}{M_{0,0}}$) and the ratio $\tfrac {\rho_0}{\rho_1}$ balances the significance ratio w.r.t to type $0$ accusation (i.e. $\tfrac {\rho_0}{\rho_1} > \tfrac {M_{0,0}}{M_{1,0}}$).
More concretely, for any matrix $M = \left( \begin{array}{c c}
1 & c \cr c & d
\end{array} \right)$ with parameters $c,d$ satisfying $d < c^2$, we can set $\tfrac {D_0}{D_1} = \sqrt{d}$ and any sufficiently small $\rho_0,\rho_1$ satisfying $\tfrac {\rho_1}{\rho_0} \in (\tfrac d c, c)$ and satisfy the requirements of Theorem~\ref{thm:NE_which_is_RR}.

\else
The proof itself is deferred to the full version of this paper, where we also give a complete summary of the various BNEs of this game. We detail $6$ different cases that cover all possible settings of the game. Each of these $6$ cases is defined by a different \emph{feasibility} condition. These conditions guarantee that $A$ is able to find a strategy that cause at least one of the two types of $B$ agent to be indifferent as to the signal she sends. 

The feasibility condition detailed in Equation~\eqref{eq:condition_of_randomized_BNE} can be realized starting with any matrix $M$ satisfying $M_{0,0}M_{1,1} < M_{0,1}M_{1,0}$ (which is a necessary condition derived from Equation~\eqref{eq:strawman_game_assumptions}), which intuitively can be interpreted as having a wrong ``accusation''  being costlier than the gain from a correct ``accusation'' (on average and in absolute terms). Given such $M$, one can set $D_0$ and $D_1$ s.t. $\tfrac {D_0}{D_1} =\sqrt{\tfrac {M_{0,1}}{M_{0,0}}\cdot \tfrac {M_{1,1}} {M_{1,0}}}$ as to satisfy Equation~\eqref{eq:strawman_game_assumptions}. This can be interpreted as balancing the ``significance'' of type $0$ (i.e. $M_{0,0}M_{0,1}$) with the ``significance'' of type $1$ (i.e. $M_{1,0}M_{1,1}$), setting the more significant type as the less probable (i.e. if type $1$ is more significant than type $0$, than $D_1 < D_0$). We then pick $\rho_0,\rho_1$ that satisfy $\tfrac {M_{1,1}}{M_{1,0}}< \tfrac {\rho_1}{\rho_0}< \tfrac {M_{0,1}}{M_{0,0}} $ and scale both by the sufficiently small multiplicative factor so we satisfy the other inequality in Equation~\eqref{eq:condition_of_randomized_BNE}. (In particular, setting $\tfrac {\rho_1}{\rho_0}=\tfrac {D_0}{D_1}$ is a feasible solution.)
Here, $\rho_0$ and $\rho_1$ are set such that the ratio $\tfrac {\rho_1}{\rho_0}$ balances the significance ratio w.r.t type $1$ accusation (i.e. $\tfrac {\rho_1}{\rho_0} > \tfrac {M_{1,1}}{M_{0,0}}$) and the ratio $\tfrac {\rho_0}{\rho_1}$ balances the significance ratio w.r.t to type $0$ accusation (i.e. $\tfrac {\rho_0}{\rho_1} > \tfrac {M_{0,0}}{M_{1,0}}$).
More concretely, for any matrix $M = \left( \begin{array}{c c}
1 & c \cr c & d
\end{array} \right)$ with parameters $c,d$ satisfying $d < c^2$, we can set $\tfrac {D_0}{D_1} = \sqrt{d}$ and any sufficiently small $\rho_0,\rho_1$ satisfying $\tfrac {\rho_1}{\rho_0} \in (\tfrac d c, c)$ and satisfy the requirements of Theorem~\ref{thm:NE_which_is_RR}.

\fi
} \optoutGameDeferredToAPX
\newcommand{\apxOptOut}{
Recall, we assume $\Pr[t=0]=D_0$ and $\Pr[t=1]=D_1$ where wlog $D_0\geq D_1$. As we did before, we denote $B$'s strategy $\sigma_B$ using $p = \Pr[\sigma_B(0)=0]$ and $q = \Pr[\sigma_B(1)=1]$. In contrast to the previous analysis, now $A$ has to decide between $3$ alternatives per $\hat\type$ signal, so $A$ has $6$ options. However, seeing as $A$'s choice to opt-out always give $A$ a utility of $0$, we just denote $4$ alternatives:
\begin{eqnarray*}
x_0 = \Pr[\sigma_A(0)=0] , && x_1=\Pr[\sigma_A(0)=1] \cr
y_0 = \Pr[\sigma_A(1)=0] , && y_1=\Pr[\sigma_A(1)=1] 
\end{eqnarray*}
and we constrain $x_0+x_1 \leq 1$ and $y_0 + y_1 \leq 1$.\footnote{Whereas in the previous section we constrained $x_0+x_1=1$ and $y_0+y_1=1$.} 

Now, given that $A$ views a signal $\hat\type$, she has $3$ alternatives:
\begin{itemize}
\item Accuse $B$ of being of type $0$ and get an expected revenue of 
\begin{align*}
& M_{0,0} \Pr[t=0 ~|~ \hat\type] - M_{0,1} \Pr[\type=1 ~|~ \hat\type]\cr
& = \frac 1 {\Pr[\hat\type]} \left( M_{0,0} \Pr[t=0]\Pr[\sigma_B(0) = \hat\type] - M_{0,1} \Pr[\type=1]\Pr[\sigma_B(1)= \hat\type]  \right)\cr
& = \frac 1 {\Pr[\hat\type]} \left( M_{0,0} D_0 \Pr[\sigma_B(0) = \hat\type] - M_{0,1} D_1 \Pr[\sigma_B(1) = \hat\type]\right)
\end{align*}
\item Accuse $B$ of being of type $1$ and get an expected revenue of
\begin{align*}
&M_{1,1} \Pr[t=1 ~|~ \hat\type] - M_{1,0} \Pr[\type=0 ~|~ \hat\type] \cr
&=\frac 1 {\Pr[\hat\type]} \left( M_{1,1} \Pr[t=1]\Pr[\sigma_B(1) = \hat\type] - M_{1,0} \Pr[\type=0]\Pr[\sigma_B(0)= \hat\type]  \right)\cr
& = \frac 1 {\Pr[\hat\type]} \left( M_{1,1} D_1 \Pr[\sigma_B(1) = \hat\type] - M_{1,0} D_0 \Pr[\sigma_B(0) = \hat\type]\right)
&\end{align*}
\item Opt out and get revenue of $0 = \frac 0 {\Pr[\hat\type]}$.
\end{itemize}
This means that $A$ prefers accusing $B$ of being of type $0$ to opting out when
\[ \Pr[\sigma_B(0) = \hat\type] > \frac{M_{0,1} D_1}{M_{0,0} D_0}\Pr[\sigma_B(1) = \hat\type] 
\] 
Similarly, $A$ prefers accusing $B$ of being of type $1$ to opting out when
\[ \Pr[\sigma_B(0) = \hat\type] < \frac{M_{1,1} D_1}{M_{1,0} D_0}\Pr[\sigma_B(1) = \hat\type] 
\]
From Equation~\eqref{eq:strawman_game_assumptions} we have that $\tfrac {M_{1,1}D_1}{M_{1,0}D_0} < 1 < \tfrac {M_{0,1}D_1}{M_{0,0}D_0}$. Therefore, \emph{given that $\Pr[\hat\type]>0$}, then $A$'s best response is determined by the ratio:
\begin{equation*}
\frac {\Pr[\sigma_B(0)=\hat\type]}{\Pr[\sigma_B(1)=\hat\type]} 
\begin{cases}
< \frac {M_{1,1}D_1}{M_{1,0}D_0}, & A \textrm{ plays } \Pr[\sigma_A(\hat\type) = 1] = 1 \cr
 = \frac {M_{1,1}D_1}{M_{1,0}D_0}, & A \textrm{ is indifferent between }\bot\textrm{ and playing } \tilde\type=1\cr
 \in (\frac {M_{1,1}D_1}{M_{1,0}D_0}, \frac {M_{0,1}D_1}{M_{0,0}D_0}) & A \textrm{ plays } \Pr[\sigma_A(\hat\type)=\bot]=1\cr
 = \frac {M_{0,1}D_1}{M_{0,0}D_0}, & A \textrm{ is indifferent between }\bot\textrm{ and playing } \tilde\type=0\cr
 > \frac {M_{0,1}D_1}{M_{0,0}D_0} & A \textrm{ plays } \Pr[\sigma_A(\hat\type)=0]=1
\end{cases}
\end{equation*}
Therefore $A$'s BNE strategy when viewing the signal $\hat\type$ (which is best response to $B$'s BNE strategy) is such that $A$ never plays both $\tilde\type = \hat\type$ and $\tilde\type=1-\hat\type$ with non-zero probability.

Switching to the $B$ agent, the utility functions of $B$ are similar to before:
\begin{align*}
\textrm{For type } \type=0:~~~ & U_{B,0} = p(\rho_0 - x_0 M_{0,0} + x_1 M_{1,0}) + (1-p) (-y_0 M_{0,0} + y_1 M_{1,0}) \cr
\textrm{For type } \type=1:~~~ & U_{B,1} = q(\rho_1 - y_1 M_{1,1} + y_0 M_{0,1}) + (1-q) (-x_1 M_{1,1} + x_0 M_{0,1}) 
\end{align*}
and so $p =1$ if $\rho_0 > M_{0,0}(x_0-y_0) - M_{1,0}(x_1-y_1)$ and $p=0$ if $\rho_0 < M_{0,0}(x_0-y_0) - M_{1,0}(x_1-y_1)$; similarly $q=1$ if $\rho_1 >  M_{1,1}(y_1-x_1)-M_{0,1}(y_0-x_0)$ and $q=0$ if $\rho_1 <  M_{1,1}(y_1-x_1)-M_{0,1}(y_0-x_0)$.

We can now make our first claim about the BNE of the game.
\begin{claim}
In any BNE strategy of $B$ we have that either 
\[ \frac p {1-q} = \frac {\Pr[\sigma_B^*(0)=0]}{\Pr[\sigma^*_B(1)=0]} \geq \frac {M_{0,1}D_1}{M_{0,0}D_0} ~~\textrm{ or }~~ \frac {1-p} {q} = \frac {\Pr[\sigma_B^*(0)=1]}{\Pr[\sigma^*_B(1)=1]} \leq \frac {M_{1,1}D_1}{M_{1,0}D_0}\] 
\end{claim}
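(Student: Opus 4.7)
The plan is a proof by contradiction that first dispatches the main case where both signals $\hat\type \in \{0,1\}$ occur with positive probability, and then handles the two degenerate profiles $(p,q) \in \{(0,1), (1,0)\}$ where one signal is off-path.

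In the main case, set $r_0 := p/(1-q)$ and $r_1 := (1-p)/q$ and suppose, toward contradiction, that both
\[ r_0 < \frac{M_{0,1} D_1}{M_{0,0} D_0} \qquad \text{and} \qquad r_1 > \frac{M_{1,1} D_1}{M_{1,0} D_0}. \]
The first step invokes the five-way classification of $A$'s best response derived immediately before the claim: the strict inequality on $r_0$ rules out $A$ placing positive mass on $\tilde\type = 0$ after $\hat\type = 0$, forcing $x_0 = \Pr[\sigma_A(0)=0] = 0$; symmetrically, the strict inequality on $r_1$ forces $y_1 = \Pr[\sigma_A(1)=1] = 0$. Substituting $x_0 = y_1 = 0$ into $B$'s utility functions stated just before the claim yields
\[ U_{B,0} = p(\rho_0 + x_1 M_{1,0}) - (1-p)\, y_0 M_{0,0}, \qquad U_{B,1} = q(\rho_1 + y_0 M_{0,1}) - (1-q)\, x_1 M_{1,1}. \]
Each expression is strictly increasing in its strategy variable, since $\rho_0, \rho_1 > 0$ and the remaining coefficients are non-negative, so $B$'s unique best response is $p = q = 1$. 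But this gives $p/(1-q) = +\infty$, directly contradicting the strict upper bound assumed on $r_0$.

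The degenerate profiles $(p,q) \in \{(0,1), (1,0)\}$ are the only BNE strategies of $B$ for which one signal is sent with probability zero, so that one of the ratios takes the form $0/0$. In such a configuration the off-path side of the claim is to be read vacuously, since $A$'s action at an off-path signal is not constrained by best response; the on-path ratio then automatically lies on the ``correct'' side of its threshold, because otherwise the same substitution argument would give $B$ of the off-path type a strict incentive to deviate and send the off-path signal, contradicting the assumed zero probability. The main obstacle in writing this up cleanly is articulating the vacuous reading of $0/0$ consistently and handling boundary cases where a ratio equals a threshold exactly; the core driver, however, is simple: positive coupon values $\rho_t$ always incentivize truthful reporting whenever $A$ refuses to accuse correctly after both signals.
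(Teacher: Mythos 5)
Your main case is, almost word for word, the paper's entire proof: assume both conditions fail strictly, use the five-way best-response classification to force $x_0=\Pr[\sigma_A(0)=0]=0$ and $y_1=\Pr[\sigma_A(1)=1]=0$, observe that both types of $B$ then strictly prefer $p=q=1$ because $\rho_0,\rho_1>0$, and note that $(1,1)$ satisfies the conditions, a contradiction. Your substituted utilities and the strict-monotonicity step are correct.

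The degenerate-case paragraph, however, contains a genuine error. Your assertion that at a profile $(p,q)\in\{(0,1),(1,0)\}$ ``the on-path ratio automatically lies on the correct side of its threshold'' is false, and the deviation argument offered for it does not work in a BNE. Take $(p,q)=(0,1)$: the paper shows this \emph{is} a BNE strategy of $B$ whenever $\rho_0\leq M_{0,0}$ and $\rho_0 M_{0,1}\leq \rho_1 M_{0,0}$ (case $2$ of Table~\ref{tab:Conditions_for_NE}, the point $P_1$, with $A$ playing $x_0=\rho_0/M_{0,0}$ and $x_1=y_0=y_1=0$). There the on-path signal is $\hat\type=1$, and its ratio is $\tfrac{1-p}{q}=1$, which by Equation~\eqref{eq:strawman_game_assumptions} is \emph{strictly greater} than $\tfrac{M_{1,1}D_1}{M_{1,0}D_0}$ --- the wrong side of the second condition's threshold. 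No contradiction ensues from this: the wrong-side ratio only forces $y_1=0$ on path, pushing $B$ toward the off-path signal, but in a BNE $A$'s reply to a zero-probability signal is unconstrained by best response, and it is precisely $A$'s off-path choice $x_0=\rho_0/M_{0,0}$ that deters the deviation --- a deviating type-$1$ agent earns $x_0 M_{0,1}=\rho_0 M_{0,1}/M_{0,0}\leq\rho_1$, and type $0$ is exactly indifferent. So the ``strict incentive to deviate and send the off-path signal'' you invoke simply is not there.

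What actually rescues the claim at the degenerate points is the \emph{other} side of the disjunction --- the $0/0$ ratio --- read in cross-multiplied form: the two conditions are really $M_{0,0}D_0\,p\geq M_{0,1}D_1(1-q)$ and $M_{1,0}D_0(1-p)\leq M_{1,1}D_1\,q$ (i.e., $(p,q)$ lies on or above the line $l_1$, resp.\ $l_2$), and $(0,1)$, resp.\ $(1,0)$, satisfies the first, resp.\ second, with equality $0\geq 0$, resp.\ $0\leq 0$. With this reading no separate degenerate analysis is needed at all: strict failure of the first condition requires $1-q>0$, and strict failure of the second requires $1-p>0$, so both signals occur with positive probability and your main case is already exhaustive. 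In short, the patch is not to argue about the on-path ratio but to interpret the claim as ``no BNE lies strictly below both indifference lines,'' under which your (and the paper's) contradiction argument covers everything.
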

\begin{proof}
Assume for the same of contradiction that both conditions do not hold. Then given the $\hat\type=0$ signal it holds that $x_0 = \Pr[\sigma^*_A(0)=0] = 0$, and given the $\hat\type=1$ signal it holds that $y_1 = \Pr[\sigma^*_A(1)=1]=0$. Thus, $B$'s best response to $A$'s strategy is to switch to $(p,q)=(1,1)$ (since $\rho_0,\rho_1>0$) and now both conditions do hold.
\end{proof}
\begin{claim}
In any BNE strategy of $B$ we have that both
\[ \frac p {1-q} = \frac {\Pr[\sigma_B^*(0)=0]}{\Pr[\sigma^*_B(1)=0]} > \frac {M_{1,1}D_1}{M_{1,0}D_0} ~~\textrm{ and }~~ \frac {1-p} {q} = \frac {\Pr[\sigma_B^*(0)=1]}{\Pr[\sigma^*_B(1)=1]} < \frac {M_{0,1}D_1}{M_{0,0}D_0}\] 
\end{claim}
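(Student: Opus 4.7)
The plan is to reach a contradiction from either failure, leveraging the previous claim together with the baseline assumption~\eqref{eq:strawman_game_assumptions}. For convenience write $\alpha := \tfrac{M_{1,1}D_1}{M_{1,0}D_0}$ and $\beta := \tfrac{M_{0,1}D_1}{M_{0,0}D_0}$; then~\eqref{eq:strawman_game_assumptions} immediately gives $\alpha < 1 < \beta$, and in particular $\alpha < \beta$.

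First, I would assume that $\tfrac{p}{1-q} \leq \alpha$. Since $\alpha < \beta$, this strictly violates the first alternative of the previous claim, so the second alternative must hold: $\tfrac{1-p}{q} \leq \alpha$. Clearing denominators turns these two inequalities into $p + \alpha q \leq \alpha$ and $p + \alpha q \geq 1$, whose combination yields $\alpha \geq 1$, contradicting $\alpha < 1$. The other case, $\tfrac{1-p}{q} \geq \beta$, is completely symmetric: the previous claim now forces $\tfrac{p}{1-q} \geq \beta$, giving $\beta \leq p + \beta q \leq 1$ and thus $\beta \leq 1$, contradicting $\beta > 1$.

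The one subtlety I would handle separately is the degenerate regime where a ratio has the form $\tfrac{0}{0}$ (for which the paper adopts the convention $\tfrac{0}{0} = 1$). These correspond to $B$-strategies that pool both types onto a single signal or separate them onto opposite signals, namely $(p,q) \in \{(0,0),(0,1),(1,0),(1,1)\}$. For $(1,1)$, $(1,0)$ and $(0,1)$ the two ratios evaluate to combinations of $+\infty$, $0$, and $1$ under the convention, and $\alpha < 1 < \beta$ then makes both strict inequalities in the claim immediate. The remaining profile $(0,0)$ is ruled out as a BNE by direct best-response analysis: it induces $A$'s best response $(x_1,y_0)=(1,1)$ (since $A$ can perfectly invert $B$'s signal to deduce her type), against which either type of $B$ strictly prefers to deviate to the matching signal. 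I expect the main obstacle to be just the bookkeeping of these boundary profiles --- the linear-algebra core of the argument is immediate.
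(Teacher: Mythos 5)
Your proof is correct and takes essentially the same route as the paper's: writing $\alpha = \tfrac{M_{1,1}D_1}{M_{1,0}D_0}$ and $\beta = \tfrac{M_{0,1}D_1}{M_{0,0}D_0}$, both arguments combine the previous claim with $\alpha < 1 < \beta$ (from Equation~\eqref{eq:strawman_game_assumptions}) and clear denominators so that summing the two ratio inequalities yields $1 \leq \alpha$ or $\beta \leq 1$, a contradiction --- you merely argue contrapositively from the failure of each target inequality, where the paper cases directly on the disjunction of the previous claim. Your separate bookkeeping of the degenerate corner profiles (including ruling out $(0,0)$, which the previous claim already excludes) is sound and in fact slightly more careful than the paper's two-line computation, which passes over these boundary strategies silently.
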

\begin{proof}
Based on the previous claim, one of the two inequalities is immediate. Assume we have $\frac {p}{1-q} \geq \frac {M_{0,1}D_1}{M_{0,0}D_0} > 1 > \frac {M_{1,1}D_1}{M_{1,0}D_0}$, we now show that $\frac {1-p}q < \frac {M_{0,1}D_1}{M_{0,0}D_0}$ must also hold. If, for contradiction, we have that   $\frac {1-p}q \geq \frac {M_{0,1}D_1}{M_{0,0}D_0}$ then
\[ 1 = p + (1-p) \geq \frac {M_{0,1}D_1}{M_{0,0}D_0} \big(q + (1-q)\big) = \frac {M_{0,1}D_1}{M_{0,0}D_0}\] which contradicts Equation~\eqref{eq:strawman_game_assumptions}. The argument for the case $\frac {1-p}q \leq \frac {M_{1,1}D_1}{M_{1,0}D_0}$ is symmetric.
\end{proof}
Based on the last claim and on $A$'s best-response analysis, we have that in any BNE strategy of $A$ it holds that $x_1 = \Pr[\sigma^*_A(0)=1] = 0$ and $y_0 = \Pr[\sigma^*_A(1)=0]=0$. (I.e., given the signal $\hat\type$ then $A$ never plays $\tilde\type = 1-\hat\type$.) As a result, $B$'s best response analysis simplifies to: $p =1$ if $\rho_0 > M_{0,0}x_0 + M_{1,0}y_1$ and $p=0$ if $\rho_0 < M_{0,0}x_0 + M_{1,0}y_1$; similarly $q=1$ if $\rho_1 >  M_{1,1}y_1+M_{0,1}x_0$ and $q=0$ if $\rho_1 <  M_{1,1}y_1+M_{0,1}x_0$.

We are now able to prove the existence of a BNE as specified Theorem~\ref{thm:NE_which_is_RR}.
\begin{claim}
Assume that $0 \leq \rho_1M_{1,0} - \rho_0M_{1,1} \leq M_{0,1}M_{1,0}-M_{0,0}M_{1,1}$ and $0 \leq \rho_0M_{0,1} - \rho_1M_{0,0} \leq   M_{0,1}M_{1,0}-M_{0,0}M_{1,1}$. The strategies $\sigma_A^*$ and $\sigma_B^*$ denoted below are BNE strategies.
\begin{align*}
\textrm{For } A: & x_0^* &= \Pr[\sigma^*_A(0)=0] &= \frac {M_{1,0}\rho_1-M_{1,1}\rho_0} {M_{1,0}M_{0,1} - M_{0,0}M_{1,1}} \cr
& x_1^* &= \Pr[\sigma^*_A(0)=1] &= 0 \cr
& y_0^* &= \Pr[\sigma^*_A(1) = 0] &=0 \cr
& y_1^* &= \Pr[\sigma^*_A(1)= 1] &= \frac {M_{0,1} \rho_0 - M_{0,0}\rho_1} {M_{1,0}M_{0,1} - M_{0,0}M_{1,1}} \cr
\textrm{For } B: & p^* &= \Pr[\sigma^*_B(0) = 0] &= \frac {D_1M_{0,1}(D_0M_{1,0} - D_1M_{1,1})} {D_0D_1 M_{0,1}M_{1,0} - D_0 D_1 M_{0,0}M_{1,1}}\cr
&1-p^* &= \Pr[\sigma^*_B(0)=1] &= \frac {D_1M_{1,1}(D_1M_{0,1}- D_0 M_{0,0})} {D_0D_1 M_{0,1}M_{1,0} - D_0 D_1 M_{0,0}M_{1,1}}\cr
&1-q^* &= \Pr[\sigma^*_B(1)=0] &= \frac {D_0M_{0,0}(D_0M_{1,0}- D_1 M_{1,1})} {D_0D_1 M_{0,1}M_{1,0} - D_0 D_1 M_{0,0}M_{1,1}}\cr
&q^* &= \Pr[\sigma^*_B(1)=1] &= \frac {D_0M_{1,0}(D_1M_{0,1} - D_0 M_{0,0})} {D_0D_1 M_{0,1}M_{1,0} - D_0 D_1 M_{0,0}M_{1,1}} 
\end{align*}
\end{claim}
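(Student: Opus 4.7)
}
The plan is to work inside the best-response framework for this game and show that, under the three listed hypotheses (the symmetry equality $D_0^2 M_{0,0}M_{1,0}=D_1^2M_{0,1}M_{1,1}$ together with the two strict chains in~\eqref{eq:condition_of_randomized_BNE}), the mutual-indifference equations of the game have a unique solution which coincides with Randomized Response. First I would invoke the two preliminary claims that precede the theorem in the appendix setup: in any BNE, $A$ never plays $\tilde\type = 1-\hat\type$, i.e.\ $x_1^*=y_0^*=0$. This reduces $A$'s remaining options, per signal, to ``accuse truthfully'' versus ``opt out'', and correspondingly collapses $B$'s best-response rules to $\rho_0 \gtrless M_{0,0}x_0 + M_{1,0}y_1$ and $\rho_1 \gtrless M_{0,1}x_0 + M_{1,1}y_1$.

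Next I would argue that in any BNE both types of $B$ must randomize, i.e.\ $p^*,q^* \in (0,1)$. This is the main obstacle, and the place where the strict inequalities in~\eqref{eq:condition_of_randomized_BNE} really do the work. Concretely, I would go through each of the candidate ``pure'' or ``semi-pure'' equilibria given by the six-case table in the appendix. The cases where $B$ plays a pure strategy $(0,1)$ or $(1,0)$ (cases~$2$ and~$4$) require $\rho_0/\rho_1 \le M_{0,0}/M_{0,1}$ or $\rho_0/\rho_1 \ge M_{1,0}/M_{1,1}$, both of which are excluded by the strict positivity clauses $\rho_1 M_{1,0}>\rho_0 M_{1,1}$ and $\rho_0 M_{0,1}>\rho_1 M_{0,0}$. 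The fully informative case~$1$ (both types report truthfully) requires $\rho_t \ge M_{0,t}+M_{1,t}$, which is incompatible with the upper bounds in~\eqref{eq:condition_of_randomized_BNE}. Cases~$3$ and~$5$ (one pure type, one mixed type) are excluded by those same upper bounds, via the fact that in those cases $A$'s equilibrium $y_1^*$ or $x_0^*$ would have to hit the boundary $\{0,1\}$. This leaves only the fully-mixed case.

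Once both types are indifferent, the two linear indifference equations for $B$, namely $\rho_0 = M_{0,0}x_0 + M_{1,0}y_1$ and $\rho_1 = M_{0,1}x_0 + M_{1,1}y_1$, have determinant $M_{0,0}M_{1,1}-M_{0,1}M_{1,0}\neq 0$ (nonzero by the strawman assumption~\eqref{eq:strawman_game_assumptions}), so they pin down $(x_0^*, y_1^*)$ uniquely, and the strict bounds in~\eqref{eq:condition_of_randomized_BNE} place both coordinates in $(0,1)$. Therefore $A$ herself must be indifferent between ``accuse'' and ``opt out'' at each signal, yielding
\[
M_{0,0}D_0\, p^* \;=\; M_{0,1}D_1\,(1-q^*), \qquad M_{1,1}D_1\, q^* \;=\; M_{1,0}D_0\,(1-p^*).
\]
Setting $\alpha = \tfrac{M_{0,1}D_1}{M_{0,0}D_0}$ and $\beta = \tfrac{M_{1,0}D_0}{M_{1,1}D_1}$, these become $p^* = \alpha(1-q^*)$ and $q^* = \beta(1-p^*)$, whose unique solution is $p^* = \tfrac{\alpha(1-\beta)}{1-\alpha\beta}$ and $q^* = \tfrac{\beta(1-\alpha)}{1-\alpha\beta}$.

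Finally I would bring in the symmetry hypothesis $D_0^2 M_{0,0}M_{1,0} = D_1^2 M_{0,1}M_{1,1}$: a direct calculation gives $\alpha/\beta = \tfrac{M_{0,1}M_{1,1}D_1^2}{M_{0,0}M_{1,0}D_0^2} = 1$, i.e.\ $\alpha=\beta$. Substituting into the formulas above immediately yields $p^* = q^* = \tfrac{\alpha}{1+\alpha}$. The strawman inequality $\tfrac{M_{0,0}}{M_{0,1}} < \tfrac{D_1}{D_0}$ from~\eqref{eq:strawman_game_assumptions} rearranges to $\alpha>1$, so $p^* = q^* > \tfrac12$, and since $\alpha$ is finite we also get $p^*=q^*<1$. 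This is precisely the Randomized Response condition $\Pr[\sigma_B^*(0)=0] = \Pr[\sigma_B^*(1)=1] \in [\tfrac12, 1)$, and uniqueness follows from the fact that all six candidate cases except the fully-mixed one have been ruled out and that the fully-mixed case pins down $(p^*, q^*, x_0^*, y_1^*)$ uniquely via the indifference linear systems.
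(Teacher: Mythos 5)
Your plan is correct in substance, but it takes a different route from the paper's proof of this particular claim, and it is really a proof of the surrounding Theorem~\ref{thm:NE_which_is_RR} rather than of the claim as stated. The paper's proof here is pure guess-and-verify: it takes the closed-form profile as given, observes that the (weak) hypotheses put $x_0^*,y_1^*\in[0,1]$ and that Equation~\eqref{eq:strawman_game_assumptions} forces $p^*,q^*\in(0,1)$, and then checks mutual best responses directly --- $B$'s mixture places $A$ exactly at her two indifference ratios $p^*/(1-q^*)=M_{0,1}D_1/(M_{0,0}D_0)$ and $(1-p^*)/q^*=M_{1,1}D_1/(M_{1,0}D_0)$, while $A$'s mixture solves $M_{0,0}x_0^*+M_{1,0}y_1^*=\rho_0$ and $M_{0,1}x_0^*+M_{1,1}y_1^*=\rho_1$, leaving both types of $B$ indifferent. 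You instead \emph{derive} the profile: you eliminate the pure and semi-pure candidates of Table~\ref{tab:Conditions_for_NE} using the strict inequalities and then solve the two indifference systems; your $\alpha,\beta$ parametrization is a genuinely nicer packaging (I checked that $p^*=\alpha(1-\beta)/(1-\alpha\beta)$, $q^*=\beta(1-\alpha)/(1-\alpha\beta)$ reproduce the paper's $P_5$, and that $p^*=q^*=\alpha/(1+\alpha)\in(\tfrac12,1)$ under the symmetry condition, matching $\epsilon=\ln\alpha$). In effect you have compressed the paper's proof of the full theorem (this claim plus Theorem~\ref{thm:mixed_BNE_under_condition} and Lemma~\ref{lem:conditions_are_exclusive}), which buys uniqueness that the claim alone does not assert. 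Two caveats to make your argument airtight: first, the claim is stated with weak inequalities and needs no elimination at all --- direct verification also covers boundary parameter values where other BNEs coexist and where your exclusion step fails; second, ruling out all other cases shows only that \emph{any} BNE must be your solved point, which yields existence only if you either verify that point directly (exactly the content of this claim, left implicit in your ``mutual-indifference'' framing) or invoke BNE existence for finite Bayesian games. In particular, spell out that at $(p^*,q^*)$ the false accusation $\tilde\type=1-\hat\type$ is strictly worse than opting out for $A$ --- this uses $p^*/(1-q^*)>1>M_{1,1}D_1/(M_{1,0}D_0)$ from Equation~\eqref{eq:strawman_game_assumptions} --- so that $\sigma_A^*$ with $x_1^*=y_0^*=0$ is indeed a best response and not merely consistent with the preliminary claims about equilibria in general.
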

\begin{proof}
First, observe that under the given assumptions in the claim it holds that $x_0^*,y_1^* \in [0,1]$,  and due to Equation~\eqref{eq:strawman_game_assumptions} it holds that $p^*,q^*,1-p^*,1-q^*$ are all strictly positive (so $p^*,q^* \in (0,1)$). 

Now observe that when $B$ follows $\sigma^*_B$ then $A$ has no incentive to deviate since $\frac {p^*} {1-q^*} = \frac {M_{0,1}D_1}{M_{0,0}D_0}$ and $\frac {1-p^*}{q^*}=\frac {M_{1,1}D_1}{M_{1,0}D_0}$. When $A$ follows $\sigma^*_A$ then $B$ has no incentive to deviate since
\begin{align*}
M_{0,0} x_0^* + M_{1,0}y_1^* & = \frac{M_{1,0}M_{0,1}\rho_0 - M_{0,0}M_{1,1}\rho_0 } {M_{1,0}M_{0,1}-M_{0,0}M_{1,1}} &=\rho_0 \cr
M_{0,1} x_0^* + M_{1,1}y_1^* & = \frac{M_{0,1}M_{1,0}\rho_1 - M_{1,1}M_{0,0}\rho_1 } {M_{1,0}M_{0,1}-M_{0,0}M_{1,1}} &=\rho_1
\end{align*}
\end{proof}
Observe that when $D_0^2M_{0,0}M_{1,0} = D_1^2M_{0,1}M_{1,1}$ then $p^*=q^*$. Furthermore, in this case we have that $p^*> \tfrac 1 2$ because 
\begin{align*}
&2D_0D_1M_{0,1}M_{1,0} - 2D_1^2M_{0,1}M_{1,1} > D_0D_1M_{0,1}M_{1,0}-D_0D_1M_{0,0}M_{1,1}\cr
& \Leftrightarrow
D_0D_1M_{0,1}M_{1,0} - D_1^2M_{0,1}M_{1,1} > D_0^2M_{0,0}M_{1,0} -D_0D_1M_{0,0}M_{1,1} \cr
&\Leftrightarrow D_1M_{0,1}(D_0M_{1,0}-D_1M_{1,1}) > D_0M_{0,0}(D_0M_{1,0}-D_1M_{1,1})\cr
&\Leftrightarrow 0 > -D_1M_{0,1} + D_0M_{0,0}
\end{align*}
where the last derivation and the last inequality are both true because of Equation~\eqref{eq:strawman_game_assumptions}. This concludes the existence part of Theorem~\ref{thm:NE_which_is_RR}. The more complicated part is to show that $B$'s BNE strategy is \emph{unique}. We make the following argument.
\begin{theorem}
\label{thm:mixed_BNE_under_condition}
Assume that $0 < \rho_1M_{1,0} - \rho_0M_{1,1} < M_{0,1}M_{1,0}-M_{0,0}M_{1,1}$ and $0 < \rho_0M_{0,1} - \rho_1M_{0,0} <  M_{0,1}M_{1,0}-M_{0,0}M_{1,1}$. Then in all BNEs of the game both types of $B$ agent play a mixed strategy.
\end{theorem}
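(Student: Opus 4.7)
The plan is to exhaustively rule out every BNE in which at least one of $p^*,q^*$ lies in $\{0,1\}$, leaving only the fully mixed case. Denote the two strict double-inequalities in the hypothesis by ($\mathsf{A}$): $0<\rho_1M_{1,0}-\rho_0M_{1,1}<M_{0,1}M_{1,0}-M_{0,0}M_{1,1}$ and ($\mathsf{B}$): $0<\rho_0M_{0,1}-\rho_1M_{0,0}<M_{0,1}M_{1,0}-M_{0,0}M_{1,1}$, and recall that Equation~\eqref{eq:strawman_game_assumptions} has the consequence $M_{0,0}M_{1,1}<M_{0,1}M_{1,0}$. The two earlier claims of the section give that in any BNE $A$ plays $x_1^*=y_0^*=0$ and, whenever the denominators are nonzero, $p^*/(1-q^*)>M_{1,1}D_1/(M_{1,0}D_0)$ and $(1-p^*)/q^*<M_{0,1}D_1/(M_{0,0}D_0)$. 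These ratio inequalities directly rule out $(p^*,q^*)=(0,0)$, $(0,q^*)$ with $q^*\in[0,1)$, and $(p^*,0)$ with $p^*\in[0,1)$. The five remaining pure-coordinate candidates are $(1,0)$, $(0,1)$, $(1,1)$, $(p^*,1)$ with $p^*\in(0,1)$, and $(1,q^*)$ with $q^*\in(0,1)$.

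For $(1,0)$, every $B$-agent sends $\hat\type=0$, so $A$'s posterior on that signal equals the prior and by Equation~\eqref{eq:strawman_game_assumptions} $A$ strictly prefers to opt out, i.e.\ $x_0^*=0$. The value $y_1^*$ is off-path for $A$, but for $(1,0)$ to be $B$'s best response we need $\rho_0\ge M_{1,0}y_1^*$ (so type~$0$ prefers $p^*=1$) and $\rho_1\le M_{1,1}y_1^*$ (so type~$1$ prefers $q^*=0$), which together force $\rho_1M_{1,0}\le\rho_0M_{1,1}$, contradicting the lower bound of ($\mathsf{A}$). The case $(0,1)$ is symmetric and contradicts ($\mathsf{B}$).

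For $(p^*,1)$ with $p^*\in(0,1)$, the ratio $p^*/(1-q^*)$ is infinite, forcing $x_0^*=1$; mixing of type~$0$ then requires the indifference $\rho_0=M_{0,0}+M_{1,0}y_1^*$, which pins down $y_1^*=(\rho_0-M_{0,0})/M_{1,0}$. Substituting into the best-response condition $\rho_1\ge M_{0,1}x_0^*+M_{1,1}y_1^*$ needed for $q^*=1$ yields $\rho_1M_{1,0}-\rho_0M_{1,1}\ge M_{0,1}M_{1,0}-M_{0,0}M_{1,1}$, contradicting the upper bound of ($\mathsf{A}$). The mirror case $(1,q^*)$ violates ($\mathsf{B}$) analogously.

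The delicate case is $(1,1)$, where $A$ plays $x_0^*=y_1^*=1$ and $B$'s best responses demand $\rho_0\ge M_{0,0}+M_{1,0}$ and $\rho_1\ge M_{0,1}+M_{1,1}$. Writing $a=\rho_0-M_{0,0}-M_{1,0}\ge 0$ and $b=\rho_1-M_{0,1}-M_{1,1}\ge 0$, conditions ($\mathsf{A}$) and ($\mathsf{B}$) simplify after cancellation to $bM_{1,0}<aM_{1,1}$ and $aM_{0,1}<bM_{0,0}$ respectively. If $a=0$ the first inequality forces $b<0$; if $b=0$ the second forces $a<0$; so $a,b>0$, and multiplying the two strict inequalities gives $M_{0,1}M_{1,0}<M_{0,0}M_{1,1}$, contradicting the consequence $M_{0,0}M_{1,1}<M_{0,1}M_{1,0}$ of Equation~\eqref{eq:strawman_game_assumptions}. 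This combined use of both strict hypotheses together with the strawman assumption is the main obstacle; each of the other four candidates collapses under just one of ($\mathsf{A}$), ($\mathsf{B}$), or the prior claims.
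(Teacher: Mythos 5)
Your proof is correct, and in substance it runs on the same computations as the paper's, but it is organized as a leaner, necessity-only argument. The paper proves the theorem as a corollary of a complete characterization: it shows via the two indifference lines that $B$'s BNE strategy can only be one of the isolated points $P_1,\dots,P_5$ or the corner $(1,1)$ (plus segments that require parameter \emph{equalities}, which your strict hypotheses exclude automatically), derives an if-and-only-if feasibility condition for each candidate (Propositions~\ref{clm:conditions_for_P1}--\ref{clm:conditions_for_P4} and Table~\ref{tab:Conditions_for_NE}), and then proves the six conditions mutually exclusive (Lemma~\ref{lem:conditions_are_exclusive}), leaving only the mixed point $P_5$ under Case~6. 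You instead hit each boundary candidate with exactly one strict hypothesis: your $(1,0)$ and $(0,1)$ contradictions are the necessity halves of Propositions~\ref{clm:conditions_for_P3} and~\ref{clm:conditions_for_P1} against the lower bounds of $(\mathsf{A})$ and $(\mathsf{B})$; your $(p^*,1)$ and $(1,q^*)$ contradictions are the necessity halves of Propositions~\ref{clm:conditions_for_P2} and~\ref{clm:conditions_for_P4} against the upper bounds; and your $(1,1)$ computation ($bM_{1,0}<aM_{1,1}$, $aM_{0,1}<bM_{0,0}$, multiply) is an equivalent repackaging of the identity $\rho_0(M_{0,1}M_{1,0}-M_{0,0}M_{1,1})=M_{0,0}(\rho_1M_{1,0}-\rho_0M_{1,1})+M_{1,0}(\rho_0M_{0,1}-\rho_1M_{0,0})$ that the paper uses to exclude Case~1. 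What your route buys is brevity---no Table, no covering or exclusivity lemmas, no sufficiency directions; what the paper's buys is the full BNE landscape across all parameter regimes, which it uses elsewhere in the section.

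One step needs patching. In the cases $(1,0)$ and $(0,1)$ the unsent signal is off-path, and in a BNE (as opposed to a PBE) $A$'s action at a zero-probability signal is unconstrained by her own optimality; the blanket claim $x_1^*=y_0^*=0$ that you cite is justified only by $A$'s best-response analysis, which the paper explicitly conditions on $\Pr[\hat\type]>0$. Indeed, in its own treatment of $P_1$ and $P_3$ the paper leaves \emph{both} off-path components free (``$A$ needs to commit to a $y_0,y_1$''). You noticed this freedom for $y_1^*$ but still pinned $y_0^*=0$. The repair is one line: with $y_0\geq 0$ free, the no-deviation conditions $\rho_0\geq M_{1,0}y_1-M_{0,0}y_0$ and $\rho_1\leq M_{1,1}y_1-M_{0,1}y_0$ combine to give $\rho_0M_{1,1}\geq \rho_1M_{1,0}+y_0\left(M_{0,1}M_{1,0}-M_{0,0}M_{1,1}\right)\geq \rho_1M_{1,0}$, using the consequence $M_{0,1}M_{1,0}>M_{0,0}M_{1,1}$ of Equation~\eqref{eq:strawman_game_assumptions}---the same contradiction with the lower bound of $(\mathsf{A})$; the symmetric computation with $x_1\geq 0$ free handles $(0,1)$ against $(\mathsf{B})$. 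In all the remaining boundary cases both signals occur with positive probability, so there your appeals to the ratio claims and to the forced responses $x_0^*=1$ and $y_1^*=1$ are sound.
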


Assuming Theorem~\ref{thm:mixed_BNE_under_condition} holds and using our above best-responsse analysis the uniqueness of the BNE of Theorem~\ref{thm:NE_which_is_RR} is immediate. If both $p$ and $q$ are non-integral then it must hold that $x_0^*$ and $y_1^*$, since this is the unique solution to the system of two linear equations in two variables that set both types of $B$ agent indifferent. Under the assumption of Theorem~\ref{thm:mixed_BNE_under_condition} we have that both $x_0^*$ and $y_1^*$ are non-integral as well. This means that $B$ plays $(p,q)$ s.t. $A$ is indifferent to the value of $x_0^*,y_1^*$; i.e. $p = (1-q) \frac{M_{0,1}D_1}{M_{0,0}D_0}$ and $1-p = q \frac {M_{1,1}D_1}{M_{1,0}D_0}$. Again, since this is a linear system in two variables, there exists a unique $(p,q)$ pair that satisfy this condition, which is given by $(p^*,q^*)$. 

In the rest of this section (following t, our goal is to prove the Theorem~\ref{thm:mixed_BNE_under_condition}. In fact, we give a full analysis of all the points $(p,q)$ that \emph{may} be $B$'s BNE strategy, and for each such possible $(p,q)$ we analyze the conditions over the parameters of the game underwhich it is a BNE strategy for $B$. The analysis is fairly long and tedious, as it involves checking feasibility constraints over the $6$ parameters of the game: $\rho_0$, $\rho_1$, $M_{0,0}$, $M_{0,1}$, $M_{1,0}$ and $M_{1,1}$. Furthermore, after deriving the suitable feasibility constraints, we show that they cover all settings of the parameters of the game and are mutually exclusive (when inequalities are strict).

\cut{
\subsection{Utilities and Best-Response Analysis}
\label{subsec:NE_analysis}

Recall, we assume $\Pr[t=0]=D_0$ and $\Pr[t=1]=D_1$ where wlog $D_0\geq D_1$. As we did before, we denote $B$'s strategy $\sigma_B$ using $p = \Pr[\sigma_B(0)=0]$ and $q = \Pr[\sigma_B(1)=1]$. In contrast to the previous analysis, now $A$ has to decide between $3$ alternatives per $\hat\type$ signal, so $A$ has $6$ options. However, seeing as $A$'s choice to opt-out always give $A$ a utility of $0$, we just denote $4$ alternatives:
\begin{eqnarray*}
x_0 = \Pr[\sigma_A(0)=0] , && x_1=\Pr[\sigma_A(0)=1] \cr
y_0 = \Pr[\sigma_A(1)=0] , && y_1=\Pr[\sigma_A(1)=1] 
\end{eqnarray*}
and we constrain $x_0+x_1 \leq 1$ and $y_0 + y_1 \leq 1$.\footnote{Whereas in the previous section we constrained $x_0+x_1=1$ and $y_0+y_1=1$.} 

The utility functions of $B$ are much like they were before:
\begin{align*}
\textrm{For type } \type=0:~~~ & U_{B,0} = p(\rho_0 - x_0 M_{0,0} + x_1 M_{1,0}) + (1-p) (-y_0 M_{0,0} + y_1 M_{1,0}) \cr
\textrm{For type } \type=1:~~~ & U_{B,1} = q(\rho_1 - y_1 M_{1,1} + y_0 M_{0,1}) + (1-q) (-x_1 M_{1,1} + x_0 M_{0,1}) 
\end{align*}
The utility function of $A$ too is very similar to before:
\begin{align*}
u_A &  = D_0 \big( p(x_0M_{0,0} - x_1 M_{1,0}) + (1-p)(y_0 M_{0,0} - y_1 M_{1,0})  \big)\cr
& ~~~+ D_1\big(  q(y_1M_{1,1}-y_0M_{0,1}) + (1-q)( x_1M_{1,1}-x_0M_{0,1} )  \big) \cr
& = x_0 \left( D_0 p M_{0,0} - D_1(1-q)M_{0,1}\right) + x_1 \left( -D_0 p M_{1,0} + D_1(1-q)M_{1,1}\right) \cr
& ~~~+ y_0 \left( D_0 (1-p) M_{0,0} - D_1qM_{0,1}\right) + y_1 \left(- D_0(1- p) M_{1,0} + D_1q M_{1,1}\right)
\end{align*}

In order to determine what it $A$'s best response to a strategy $(p,q)$ of $B$, we compare the utility from $x_0$, $x_1$ and opting out, and similarly for $y_0$, $y_1$ and opting out.
\begin{itemize}
\item When $\E[ u_A \textrm{ from setting } \sigma_A(0)=0 ~|~ \hat\type=0] > \E[ u_A \textrm{ from setting } \sigma_A(0)=1 ~|~ \hat\type=0]$ (i.e. when $D_0pM_{0,0} -D_1(1-q)M_{0,1} > D_1(1-q)M_{1,1}-D_0pM_{1,0}$), then $A$ prefers playing $\sigma_A(0)=0$ to $\sigma_A(0)=1$, (i.e., $x_1=0$).\\
When the inequality holds in the opposite direction, i.e. $D_0pM_{0,0} -D_1(1-q)M_{0,1} < D_1(1-q)M_{1,1}-D_0pM_{1,0}$, then $x_0=0$.
\item When $\E[ u_A \textrm{ from setting } \sigma_A(0)=0 ~|~ \hat\type=0] < 0$ (i.e. when $D_0pM_{0,0} -D_1(1-q)M_{0,1} < 0$), then $A$ prefers opting-out to playing $\sigma_A(0)=0$ (i.e. $x_0=0$).
\item When $\E[ u_A \textrm{ from setting } \sigma_A(0)=1 ~|~ \hat\type=0] < 0$ (i.e. when $-D_0pM_{1,0} +D_1(1-q)M_{1,1} < 0$), then $A$  prefers opting-out to playing $\sigma_A(0)=1$ (i.e. $x_1=0$).
\end{itemize}
\begin{itemize}
\item When $\E[ u_A \textrm{ from setting } \sigma_A(1)=1 ~|~ \hat\type=1] > \E[ u_A \textrm{ from setting } \sigma_A(1)=0 ~|~ \hat\type=1]$ (i.e. when $D_0 (1-p) M_{0,0} - D_1qM_{0,1} < - D_0(1- p) M_{1,0} + D_1q M_{1,1}$), then $A$ prefers playing $\sigma_A(1)=1$ to $\sigma_A(1)=0$, (i.e., $y_0=0$).\\
When the inequality holds in the opposite direction, i.e. $D_0 (1-p) M_{0,0} - D_1qM_{0,1} > - D_0(1- p) M_{1,0} + D_1q M_{1,1}$, then $y_1=0$.
\item When $\E[ u_A \textrm{ from setting } \sigma_A(1)=1 ~|~ \hat\type=1] < 0$ (i.e. when $- D_0(1- p) M_{1,0} + D_1q M_{1,1} < 0$), then $A$ prefers opting-out to playing $\sigma_A(1)=1$ (i.e. $y_1=0$).
\item When $\E[ u_A \textrm{ from setting } \sigma_A(1)=0 ~|~ \hat\type=1] < 0$ (i.e. when $D_0 (1-p) M_{0,0} - D_1qM_{0,1} < 0$), then $A$  prefers opting-out to playing $\sigma_A(0)=1$ (i.e. $y_0=0$).
\end{itemize}

\begin{align*}
& \textrm{When } D_0pM_{0,0} -D_1(1-q)M_{0,1} > D_1(1-q)M_{1,1}-D_0pM_{1,0} & \textrm{ then } x_1 = 0, \cr
& \textrm{ and when } D_0pM_{0,0} -D_1(1-q)M_{0,1} < D_1(1-q)M_{1,1}-D_0pM_{1,0} & \textrm{ then } x_0 = 0; \cr
& \textrm{When } D_0pM_{0,0} -D_1(1-q)M_{0,1} < 0 & \textrm{ then } x_0 = 0; \cr
& \textrm{When } -D_0pM_{1,0} +D_1(1-q)M_{1,1} < 0 & \textrm{ then } x_1 = 0. \cr
&&\cr
& \textrm{When } D_0 (1-p) M_{0,0} - D_1qM_{0,1} > - D_0(1- p) M_{1,0} + D_1q M_{1,1} & \textrm{ then } y_1 = 0, \cr
& \textrm{ and when } D_0 (1-p) M_{0,0} - D_1qM_{0,1} < - D_0(1- p) M_{1,0} + D_1q M_{1,1} & \textrm{ then } y_0 = 0; \cr
& \textrm{When } D_0 (1-p) M_{0,0} - D_1qM_{0,1} < 0 & \textrm{ then } y_0 = 0; \cr
& \textrm{When } - D_0(1- p) M_{1,0} + D_1q M_{1,1} < 0 & \textrm{ then } y_1 = 0. \cr
\end{align*}
We turn to examining the ``lines of indifference'' -- the set of strategies $(p,q)$ that cause $A$ to be indifferent between at least two strategies.
\begin{align*}
l_1 :~~~ & D_0pM_{0,0} -D_1(1-q)M_{0,1}=0 \cr
l_2 :~~~ & D_0p(M_{0,0}+M_{1,0}) -D_1(1-q)(M_{0,1}+M_{1,1}) = 0 \cr
l_3 :~~~ & -D_0pM_{1,0} +D_1(1-q)M_{1,1} = 0\cr
\cr
l_4 :~~~ & - D_0(1- p) M_{1,0} + D_1q M_{1,1} = 0\cr
l_5 :~~~ & D_0(1- p) (M_{0,0}+M_{1,0}) - D_1q (M_{0,1}+M_{1,1}) = 0  \cr
l_6 :~~~ & D_0 (1-p) M_{0,0} - D_1qM_{0,1} = 0 \cr
\end{align*}
Clearly, the point $(0,1)$ lie on all $3$ lines $l_1,l_2,l_3$. Some arithmetic show that the following points reside on each of these lines: $(\tfrac {D_1}{D_0}\tfrac{M_{0,1}}{M_{0,0}} ,0) \in l_1$, $(\tfrac {D_1}{D_0}\tfrac{M_{0,1}+M_{1,1}}{M_{0,0}+M_{1,0}},0)\in l_2$, $(\tfrac {D_1}{D_0}\tfrac{M_{1,1}}{M_{1,0}},0)\in l_3$. Using the assumptions given in~\eqref{eq:strawman_game_assumptions}, we have that $\tfrac {D_1}{D_0}\tfrac{M_{0,1}}{M_{0,0}} > 1 > \tfrac {D_1}{D_0}\tfrac{M_{1,1}}{M_{1,0}}$, so the line $l_1$ lies above the line $l_3$. Next we use the fact that \[\max\{\tfrac{M_{0,1}}{M_{0,0}},\tfrac{M_{1,1}}{M_{1,0}} \} \geq \tfrac{M_{0,1}+M_{1,1}}{M_{0,0}+M_{1,0}} \geq \min\{\tfrac{M_{0,1}}{M_{0,0}},\tfrac{M_{1,1}}{M_{1,0}} \}\footnote{Proof: for any positive $a,b,c,d$ we have $a+b = c\tfrac a c + d\tfrac b d \leq \max\{\tfrac a c,\tfrac b d \}(c+d)$.} \] to deduce that the line $l_2$ is ``sandwiched'' in between $l_1$ and $l_3$. A schematic description of the three lines is given in Figure~\ref{fig:indifference_lines}.
\begin{figure}[t]
\centering
\includegraphics[scale=0.65]{indifference_lines.jpg}
\caption{\label{fig:indifference_lines}\small The lines on which $A$ is indifferent between at least two options. Lines $l_1,l_2,l_3$ (in blue) affect $A$'s behavior when seeing the $\hat\type=0$ signal, and lines $l_4,l_5,l_6$ affect $A$'s behavior when seeing the $\hat\type=1$ signal.}
\end{figure}

Following $A$'s best-response strategies we deduce that whenever $B$ plays a strategy $(p,q)$ such that:
\begin{itemize}
\item $(p,q)$ lies below the $l_3$ line, then $A$ plays $x_1=1$, $x_0=0$. I.e., in response to the signal $\hat\type=0$, $A$ prefers to accuse $B$ of being of type $\tilde\type=1$.
\item $(p,q)$ lies between the $l_1$ and $l_3$ lines, then $A$ plays $x_0=x_1=0$. I.e., in response to the signal $\hat\type=0$, $A$ prefers to opt out.
\item $(p,q)$ lies above the $l_1$ line, then $A$ plays $x_0=1$ and $x_1=0$. I.e., in response to the signal $\hat\type=0$, $A$ prefer to accuse $B$ of being of types $\tilde\type=0$.
\end{itemize} 

Lines $l_4,l_5,l_6$ are the symmetric analogue of lines $l_1,l_2,l_3$. Following the same line of reasoning, $l_4$ lies above the other two lines. As best response to any $(p,q)$ above line $l_4$ we have that $A$ sets $y_1=1$; and for any $(p,q)$ below line $l_4$, $A$ sets $y_1=0$. 
For the sake of completeness we give additional details below, but the reader can skip to Section~\ref{subsec:NE_locations}.

Analogously, the point $(1,0)$ lies at the intersection of lines $l_4, l_5, l_6$. And again, the following points lie on the following lines: $(0, \tfrac {D_0}{D_1}\tfrac{M_{1,0}}{M_{1,1}} )\in l_4$, $(0,\tfrac {D_0}{D_1}\tfrac{M_{0,0}+M_{1,0}}{M_{0,1}+M_{1,1}})\in l_5$, $(0, \tfrac {D_0}{D_1}\tfrac{M_{0,0}}{M_{0,1}} ) \in l_6$. We use the assumptions given in~\eqref{eq:strawman_game_assumptions} to deduce that  $\tfrac {D_0}{D_1}\tfrac{M_{1,0}}{M_{1,1}}>1>\tfrac {D_0}{D_1}\tfrac{M_{0,0}}{M_{0,1}}$, and the fact that
\[ \max\{\tfrac{M_{0,0}}{M_{0,1}},\tfrac{M_{1,0}}{M_{1,1}} \} \geq \tfrac{M_{0,0}+M_{1,0}}{M_{0,1}+M_{1,1}} \geq \min\{\tfrac{M_{0,0}}{M_{0,1}},\tfrac{M_{1,0}}{M_{1,1}}\} \] to deduce that $l_4$ is above $l_6$ and $l_5$ is ``sandwiched'' between them. This too is shown in Figure~\ref{fig:indifference_lines}. Similar best-response strategies hold for a strategy $(p,q)$ of $B$ such that
\begin{itemize}
\item $(p,q)$ lies below the $l_6$ line, then $A$ plays $y_0=1$, $y_1=0$. I.e., in response to the signal $\hat\type=1$, $A$ prefers to accuse $B$ of being of type $\tilde\type=0$.
\item $(p,q)$ lies between the $l_4$ and $l_6$ lines, then $A$ plays $y_0=y_1=0$. I.e., in response to the signal $\hat\type=1$, $A$ prefers to opt out.
\item $(p,q)$ lies above the $l_4$ line, then $A$ plays $y_1=1$ and $y_0=0$. I.e., in response to the signal $\hat\type=1$, $A$ prefer to accuse $B$ of being of types $\tilde\type=1$.
\end{itemize} 
}

\subsection{Proof of Theorem~\ref{thm:NE_which_is_RR}: Characterizing All Potential BNEs of the Game}
\label{subsec:NE_locations}

Consider the space $[0,1]\times[0,1]$ of all possible strategies $(p,q)$ for the two types of  $B$ agents. We denote the two ``lines of indifference'' for $A$ on this square:
\begin{eqnarray*}
l_1 : &p = \frac {M_{0,1}D_1}{M_{0,0}D_0} (1-q) \cr
l_2 : &1-p = \frac {M_{1,1}D_1}{M_{0,1}D_1} q
\end{eqnarray*}
where $(p,q)=(0,1) \in l_1$ and $(p,q)=(1,0) \in l_2$. These lines partition the $[0,1]\times[0,1]$ square into multiple different regions, as shown in Figure~\ref{fig:non_NE_areas}).

\begin{figure}[t]
\centering
\includegraphics[scale=0.6]{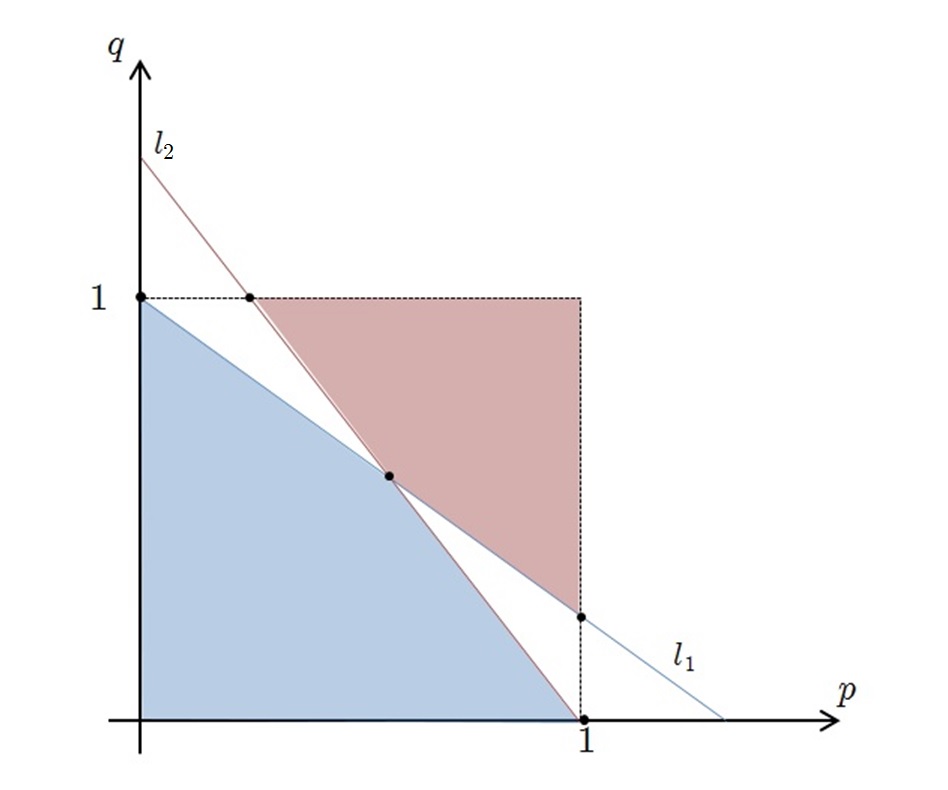}
\caption{\small The 4 regions created by the lines $l_1$ and $l_2$, and the $5$ intersection points of the two lines and the borders of the squares.\label{fig:non_NE_areas}}
\end{figure}

First, we argue that any $(p,q)$ in the lower region, below $l_1$ and $l_2$ (the shaded blue region in Figure~\ref{fig:non_NE_areas}) cannot be $B$'s strategy in a BNE. We in fact already shown this: for any $(p,q)$ in this blue region we have that $\frac p {1-q} < \frac {M_{0,1}D_1}{M_{0,0}D_0}$ and $\frac {1-p} q < \frac {M_{1,1}D_1}{M_{0,1}D_1}$, contradicting our earlier claim.


Secondly, we argue that a point $(p,q)$ above both lines is $B$'s strategy in a BNE is if the valuation of $B$ for the coupons is high. Observe, for any $(p,q)$ above the line, $A$'s best response is to set $x_0=y_1=1$, which means $B$'s utility is $p(\rho_0-M_{0,0})+(1-p)M_{1,0}$ for agents of type $\type=0$, and $q(\rho_1-M_{1,1})+(1-q)M_{0,1}$ for type $\type=1$. Therefore, $B$ has no incentive to deviate from $(p,q)$ only if $\rho_0 \geq M_{0,0}+M_{1,0}$ and $\rho_1 \geq M_{0,1}+M_{1,1}$. In particular, when both inequalities are strict, we have that $(1,1)$ is $B$'s BNE; when both are equalities, any point above both $l_1$ and $l_2$ is a BNE; and when one is an equality and the other is a strict inequality, we have that the BNE strategy is on the border of the $[0,1]\times[0,1]$ square.

We now turn our attention to the points strictly between the lines $l_1$ and $l_2$ (excluding all points on these lines). To any $(p,q)$ in these regions, $A$'s best response is to play $\tilde\type=\hat\type$ when seeing one signal and to opt-out when seeing the other signal. E.g., for any $(p,q)$ above $l_1$ but below $l_2$ (the top-left region), $A$ opts out when seeing the $\hat\type=1$ signal, but play $\tilde\type=0$ when seeing the $\hat\type=0$ signal. In that case, $B$'s utility function is $p(\rho_0 - M_{0,0})$ for type $\type=0$, and $q\rho_1 + (1-q)M_{0,1}$for type $\type=1$. Therefore, unless $\rho_0=M_{0,0}$, agents of type $\type=0$ have incentive to deviate (either to playing $p=0$ or $p=1$). In addition, it must also hold that $\rho_1\geq M_{0,1}$. (If this inequality is strict, then the BNE strategy lies on the border of the square.) Analogously, should the BNE strategy lie above the $l_2$ line but below the $l_1$ line (lower-right area), then it must be the case that $\rho_1=M_{1,1}$ and $\rho_0 \geq M_{1,0}$.

We now consider points on the $l_1$ and $l_2$, excluding the $5$ intersection points we have with either the two lines, or any of the lines intersection with the $p=1$ line or the $q=1$ line.
\begin{itemize}
\item For any $(p,q)$ on the $l_1$ line below the $l_2$ line (top-left side of $l_1$ bordering the blue region): $A$'s best response to any such $(p,q)$ is to set $y_0=y_1=0$ and $x_1=0$. It follows that $B$'s utility is $p(\rho_0 - x_0M_{0,0})$ and $q\rho_1 + (1-q) x_0 M_{0,1}$ for types $\type=0$ and $\type=1$ resp. Hence, if $0 \leq \tfrac {\rho_0}{M_{0,0}} = \tfrac {\rho_1}{M_{0,1}} \leq 1$ then such strategies can be BNE
\item For any $(p,q)$ on the $l_1$ line above the $l_2$ line (bottom-right side of $l_1$ bordering the red region): $A$'s best response to such $(p,q)$ is to set $x_1=y_0=0$, $y_1=1$; $B$'s utility functions are $p(\rho_0-x_0M_{0,0}) + (1-p) M_{1,0}$ and $q(\rho_1 - M_{1,1})+ (1-q)x_0M_{0,1}$ for types $\type=0$ and $\type=1$ resp. It follows that if $0 \leq \tfrac{ \rho_0-M_{1,0} } {M_{0,0}} = \tfrac {\rho_1-M_{1,1}}{M_{0,1}}\leq 1$, then such point give a BNE
\item For any $(p,q)$ on the $l_2$ line below the $l_1$ line (bottom-right side of $l_2$ bordering the blue region): As a response to any $(p,q)$ here, $A$ sets $x_0=x_1=0$ and $y_0=0$. $B$'s utility function is therefore $p\rho_0+(1-p) (y_1M_{1,0})$ and $q(\rho_1-y_1M_{1,1})$. Hence, in case $0\leq \tfrac {\rho_0}{M_{1,0}} = \tfrac {\rho_1}{M_{1,1}} \leq 1$, we have a BNE with such $(p,q)$. 
\item For any $(p,q)$ on the $l_2$ line above the $l_1$ line (top-left side of $l_2$ bordering the red region): As best response to such $(p,q)$, $A$ sets $x_1=y_0=1$ and $x_0=1$. And so $B$'s utility functions are $p(\rho_0 - M_{0,0})+(1-p)y_1M_{1,0}$ and $q(\rho_1 - y_1 M_{1,1} +(1-q)M_{0,1}$. Therefore, if we have $0\leq \tfrac {\rho_0 - M_{0,0}}{M_{1,0}} = \tfrac {\rho_1-M_{0,1}}{M_{1,1}} \leq 1$ then we have a BNE with such $(p,q)$.
\end{itemize}
Thus far (with the exception of the potential BNE at the point $(p,q)=(1,1)$) we have considered only BNE that may arise only when the parameters of the game ($\rho_0,\rho_1$ and the entries of $M$) satisfy some equality constraints. Assuming we perturb the values of $\rho_0$ and $\rho_1$ a little such that none of the above-mentioned equalities hold, we are left with $5$ points on which the BNE can occur:
\begin{align*}
& P_1= (0,1) ~,~~~ P_2 = (1 - \frac {D_1M_{1,1}}{D_0 M_{1,0}} , 1) ~,~~~ P_3 = (1,0) ~,~~~ P_4 = (1, 1-\frac {D_0 M_{0,0}}{D_1M_{0,1}}) \cr
& P_5 = \left( \frac {D_0D_1M_{0,1}M_{1,0} - D_1^2 M_{0,1}M_{1,1}} {D_0D_1 M_{0,1}M_{1,0} - D_0 D_1 M_{0,0}M_{1,1}} , \frac {D_0D_1M_{0,1}M_{1,0} - D_0^2 M_{0,0}M_{1,0}} {D_0D_1 M_{0,1}M_{1,0} - D_0 D_1 M_{0,0}M_{1,1}} \right) 
\end{align*}
We traverse them one by one. We remind the reader that since $\tfrac {M_{0,1}}{M_{0,0}} > \tfrac {D_0}{D_1} > \tfrac {M_{1,1}}{M_{1,0}}$ then $M_{0,0}M_{1,1} < M_{0,1}M_{1,0}$. We repeatedly use this inequality in the analysis below.

\ifx \fullversion\undefined
\paragraph{Conditions under which $B$'s BNE strategy is $P_1$.} 
\else
\subsubsection{Conditions under which $B$'s BNE strategy is $P_1$.}
\fi
Observe that in this case $B$ never sends the $\hat\type=0$ signal. $A$'s best response in naturally to opt-out, but $A$ still commits to a certain values of $x_0$ and $x_1$ (to prevent $B$ from deviating from the $(0,1)$ strategy). $B$'s utility functions are
\begin{eqnarray*}
\textrm{For } \type=0 : && p(\rho_0 - x_0M_{0,0}+x_1 M_{1,0}) \cr
\textrm{For } \type=1 : && q\rho_1+(1-q)(x_0M_{0,1}-x_1 M_{1,1}) 
\end{eqnarray*}
So $A$ should set $x_0$ and $x_1$ s.t. $\rho_0 \leq x_0M_{0,0}-x_1M_{1,0}$ and $\rho_1 \geq x_0 M_{0,1} - x_1 M_{1,1}$.

\begin{proposition}
\label{clm:conditions_for_P1}
There exist $x_0, x_1\in[0,1]$ satisfying both $\rho_0 \leq x_0M_{0,0}-x_1M_{1,0}$ and $\rho_1 \geq x_0 M_{0,1} - x_1 M_{1,1}$ iff $\rho_0 \leq M_{0,0}$ and $\rho_0 M_{0,1} \leq \rho_1 M_{0,0}$. 
\end{proposition}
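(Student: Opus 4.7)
The plan is to prove both directions by manipulating the two linear inequalities together with the box constraints $x_0, x_1 \in [0,1]$. The argument is essentially a two-variable linear feasibility check, made tractable by the fact that agent $A$'s ``opt-out'' analysis has already pinned down the structure of the constraints. Recall that by the strawman-game assumption~\eqref{eq:strawman_game_assumptions}, the determinant $M_{0,0}M_{1,1} - M_{0,1}M_{1,0}$ is strictly negative, and $M_{0,0}, M_{1,0} > 0$; these facts will be used freely.

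For the ``only if'' direction, I would first use the nonnegativity of $x_1$ and $M_{1,0}$ together with $x_0 \leq 1$ to drop the $x_1$ term in the first inequality and conclude immediately that $\rho_0 \leq x_0 M_{0,0} \leq M_{0,0}$. To obtain the second necessary condition, I would eliminate the $x_0$ variable: multiply the first inequality by $M_{0,1}$, multiply the second by $M_{0,0}$, and subtract. The $x_0 M_{0,0} M_{0,1}$ terms cancel, leaving $x_1(M_{0,0}M_{1,1} - M_{0,1}M_{1,0}) \geq \rho_0 M_{0,1} - \rho_1 M_{0,0}$. Since the coefficient of $x_1$ is negative and $x_1 \geq 0$, the left-hand side is nonpositive, forcing $\rho_0 M_{0,1} \leq \rho_1 M_{0,0}$.

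For the ``if'' direction, the cleanest plan is simply to exhibit an explicit feasible point rather than solve the full linear program. I would set $x_1 = 0$ and $x_0 = \rho_0 / M_{0,0}$. The assumption $\rho_0 \leq M_{0,0}$ guarantees $x_0 \in [0,1]$; the first inequality is then satisfied with equality; and the second inequality becomes $\rho_0 M_{0,1} / M_{0,0} \leq \rho_1$, which is exactly the hypothesis $\rho_0 M_{0,1} \leq \rho_1 M_{0,0}$ rewritten.

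There is no real obstacle here — both directions are short linear-algebraic manipulations. The only delicate point is keeping track of the sign of $M_{0,0}M_{1,1} - M_{0,1}M_{1,0}$ in the necessity argument (which is where the strawman-game assumption earns its keep), and noting that the ``canonical'' witness $x_1 = 0$ suffices, so no case analysis on whether $M_{1,0}$ is used at all is needed.
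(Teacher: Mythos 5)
Your proof is correct, and it matches the paper's proof almost step for step: the sufficiency direction uses the identical witness $(x_0,x_1)=(\rho_0/M_{0,0},\,0)$, and the first necessary condition $\rho_0 \leq M_{0,0}$ is obtained the same way, by dropping the nonnegative term $x_1 M_{1,0}$ and using $x_0 \leq 1$. The one genuine divergence is in the second necessary condition: the paper rearranges both constraints into a two-sided sandwich on $x_1$ (dividing by $M_{1,1}$ and $M_{1,0}$) and derives $x_0 < \rho_0/M_{0,0}$, contradicting the bound $x_0 \geq \rho_0/M_{0,0}$ already forced by the first constraint; you instead eliminate $x_0$ by scaling the two constraints by $M_{0,1}$ and $M_{0,0}$ and chaining them, arriving at $\rho_0 M_{0,1} - \rho_1 M_{0,0} \leq x_1\bigl(M_{0,0}M_{1,1}-M_{0,1}M_{1,0}\bigr) \leq 0$. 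Both arguments hinge on the same corollary of Equation~\eqref{eq:strawman_game_assumptions}, namely $M_{0,0}M_{1,1} < M_{0,1}M_{1,0}$, but your elimination is marginally more robust: it is direct rather than by contradiction, and it never divides by $M_{1,1}$ or $M_{1,0}$, so it survives the degenerate case $M_{1,1}=0$, which the paper's nonnegativity assumption on $M$ does not rule out. One small caveat: you assert $M_{0,0}>0$ up front, which the paper likewise only uses implicitly (both proofs divide by $M_{0,0}$ in the sufficiency witness); this is harmless, since when $M_{0,0}=0$ both sides of the equivalence fail simultaneously, as $\rho_0>0$ makes the first constraint infeasible and the condition $\rho_0 \leq M_{0,0}$ false.
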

\begin{proof}
To see that these conditions are sufficient, assume that $\rho_0 \leq M_{0,0}$ and $\rho_0 M_{0,1} \leq \rho_1 M_{0,0}$. Then we can set $x_0 = \tfrac {\rho_0}{M_{0,0}}$ and $x_1=0$. Clearly, both lie on the $[0,1]$-interval. We can check and see that indeed $\rho_0 \leq \tfrac {\rho_0} {M_{0,0}} M_{0,0} - 0 \cdot M_{1,0}$ and $\rho_1 \geq \tfrac {\rho_0}{M_{0,0}} M_{0,1} - 0\cdot M_{1,1}$.

We now show these conditions are necessary. Suppose that $\rho_0 > M_{0,0}$, then observe that any $x_0, x_1$ satisfying the two constraints must satisfy $0\leq x_1M_{1,0} \leq x_0 M_{0,0} - \rho_0$, so $x_0 \geq \tfrac{\rho_0}{M_{0,0}}$. As a result of our assumption, we have that $x_0 > 1$. Contradiction.

So assume now that $\rho_0 \leq M_{0,0}$ yet $\rho_0 M_{0,1} > \rho_1M_{0,0}$. Any $x_0, x_1$ satisfying the two constraints must also satisfy
\[ x_0 \tfrac {M_{0,1}}{M_{1,1}} -\tfrac {\rho_1}{M_{1,1}} \leq x_1 \leq x_0 \tfrac {M_{0,0}}{M_{1,0}} - \tfrac {\rho_0}{M_{1,0}}\] which, using our assumption, yields
\[ x_0 \tfrac {M_{0,1}M_{1,0} - M_{0,0}M_{1,1}}{M_{1,1}M_{1,0}}  \leq \tfrac {\rho_1}{M_{1,1}} - \tfrac {\rho_0}{M_{1,0}} < \rho_0 \left( \tfrac {M_{0,1}} {M_{0,0}M_{1,1}} - \tfrac 1 {M_{1,0}} \right) = \rho_0 \tfrac {M_{1,0}M_{0,1}-M_{0,0}M_{1,1}} {M_{0,0}M_{1,0}M_{1,1}} \]  
so we have $x_0 < \tfrac {\rho_0}{M_{0,0}}$. Contradiction.
\end{proof}

\ifx\fullversion\undefined
\paragraph{Conditions under which $B$'s BNE strategy is $P_2$.} 
\else
\subsubsection{Conditions under which $B$'s BNE strategy is $P_2$.}
\fi
As a response to this strategy, $A$'s best response is to set $x_1=y_0=0$ and $x_0=1$, while $y_1$ is to be determined. So $B$'s utility functions are
\begin{eqnarray*}
\textrm{For } \type=0 : && p(\rho_0 - M_{0,0})+(1-p)y_1M_{1,0} \cr
\textrm{For } \type=1 : && q(\rho_1-y_1M_{1,1})+(1-q) M_{0,1} 
\end{eqnarray*}
Therefore, in order for $B$ to not have any incentive to deviate, $A$ should set $y_1$ s.t $\rho_0-M_{0,0}= y_1 M_{1,0}$ and $\rho_1-y_1M_{1,1} \geq M_{0,1}$. 

\begin{proposition}
\label{clm:conditions_for_P2}
There exists a $y_1\in [0,1]$ satisfying both $\rho_0-M_{0,0}= y_1 M_{1,0}$ and $\rho_1-y_1M_{1,1} \geq M_{0,1}$ iff $M_{0,0}\leq \rho_0 \leq M_{0,0}+M_{1,0}$ and $(\rho_0 -M_{0,0}){M_{1,1}} \leq \left(\rho_1-M_{0,1}\right){M_{1,0}}$.
\end{proposition}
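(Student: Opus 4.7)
The proof is essentially direct, since the equality constraint $\rho_0 - M_{0,0} = y_1 M_{1,0}$ pins down $y_1$ uniquely (note that $M_{1,0}>0$, as we implicitly assume all entries of $M$ are positive). The plan is to solve this equation for $y_1$, translate the requirement $y_1 \in [0,1]$ into a condition on $\rho_0$, and then substitute back into the inequality $\rho_1 - y_1 M_{1,1} \geq M_{0,1}$ to obtain the second condition.

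More concretely, I would first argue the forward direction. Suppose such a $y_1 \in [0,1]$ exists. From $y_1 = (\rho_0 - M_{0,0})/M_{1,0}$, the constraint $y_1 \geq 0$ yields $\rho_0 \geq M_{0,0}$, and $y_1 \leq 1$ yields $\rho_0 \leq M_{0,0} + M_{1,0}$. Substituting this expression for $y_1$ into the second inequality gives
\[
\rho_1 - \frac{\rho_0 - M_{0,0}}{M_{1,0}} \cdot M_{1,1} \geq M_{0,1},
\]
which, after multiplying through by $M_{1,0}>0$ and rearranging, is exactly $(\rho_0 - M_{0,0}) M_{1,1} \leq (\rho_1 - M_{0,1}) M_{1,0}$.

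For the reverse direction, assume both conditions hold. Set $y_1 := (\rho_0 - M_{0,0})/M_{1,0}$. The first condition $M_{0,0} \leq \rho_0 \leq M_{0,0}+M_{1,0}$ immediately gives $y_1 \in [0,1]$, and $y_1$ satisfies the equality by construction. Reversing the algebraic manipulation above, the second condition $(\rho_0 - M_{0,0})M_{1,1} \leq (\rho_1 - M_{0,1})M_{1,0}$ is equivalent to $\rho_1 - y_1 M_{1,1} \geq M_{0,1}$, completing the proof.

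There is no real obstacle here — unlike the analogous Proposition for $P_1$, where $x_0$ and $x_1$ were both free and an optimization over the feasible region was needed, here the equality collapses the feasible set to a single point and the argument is just substitution and rearrangement. The only thing to be slightly careful about is the sign of $M_{1,0}$ when dividing or multiplying; positivity of the matrix entries (stated at the outset of Section~\ref{sec:opt-out-possible}) ensures the direction of inequalities is preserved throughout.
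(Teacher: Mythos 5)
Your proof is correct and follows essentially the same route as the paper's: the equality pins down $y_1=\tfrac{\rho_0-M_{0,0}}{M_{1,0}}$ uniquely, membership in $[0,1]$ gives the first condition, and substitution into the inequality gives the second. Your remark on the positivity of $M_{1,0}$ (guaranteed here since Equation~\eqref{eq:strawman_game_assumptions} requires the ratio $\tfrac{M_{1,1}}{M_{1,0}}$ to be well-defined) is a sound precaution that the paper leaves implicit.
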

\begin{proof}
Clearly, the only $y_1$ that can satisfy both constraints is $y_1 = \tfrac {\rho_0-M_{0,0}}{M_{1,0}}$, and we therefore must have that $M_{0,0} \leq \rho_0 \leq M_{0,0}+M_{1,0}$. We also need to verify that indeed the inequality holds in the right direction. I.e., to have $(\rho_0 -M_{0,0})\tfrac {M_{1,1}}{M_{1,0}} \leq \rho_1-M_{0,1}$. Clearly, if those two conditions hold then $y_1$ defined as above satisfy the required.
\end{proof}
\newtheorem{observation}[theorem]{Observation}
\begin{observation}
If we have that $(\rho_0 -M_{0,0}){M_{1,1}} \leq \left(\rho_1-M_{0,1}\right){M_{1,0}}$ then also $\tfrac {\rho_0}{\rho_1} < \tfrac {M_{1,0}} {M_{1,1}}$. 
\end{observation}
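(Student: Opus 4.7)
The plan is to unpack the hypothesis algebraically and combine it with the standing assumption in Equation~\eqref{eq:strawman_game_assumptions} (specifically its corollary $M_{0,0}M_{1,1} < M_{0,1}M_{1,0}$) to obtain a strict inequality.

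First, I would expand the hypothesis $(\rho_0 - M_{0,0})M_{1,1} \leq (\rho_1 - M_{0,1})M_{1,0}$ and rearrange the terms to isolate the $\rho$-containing expressions on one side:
\[
\rho_0 M_{1,1} - \rho_1 M_{1,0} \;\leq\; M_{0,0}M_{1,1} - M_{0,1}M_{1,0}.
\]
Next, I would invoke the consequence of Equation~\eqref{eq:strawman_game_assumptions} already noted in the text, namely $M_{0,0}M_{1,1} < M_{0,1}M_{1,0}$, which forces the right-hand side above to be strictly negative. Chaining the two inequalities yields $\rho_0 M_{1,1} - \rho_1 M_{1,0} < 0$, i.e., $\rho_0 M_{1,1} < \rho_1 M_{1,0}$. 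Finally, dividing by $\rho_1 M_{1,1}$ (both strictly positive, since coupon valuations and matrix entries are taken positive throughout) gives $\tfrac{\rho_0}{\rho_1} < \tfrac{M_{1,0}}{M_{1,1}}$, as desired.

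There is essentially no obstacle here; the observation is a one-line calculation whose only substantive content is the use of the standing assumption to convert a weak inequality into a strict one. The positivity of $\rho_1$ and $M_{1,1}$ needed to divide safely is guaranteed by the blanket assumption that all $M_{i,j}$ and coupon values are positive.
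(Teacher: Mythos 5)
Your proof is correct and is essentially the paper's own argument: both expand the hypothesis and use the standing corollary $M_{0,0}M_{1,1} < M_{0,1}M_{1,0}$ of Equation~\eqref{eq:strawman_game_assumptions} to turn the weak inequality into the strict $\rho_0 M_{1,1} < \rho_1 M_{1,0}$. The only cosmetic difference is that you isolate the $\rho$-terms on one side while the paper chains the inequalities directly, and you make the (harmless, implicit in the paper) positivity needed for the final division explicit.
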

\begin{proof}
$\rho_0M_{1,1} - M_{0,0}M_{1,1} \leq \rho_1M_{1,0}-M_{0,1}M_{1,0} < \rho_1M_{1,0} - M_{0,0}M_{1,1} \Rightarrow \rho_0 M_{1,1} < \rho_1M_{1,0}$.
\end{proof}

\ifx\fullversion\undefined
\paragraph{Conditions under which $B$'s BNE strategy is $P_3$.} 
\else
\subsubsection{Conditions under which $B$'s BNE strategy is $P_3$.}
\fi
Should $B$ play $(1,0)$, then we have that $A$ only sees the $\hat\type=0$ signal and always opts out (i.e. $x_0=x_1=0$). However, in order to prevent $B$ from deviating, $A$ needs to commit to a $y_0, y_1$ that leave $B$ preferring not to deviate from $(1,0)$. $B$'s utility functions are
\begin{eqnarray*}
\textrm{For } \type=0 : && p\rho_0 + (1-p)(-y_0M_{0,0}+y_1M_{1,0}) \cr
\textrm{For } \type=1 : && q(\rho_1 +y_0 M_{0,1}- y_1M_{1,1}) 
\end{eqnarray*}
Therefore, in order for $B$ to not have any incentive to deviate, $A$ should set $y_0,y_1$ s.t $\rho_0\geq -y_0M_{0,0} + y_1 M_{1,0}$ and $\rho_1+y_0M_{0,1}-y_1M_{1,1}\leq 0$. 
\begin{proposition}
\label{clm:conditions_for_P3}
There exist $y_0, y_1\in[0,1]$ satisfying both $\rho_0\geq -y_0M_{0,0} + y_1 M_{1,0}$ and $\rho_1\leq -y_0M_{0,1}+y_1M_{1,1}$ iff $\rho_1 \leq M_{1,1}$ and $\rho_0 M_{1,1} \geq \rho_1 M_{1,0}$.
\end{proposition}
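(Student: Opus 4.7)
The plan is to mirror the argument used to prove Proposition~\ref{clm:conditions_for_P1}, since the roles of $(x_0,x_1)$ and $(y_0,y_1)$, and of the rows of $M$, are symmetric under the transformation swapping the two types.

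For sufficiency, given $\rho_1 \leq M_{1,1}$ and $\rho_0 M_{1,1} \geq \rho_1 M_{1,0}$, I would exhibit an explicit witness: set $y_0 = 0$ and $y_1 = \rho_1/M_{1,1}$. Then $y_1 \in [0,1]$ by the first assumption, the second constraint $\rho_1 \leq -y_0 M_{0,1} + y_1 M_{1,1}$ holds with equality, and the first constraint $\rho_0 \geq -y_0 M_{0,0} + y_1 M_{1,0} = \rho_1 M_{1,0}/M_{1,1}$ is exactly the second assumption.

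For necessity, I would derive contradictions if either condition fails. If $\rho_1 > M_{1,1}$, then from $\rho_1 \leq -y_0 M_{0,1} + y_1 M_{1,1}$ and $y_0 \geq 0$ we get $y_1 M_{1,1} \geq \rho_1 > M_{1,1}$, forcing $y_1 > 1$. If instead $\rho_1 \leq M_{1,1}$ but $\rho_0 M_{1,1} < \rho_1 M_{1,0}$, I combine the two inequality constraints into the sandwich
\[
\frac{\rho_1 + y_0 M_{0,1}}{M_{1,1}} \;\leq\; y_1 \;\leq\; \frac{\rho_0 + y_0 M_{0,0}}{M_{1,0}},
\]
which after cross-multiplying yields $y_0(M_{0,1}M_{1,0} - M_{0,0}M_{1,1}) \leq \rho_0 M_{1,1} - \rho_1 M_{1,0}$. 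The key observation here, already highlighted in the excerpt right before these propositions, is that Equation~\eqref{eq:strawman_game_assumptions} implies $M_{0,1}M_{1,0} > M_{0,0}M_{1,1}$, so the coefficient of $y_0$ on the left is strictly positive, while by assumption the right-hand side is strictly negative; hence $y_0 < 0$, contradicting $y_0 \in [0,1]$.

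The main (and only) obstacle is purely bookkeeping: to make sure that the directions of the two inequalities in the constraints on $(y_0,y_1)$, which are opposite to those in the $P_1$ case, are tracked correctly so that the strawman inequality $M_{0,1}M_{1,0} > M_{0,0}M_{1,1}$ is applied with the right sign. Once this is done, the argument is a direct mirror of Proposition~\ref{clm:conditions_for_P1}.
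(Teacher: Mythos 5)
Your proposal is correct and takes essentially the same route as the paper, whose proof (mirroring Proposition~\ref{clm:conditions_for_P1}) uses the identical witness $(y_0,y_1)=(0,\rho_1/M_{1,1})$ for sufficiency and the same two-case contradiction for necessity, combining the two constraints and invoking $M_{0,1}M_{1,0}>M_{0,0}M_{1,1}$ from Equation~\eqref{eq:strawman_game_assumptions}. The only cosmetic difference is which variable you eliminate in the second case: you sandwich $y_1$ and force $y_0<0$ directly, while the paper sandwiches $y_0$ and forces $y_1<\rho_1/M_{1,1}$, contradicting the bound $y_1\geq\rho_1/M_{1,1}$ implied by $y_0\geq 0$; the two eliminations are interchangeable.
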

The proof is completely analogous to the proof of Proposition~\ref{clm:conditions_for_P1}. \cut{
\begin{proof}
To see that these conditions are sufficient, assume that $\rho_1 \leq M_{1,1}$ and $\rho_0 M_{1,1} \geq \rho_1 M_{1,0}$. Then we can set $y_1 = \tfrac {\rho_1}{M_{1,1}}$ and $y_0=0$. Clearly, both $y_0,y_1 \in [0,1]$. We can check and see that indeed $\rho_1 \leq \tfrac {\rho_1} {M_{1,1}} M_{1,1} - 0 \cdot M_{0,1}$ and $\rho_0 \geq \tfrac {\rho_1}{M_{1,1}} M_{1,0} - 0\cdot M_{0,0}$.

We now show these conditions are necessary. Suppose that $\rho_1 > M_{1,1}$, then observe that any $y_0, y_1$ satisfying the two constraints must satisfy $0\leq y_0M_{0,1} \leq y_1 M_{1,1} - \rho_1$, so $y_1 \geq \tfrac{\rho_1}{M_{1,1}}$. As a result of our assumption, we have that $y_1 > 1$. Contradiction.

So assume now that $\rho_1 \leq M_{1,1}$ yet $\rho_0 M_{1,1} < \rho_1M_{1,0}$. Any $y_0, y_1$ satisfying the two constraints must also satisfy
\[ y_1 \tfrac {M_{1,0}}{M_{0,0}} -\tfrac {\rho_0}{M_{0,0}} \leq y_0 \leq y_1 \tfrac {M_{1,1}}{M_{0,1}} - \tfrac {\rho_1}{M_{0,1}}\] which, using our assumption, yields
\[ y_1 \tfrac {M_{0,1}M_{1,0} - M_{0,0}M_{1,1}}{M_{0,0}M_{0,1}}  \leq \tfrac {\rho_0}{M_{0,0}} - \tfrac {\rho_1}{M_{0,1}} < \rho_1 \left( \tfrac {M_{1,0}} {M_{0,0}M_{1,1}} - \tfrac 1 {M_{0,1}} \right) = \rho_1 \tfrac {M_{1,0}M_{0,1}-M_{0,0}M_{1,1}} {M_{0,0}M_{0,1}M_{1,1}} \]  
so we have $y_1 < \tfrac {\rho_1}{M_{1,1}}$. Contradiction.
\end{proof}
}
\ifx\fullversion\undefined
\paragraph{Conditions under which $B$'s BNE strategy is $P_4$.}
\else
\subsubsection{Conditions under which $B$'s BNE strategy is $P_4$.}
\fi
As a response to this strategy, $A$'s best response is to set $x_1=y_0=0$ and $y_1=1$, while $x_0$ is to be determined. So $B$'s utility functions are
\begin{eqnarray*}
\textrm{For } \type=0 : && p(\rho_0 - x_0M_{0,0})+(1-p)M_{1,0} \cr
\textrm{For } \type=1 : && q(\rho_1-M_{1,1})+(1-q)x_0 M_{0,1} 
\end{eqnarray*}
Therefore, in order for $B$ to not have any incentive to deviate, $A$ should set $x_0$ s.t $\rho_0-x_0M_{0,0}\geq M_{1,0}$ and $\rho_1-M_{1,1} = x_0M_{0,1}$. 

\begin{proposition}
\label{clm:conditions_for_P4}
There exists a $x_0\in [0,1]$ satisfying both $\rho_0-x_0M_{0,0}\geq M_{1,0}$ and $\rho_1-M_{1,1} = x_0M_{0,1}$ iff $M_{1,1}\leq \rho_1 \leq M_{1,1}+M_{0,1}$ and $\left(\rho_1-M_{1,1}\right)M_{0,0} \leq \left(\rho_0-M_{1,0}\right)M_{0,1}$.
\end{proposition}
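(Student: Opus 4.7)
The plan is to observe that the constraint $\rho_1 - M_{1,1} = x_0 M_{0,1}$ is an equality, so once the parameters of the game are fixed, $x_0$ is forced to take the unique value $x_0 = (\rho_1 - M_{1,1})/M_{0,1}$. Hence the existence question reduces to two independent feasibility checks: that this forced value of $x_0$ lies in $[0,1]$, and that it also satisfies the inequality $\rho_0 - x_0 M_{0,0} \geq M_{1,0}$. This mirrors exactly the structure of the proof of Proposition~\ref{clm:conditions_for_P2}, just with the roles of the two types swapped.

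First I would handle the sufficiency direction. Assume $M_{1,1} \leq \rho_1 \leq M_{1,1}+M_{0,1}$ and $(\rho_1 - M_{1,1})M_{0,0} \leq (\rho_0 - M_{1,0})M_{0,1}$. Define $x_0 = (\rho_1 - M_{1,1})/M_{0,1}$; the first assumption guarantees $x_0 \in [0,1]$, and by construction the second constraint $\rho_1 - M_{1,1} = x_0 M_{0,1}$ holds. Rearranging the hypothesis on the product yields $\rho_0 - M_{1,0} \geq x_0 M_{0,0}$, i.e. $\rho_0 - x_0 M_{0,0} \geq M_{1,0}$, giving the first inequality.

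For necessity, suppose some $x_0 \in [0,1]$ satisfies both requirements. The equality immediately pins down $x_0 = (\rho_1 - M_{1,1})/M_{0,1}$. Since $M_{0,1}>0$, the condition $x_0 \geq 0$ forces $\rho_1 \geq M_{1,1}$, and $x_0 \leq 1$ forces $\rho_1 \leq M_{1,1}+M_{0,1}$. Substituting this value into $\rho_0 - x_0 M_{0,0} \geq M_{1,0}$ and multiplying through by $M_{0,1} > 0$ yields $(\rho_0 - M_{1,0})M_{0,1} \geq (\rho_1 - M_{1,1})M_{0,0}$, which is the second stated condition.

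There is no real obstacle here; the proof is essentially an algebraic rearrangement, and the only thing to be careful about is that $M_{0,1}$ and $M_{0,0}$ are positive so that dividing/multiplying by them preserves the direction of inequalities (this is guaranteed by the global assumption at the start of Section~\ref{sec:opt-out-possible} that all entries of $M$ are non-negative, together with the implicit assumption throughout this analysis that the off-diagonal penalty entries are strictly positive, consistent with Equation~\eqref{eq:strawman_game_assumptions}). I would also remark, as in the analogous observation after Proposition~\ref{clm:conditions_for_P2}, that the second condition implies $\rho_1/\rho_0 < M_{0,1}/M_{0,0}$, which fits into the overall case-analysis of Table~\ref{tab:Conditions_for_NE}.
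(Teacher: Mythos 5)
Your proof is correct and matches the paper's approach exactly: the paper handles this proposition by noting it is analogous to Proposition~\ref{clm:conditions_for_P2}, i.e., the equality forces $x_0 = \tfrac{\rho_1 - M_{1,1}}{M_{0,1}}$, and the two stated conditions are precisely the requirements that this forced value lies in $[0,1]$ and satisfies the remaining inequality. Your added care about the strict positivity of $M_{0,1}$ (guaranteed via Equation~\eqref{eq:strawman_game_assumptions}) and the closing remark that the product condition implies $\rho_1 M_{0,0} < \rho_0 M_{0,1}$ are both consistent with the paper.
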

The proof is analogous to the proof of Proposition~\ref{clm:conditions_for_P2}.
\cut{
\begin{proof}
Clearly, the only $x_0$ that can satisfy both constraints is $x_0 = \tfrac {\rho_1-M_{1,1}}{M_{0,1}}$, and we therefore must have that $M_{1,1} \leq \rho_1 \leq M_{1,1}+M_{0,1}$. We also need to verify that indeed the inequality holds in the right direction. I.e., to have $(\rho_1 -M_{1,1})\tfrac {M_{0,0}}{M_{0,1}} \leq \rho_0-M_{1,0}$. This two conditions are equivalent to having the above-defined $x_0$ lie in the $[0,1]$-interval and satisfy both constraints.
\end{proof}
\begin{observation}
If we have $\left(\rho_1-M_{1,1}\right)M_{0,0} \leq \left(\rho_0-M_{1,0}\right)M_{0,1}$  then also $\rho_1 M_{0,0} < \rho_0 M_{0,1}$.
\end{observation}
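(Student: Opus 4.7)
The plan is to derive the conclusion by straightforward algebraic manipulation of the hypothesis, using the strict inequality $M_{0,0}M_{1,1} < M_{0,1}M_{1,0}$. Recall that this latter inequality is the ``direct (and repeatedly used) corollary'' of Equation~\eqref{eq:strawman_game_assumptions} flagged at the beginning of the section; it is the only nontrivial fact I would need. The argument is completely analogous to the unnamed observation stated right after Proposition~\ref{clm:conditions_for_P2} (the one asserting $\tfrac{\rho_0}{\rho_1}<\tfrac{M_{1,0}}{M_{1,1}}$), whose proof is a single chained inequality; I would simply transpose the roles of the indices.

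Concretely, I would first distribute the hypothesis to rewrite it as $\rho_1 M_{0,0} - M_{0,0}M_{1,1} \le \rho_0 M_{0,1} - M_{1,0}M_{0,1}$. Then I would use $M_{0,0}M_{1,1} < M_{0,1}M_{1,0}$, i.e.\ $-M_{1,0}M_{0,1} < -M_{0,0}M_{1,1}$, to continue the chain: $\rho_0 M_{0,1} - M_{1,0}M_{0,1} < \rho_0 M_{0,1} - M_{0,0}M_{1,1}$. Combining, $\rho_1 M_{0,0} - M_{0,0}M_{1,1} < \rho_0 M_{0,1} - M_{0,0}M_{1,1}$, and canceling the common $-M_{0,0}M_{1,1}$ on both sides yields the desired $\rho_1 M_{0,0} < \rho_0 M_{0,1}$.

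There is essentially no obstacle here; the only ``insight'' is noticing that the slack between $M_{1,0}M_{0,1}$ and $M_{0,0}M_{1,1}$ in the strawman-game assumption is exactly what upgrades the weak inequality in the hypothesis into a strict inequality in the conclusion. If I wanted to be thorough I would note that the strict inequality $M_{0,0}M_{1,1}<M_{0,1}M_{1,0}$ follows from $\tfrac{M_{0,0}}{M_{0,1}}<\tfrac{D_1}{D_0}<\tfrac{M_{1,0}}{M_{1,1}}$ (Equation~\eqref{eq:strawman_game_assumptions}) by cross-multiplying, but I would not belabor this point since the paper already states it earlier.
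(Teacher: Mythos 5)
Your proof is correct and is essentially the paper's own argument: the paper proves this observation by the single chain $\rho_1 M_{0,0} - M_{0,0}M_{1,1} \leq \rho_0 M_{0,1} - M_{0,1}M_{1,0} < \rho_0 M_{0,1} - M_{0,0}M_{1,1}$, which is exactly your distribute-then-apply-$M_{0,0}M_{1,1} < M_{0,1}M_{1,0}$ step followed by cancellation. Your identification of the slack $M_{0,1}M_{1,0}-M_{0,0}M_{1,1}>0$ from Equation~\eqref{eq:strawman_game_assumptions} as the source of strictness is precisely how the paper upgrades the weak hypothesis to the strict conclusion.
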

\begin{proof}
$\rho_1 M_{0,0} - M_{0,0}M_{1,1} \leq \rho_0 M_{0,1} - M_{0,1}M_{1,0} < \rho_0 M_{0,1}-M_{0,0}M_{1,1} \Rightarrow \rho_1 M_{0,0} < \rho_0 M_{0,1}$.
\end{proof}
}
\ifx\fullversion\undefined
\paragraph{Conditions under which $B$'s BNE strategy is $P_5$.} 
\else
\subsubsection{Conditions under which $B$'s BNE strategy is $P_5$.}
\fi
As this point lies on the intersection of $l_1$ and $l_2$, then $A$'s best response to this strategy is to set $x_1=y_0=0$. Thus,$B$'s utility functions are
\begin{eqnarray*}
\textrm{For } \type=0 : && p(\rho_0 - x_0M_{0,0})+(1-p)y_1M_{1,0} \cr
\textrm{For } \type=1 : && q(\rho_1-y_1M_{1,1})+(1-q)x_0 M_{0,1} 
\end{eqnarray*}
It is therefore up to $A$ to pick $x_0$ and $y_1$ that satisfy both equalities $\begin{pmatrix} \rho_0 \cr \rho_1\end{pmatrix} = \begin{pmatrix}
M_{0,0} & M_{1,0} \cr M_{0,1} & M_{1,1} 
\end{pmatrix}  \begin{pmatrix}x_0 \cr y_1 \end{pmatrix}$
Cramer's formula give that the solution to this system is 
\begin{equation}
\begin{pmatrix}
x_0 \cr y_1
\end{pmatrix} = {\displaystyle \frac 1 {M_{1,0}M_{0,1} - M_{0,0}M_{1,1}}} \begin{pmatrix}
-M_{1,1} &M_{1,0}  \cr
M_{0,1} &-M_{0,0}
\end{pmatrix}\begin{pmatrix}
\rho_0 \cr \rho_1
\end{pmatrix}\label{eq:A_NE_strategy_P5}\end{equation}
In order for $x_0,y_1$ to be in the range $[0,1]$, we therefore must have that: (i) $\rho_0 M_{1,1} \leq \rho_1 M_{1,0}$, (ii) $(\rho_1-M_{0,1})M_{1,0} \leq (\rho_0-M_{0,0})M_{1,1} $, (iii) $\rho_1M_{0,0} \leq \rho_0M_{0,1}$, and (iv) $(\rho_0-M_{1,0})M_{0,1} \leq (\rho_1-M_{1,1})M_{0,0}$.
In other words:
\begin{align}
\label{eq:conditions_for_P5}
& 0 \leq \rho_1M_{1,0} - \rho_0M_{1,1} \leq M_{0,1}M_{1,0}-M_{0,0}M_{1,1} \cr & 0 \leq \rho_0M_{0,1} - \rho_1M_{0,0} \leq   M_{0,1}M_{1,0}-M_{0,0}M_{1,1}
\end{align}

\ifx\fullversion\undefined
\paragraph{Summarizing.} Below we give the various conditions under which each of the points may be a BNE:
\begin{table}[hbt]
\centering
\begin{tabular}{ | c | c | c | c |}
\hline
Case & Condition & $A$'s Strategy & $B$'s strategy \cr
No.& & (always: $x_1=y_0=0$) &\cr
\hline
$1$& $\rho_0 \geq M_{0,0}+M_{1,0}$ and~~~ $\rho_1\geq M_{0,1}+M_{1,1}$ & $(x_0,y_1)=(1,1)$ & $(1,1)$\cr\hline
$2$& $\rho_0 \leq M_{0,0}$ and~~~ $\tfrac{\rho_0}{\rho_1}\leq \tfrac{M_{0,0}}{M_{0,1}}$ & $(x_0,y_1) = (\tfrac {\rho_0}{M_{0,0}},0)$ & $(0,1)$ \cr\hline
$3$& $\rho_1 \leq M_{1,1}$ and~~~ $\tfrac{\rho_0}{\rho_1} \geq \tfrac{M_{1,0}}{M_{1,1}}$ & $(x_0,y_1)=(0,\tfrac{\rho_1}{M_{1,1}})$ & $(1,0)$ \cr\hline
$4$& $0 \leq \rho_0-M_{0,0}\leq M_{1,0}$ & $(x_0,y_1) = (1,\tfrac {\rho_0-M_{0,0}}{M_{1,0}})$ & $P_2$ \cr
& $\rho_1M_{1,0}-\rho_0M_{1,1} \geq M_{0,1}M_{1,0} -M_{0,0}{M_{1,1}}$ & &\cr\hline
$5$& $0 \leq \rho_1 - M_{1,1} \leq M_{0,1}$ & $(x_0,y_1)=(\tfrac {\rho_1-M_{1,1}}{M_{0,1}},1)$ & $P_4$ \cr
& $\rho_0M_{0,1}-\rho_1M_{0,0} \geq M_{0,1}M_{1,0}-M_{0,0}M_{1,1}$ & &\cr\hline
$6$& $0 \leq \rho_1M_{1,0} - \rho_0M_{1,1} \leq M_{0,1}M_{1,0}-M_{0,0}M_{1,1}$ & See Eq.~\eqref{eq:A_NE_strategy_P5}& $P_5$ \cr
& $0 \leq \rho_0M_{0,1} - \rho_1M_{0,0} \leq   M_{0,1}M_{1,0}-M_{0,0}M_{1,1}$ & & \cr\hline
\end{tabular}
\caption{\label{apx_tab:Conditions_for_NE}The various conditions under which any of the isolated points are BNE. Recall: $P_2 = (1 - \frac {D_1M_{1,1}}{D_0 M_{1,0}} , 1)$ is the intersection point of the $l_2$-line and the $q=1$ line, $P_4 = (1, 1-\frac {D_0 M_{0,0}}{D_1M_{0,1}})$ is the intersection point of the $l_1$ line and the $p=1$ line, and $P_5 = \left( \frac {D_0D_1M_{0,1}M_{1,0} - D_1^2 M_{0,1}M_{1,1}} {D_0D_1 M_{0,1}M_{1,0} - D_0 D_1 M_{0,0}M_{1,1}} , \frac {D_0D_1M_{0,1}M_{1,0} - D_0^2 M_{0,0}M_{1,0}} {D_0D_1 M_{0,1}M_{1,0} - D_0 D_1 M_{0,0}M_{1,1}} \right)$ is at the intersection of the $l_1$- and $l_2$-lines. Recall also that $M_{0,1}M_{1,0}>M_{0,0}M_{1,1}$.}
\end{table}
\else
\subsubsection{Summarizing.} 
Table~\ref{tab:Conditions_for_NE} summarized the $6$ conditions under which each point is a BNE.
\fi
It is easy to verify that the condition of Theorem~\ref{thm:mixed_BNE_under_condition} is precisely condition $6$, underwhich $B$'s BNE strategy unique (the point $P_5$) and it is mixed.

\subsection{Proof of Theorem~\ref{thm:NE_which_is_RR}: The Uniqueness of $B$'s BNE Strategy}
\label{subsec:characterizing_NE}

So far we have introduced conditions for the existence of various BNEs. In this section, our goal is to show that the above analysis gives a complete description of the game. That is, to show that the cases detailed in Table~\ref{tab:Conditions_for_NE} span all potential values the parameters of the game may take, and furthermore (modulo cases of equality between parameters) they are also mutually exclusive.

\begin{lemma}
\label{lem:conditions_are_exclusive}
Assume that the parameters of the game (i.e. $\rho_0,\rho_1$ and the entries of $M$) satisfy one of the $6$ conditions detailed in Table~\ref{tab:Conditions_for_NE}  with strict inequalities. Then no other condition in Table~\ref{tab:Conditions_for_NE} holds simultaneously. In other words, the conditions in Table~\ref{tab:Conditions_for_NE} are mutually exclusive (excluding equalities). As the conditions are mutually exclusive it means that under the condition specified in Theorem~\ref{thm:mixed_BNE_under_condition}, the game has a unique BNE  -- as specified by case $6$ in Table~\ref{tab:Conditions_for_NE}.
\end{lemma}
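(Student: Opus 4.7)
My plan is to show that the six cases partition the positive quadrant $\{(\rho_0,\rho_1) : \rho_0,\rho_1 > 0\}$ of coupon valuations; this simultaneously gives exhaustiveness and mutual exclusivity of the interiors. The starting observation is that Equation~\eqref{eq:strawman_game_assumptions} yields $M_{0,0}M_{1,1} < M_{0,1}M_{1,0}$, so $\Delta := M_{0,1}M_{1,0} - M_{0,0}M_{1,1} > 0$. I will abbreviate $A_1 := \rho_1 M_{1,0} - \rho_0 M_{1,1}$ and $A_2 := \rho_0 M_{0,1} - \rho_1 M_{0,0}$; observe that the conditions of Cases~3, 5, and 6 are stated exactly in terms of $A_1, A_2$, and $\Delta$.

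The key consequence of $\Delta>0$ is that $A_1$ and $A_2$ cannot both be negative: if $A_1 < 0$ and $A_2 < 0$ then $\rho_0 < (M_{0,0}/M_{0,1})\rho_1 < (M_{0,0}/M_{0,1})(M_{1,1}/M_{1,0})\rho_0$, forcing $M_{0,0}M_{1,1} > M_{0,1}M_{1,0}$, a contradiction. Hence I can split the positive quadrant into three regimes: $(A_1 < 0,\, A_2 > 0)$, $(A_1 > 0,\, A_2 < 0)$, and $(A_1 \geq 0,\, A_2 \geq 0)$, and show that each regime falls entirely into a disjoint union of the listed cases.

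The core of the proof is then a case analysis in which each regime is further refined by comparing $\rho_0$ (or $\rho_1$) against the thresholds $M_{0,0}$, $M_{0,0}+M_{1,0}$, $M_{1,1}$, $M_{1,1}+M_{0,1}$. In the regime $A_1 \leq 0 \leq A_2$: if $\rho_1 \leq M_{1,1}$ we are in Case~4; otherwise multiplying $A_1 \leq 0$ by positive factors yields $A_2 \geq \rho_1 \Delta / M_{1,1} > \Delta$, so Case~5 applies when $\rho_1 \leq M_{1,1}+M_{0,1}$, and when $\rho_1 > M_{1,1}+M_{0,1}$ a short calculation using $M_{0,1}M_{1,0}>M_{0,0}M_{1,1}$ gives $\rho_0 > M_{0,0}+M_{1,0}$, which is Case~1. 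The symmetric regime $A_2 \leq 0 \leq A_1$ splits analogously into Cases~2, 3, and 1. In the regime $A_1,A_2 \geq 0$, I split further on whether each exceeds $\Delta$: both at most $\Delta$ is exactly Case~6; exactly one exceeding $\Delta$ (say $A_1>\Delta$, $A_2\leq\Delta$) leads to Case~3 or Case~1, where the critical step is to show ``$A_1 \geq \Delta$ and $\rho_0 \leq M_{0,0}$ force $A_2 \leq 0$'' (contradicting $A_2 \geq 0$, hence $\rho_0 > M_{0,0}$); and both strictly exceeding $\Delta$ lands in Case~1 by multiplying $A_1>\Delta$ and $A_2>\Delta$ by $M_{0,0}$ and $M_{1,0}$ respectively and adding to obtain $\rho_0 \Delta > \Delta(M_{0,0}+M_{1,0})$, with the symmetric combination giving $\rho_1 > M_{0,1}+M_{1,1}$.

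The main obstacle is bookkeeping rather than mathematical depth: one must verify that each sub-region lands in exactly one of the six cases and that the ``overflow'' from Cases~3 and 5 (when $\rho_0 > M_{0,0}+M_{1,0}$ or $\rho_1 > M_{0,1}+M_{1,1}$) is absorbed cleanly by Case~1. The geometric reason this works is that the three critical lines $A_1 = \Delta$, $A_2 = \Delta$, and the ``corner'' defined by $\rho_0 = M_{0,0}+M_{1,0}$, $\rho_1 = M_{0,1}+M_{1,1}$ all pass through the single point $(M_{0,0}+M_{1,0},\, M_{0,1}+M_{1,1})$, so the regions fit together without gap. Once the partition is verified, mutual exclusivity with strict inequalities is immediate, since the interiors of the six regions are pairwise disjoint; uniqueness of $B$'s BNE under the hypotheses of Theorem~\ref{thm:mixed_BNE_under_condition} then follows because those hypotheses are precisely the strict version of Case~6.
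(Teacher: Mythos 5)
Your proposal is correct, and it takes a genuinely different route from the paper's. The paper separates the claim into two lemmas proved independently: mutual exclusivity (Lemma~\ref{lem:conditions_are_exclusive}) via a pairwise traversal showing that if case $i$ holds strictly then each case $j>i$ fails, and exhaustiveness (Lemma~\ref{lem:conditions_are covering}) via a threshold-by-threshold case split. You instead prove a single partition statement for the positive quadrant of $(\rho_0,\rho_1)$, organized by the signs of $A_1=\rho_1M_{1,0}-\rho_0M_{1,1}$ and $A_2=\rho_0M_{0,1}-\rho_1M_{0,0}$, which delivers both conclusions at once. Notably, your two workhorse identities, $\rho_0\Delta = M_{0,0}A_1 + M_{1,0}A_2$ and $\rho_1\Delta = M_{0,1}A_1 + M_{1,1}A_2$, are exactly the algebraic facts the paper deploys inside its Case~1 and Case~4 eliminations, so the underlying computations coincide; what your packaging buys is conceptual economy (exhaustiveness comes for free, and the observation that the lines $A_1=\Delta$, $A_2=\Delta$ meet at the corner $(M_{0,0}+M_{1,0},\,M_{0,1}+M_{1,1})$ explains \emph{why} the regions tile without gaps, which the paper's proof never makes visible), while the paper's pairwise format makes each individual exclusion locally checkable. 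One step you should make explicit when writing this up: disjointness of your partition cells does not by itself give disjointness of the six \emph{condition sets} of Table~\ref{tab:Conditions_for_NE} --- you must also verify that each case's strict condition region is contained in the union of the cells you assigned to that case (e.g., that a point with $A_2>\Delta$ and $M_{1,1}<\rho_1<M_{1,1}+M_{0,1}$ cannot sit in a cell labeled by another case). The implications you already sketch, such as ``$A_1\geq\Delta$ and $\rho_0\leq M_{0,0}$ force $M_{1,0}A_2 = \rho_0\Delta - M_{0,0}A_1 \leq 0$,'' supply precisely these containments, so this is a presentational completion rather than a mathematical gap; your reading of the corollary is also right, since the hypotheses of Theorem~\ref{thm:mixed_BNE_under_condition} are the strict form of case~6.
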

\begin{proof}
We traverse the $6$ cases, showing that if case $i$  holds with strict inequalties then some other case $j>i$ cannot hold.
\begin{description}
\item [Case $1$.] Clearly, if the conditions of case $1$ hold, then the conditions of cases $2,3,4$ and $5$ cannot hold. To see that the conditions of case $6$ cannot hold, we argue that the condition $\max\{\rho_1M_{1,0} - \rho_0M_{1,1}, \rho_0M_{0,1} - \rho_1M_{0,0}\} \leq M_{0,1}M_{1,0}-M_{0,0}M_{1,1}$ implies that both $\rho_0 \leq M_{0,0}+M_{1,0}$ and $\rho_1\leq M_{0,1}+M_{1,1}$. This claim follows from the inequalities
\begin{eqnarray*}
\rho_0 (M_{0,1}M_{1,0}-M_{0,0}M_{1,1} ) &&= M_{0,0} (\rho_1M_{1,0} - \rho_0M_{1,1}) + M_{1,0}(\rho_0M_{0,1} - \rho_1M_{0,0}) \cr &&\leq (M_{0,0}+M_{1,0})(M_{0,1}M_{1,0}-M_{0,0}M_{1,1}) \cr
\rho_1 (M_{0,1}M_{1,0}-M_{0,0}M_{1,1} ) &&= M_{0,1} (\rho_1M_{1,0} - \rho_0M_{1,1}) + M_{1,1}(\rho_0M_{0,1} - \rho_1M_{0,0}) \cr &&\leq (M_{0,1}+M_{1,1})(M_{0,1}M_{1,0}-M_{0,0}M_{1,1})
\end{eqnarray*}
\item [Case $2$.] Clearly, the conditions of case $2$ cannot hold simultaneously with the conditions of cases $4$ and $6$. To exclude the other cases, observe that using our favorite inequality $\tfrac {M_{0,0}}{M_{0,1}} < \tfrac {M_{1,0}}{M_{1,1}}$, we have that the condition $\tfrac{\rho_0}{\rho_1} \leq \tfrac {M_{0,0}}{M_{0,1}}$ implies that $\tfrac {\rho_0}{\rho_1} < \tfrac {M_{1,0}}{M_{1,1}}$. Hence case $3$ cannot hold, and neither does case $5$ (using again the fact that $M_{0,1}M_{1,0}>M_{0,0}M_{1,1}$).
\item [Case $3$.] This case is symmetric to case $2$ --- since $M_{0,1}M_{1,0} - M_{0,0}M_{1,1}>0$ then case $3$ rules out case $4$ (and the fact it cannot hold simultaneously with cases $5$ and $6$ is obvious).
\item [Case $4$.] Clearly, case $6$ cannot hold together with case $4$. To show that case $5$ cannot hold too, we claim that if both $\rho_1M_{1,0}-\rho_0M_{1,1} \geq M_{0,1}M_{1,0} -M_{0,0}{M_{1,1}}$ and $\rho_0M_{0,1}-\rho_1M_{0,0} \geq M_{0,1}M_{1,0}-M_{0,0}M_{1,1}$ hold, then $\rho_0 \geq M_{0,0}+M_{1,0}$. This holds because the two inequalities imply
\begin{eqnarray*}
&& \rho_1 \leq ( \rho_0-M_{1,0} )\tfrac {M_{0,1}}{M_{0,0}} + M_{1,1} \textrm{ and }  \rho_1 \geq (\rho_0-M_{0,0})\tfrac{M_{1,1}}{M_{1,0}} + M_{0,1} \cr
& \Rightarrow &\rho_0 \left( \tfrac{M_{0,1}}{M_{0,0}} - \tfrac {M_{1,1}}{M_{1,0}}  \right) \geq M_{0,1}-M_{1,1} + \tfrac {M_{1,0}M_{0,1}}{M_{0,0}} - \tfrac { M_{0,0}M_{1,1}  }{M_{1,0}} \cr
& \Rightarrow & \rho_0 \geq \frac {  M_{0,0}M_{0,1}M_{0,1} - M_{0,0}M_{1,0}M_{1,1} + M_{0,1}M_{1,0}^2 - M_{0,0}^2M_{1,1}  }{ M_{0,1}M_{1,0} - M_{0,0}M_{1,1}} = M_{0,0}+M_{1,0}
\end{eqnarray*}
\item [Case $5$.] Clearly, cases $5$ and $6$ cannot hold simultaneously.
\end{description}
\end{proof}

\begin{lemma}
\label{lem:conditions_are covering}
Any choice of parameters for $\rho_0,\rho_1$ and the entries of $M$ satisfies at least one of the $6$ cases detailed in Table~\ref{tab:Conditions_for_NE}.
\end{lemma}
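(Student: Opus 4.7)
The plan is to give a case analysis over the location of $(\rho_0,\rho_1)$ in the positive quadrant, reparameterized by two ``diagonal'' coordinates
\[
u = \rho_0 M_{0,1} - \rho_1 M_{0,0}, \qquad v = \rho_1 M_{1,0} - \rho_0 M_{1,1},
\]
and using the abbreviation $\Delta = M_{0,1}M_{1,0} - M_{0,0}M_{1,1}$, which is strictly positive by Equation~\eqref{eq:strawman_game_assumptions}. In these coordinates the conditions of the six cases become transparent: case~$6$ is exactly $\{0\le u\le\Delta\}\cap\{0\le v\le\Delta\}$; cases~$2$ and~$4$ respectively require $u\le0$ and $v\le0$; and cases~$3$, $5$ respectively require $v\ge\Delta$ and $u\ge\Delta$ together with a horizontal/vertical strip condition. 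Case~$1$ serves as the ``far away from origin'' catch-all.

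First I would record the easy observation that $u$ and $v$ cannot both be negative: $u<0$ means $\rho_1>\rho_0 M_{0,1}/M_{0,0}$, which combined with $M_{0,1}/M_{0,0}>M_{1,1}/M_{1,0}$ (a restatement of $\Delta>0$) forces $v>0$. Hence the quadrant splits into three regimes to handle separately: (i) $u<0$ and $v\ge0$; (ii) the symmetric $v<0$ and $u\ge 0$; and (iii) $u,v\ge 0$.

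For regime (i) I would subdivide on $\rho_0$: if $\rho_0\le M_{0,0}$ then case~$2$ holds directly. If $M_{0,0}\le\rho_0\le M_{0,0}+M_{1,0}$, then $u\le 0$ gives $\rho_1\ge\rho_0 M_{0,1}/M_{0,0}$, whence
\[
v \;=\; \rho_1 M_{1,0} - \rho_0 M_{1,1} \;\ge\; \rho_0\cdot\frac{M_{0,1}M_{1,0}-M_{0,0}M_{1,1}}{M_{0,0}} \;=\; \frac{\rho_0\,\Delta}{M_{0,0}} \;\ge\; \Delta,
\]
so case~$3$ applies. If $\rho_0>M_{0,0}+M_{1,0}$, the same inequality chain gives $\rho_1\ge\rho_0 M_{0,1}/M_{0,0}>M_{0,1}+M_{1,1}$ (using $M_{1,0}M_{0,1}/M_{0,0}\ge M_{1,1}$, again equivalent to $\Delta\ge 0$), so case~$1$ applies. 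Regime (ii) is completely symmetric, handled by cases~$4$, $5$, and~$1$.

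For regime (iii), if additionally $u\le\Delta$ and $v\le\Delta$ then case~$6$ holds by definition. Otherwise, by symmetry I may assume $u>\Delta$, and I split on $\rho_1$. I would first rule out $\rho_1\le M_{1,1}$: combining it with $u>\Delta$ gives $\rho_0 M_{0,1}>M_{0,1}M_{1,0}+M_{0,0}(\rho_1-M_{1,1})$, hence $\rho_0>M_{1,0}$, which forces $v=\rho_1 M_{1,0}-\rho_0 M_{1,1}<0$, contradicting $v\ge 0$. So $\rho_1>M_{1,1}$. If $\rho_1\le M_{0,1}+M_{1,1}$ then case~$5$ applies; if $\rho_1>M_{0,1}+M_{1,1}$ then $u\ge\Delta$ gives
\[
\rho_0 M_{0,1} \;\ge\; \rho_1 M_{0,0} + \Delta \;>\; (M_{0,1}+M_{1,1})M_{0,0} + M_{0,1}M_{1,0}-M_{0,0}M_{1,1} \;=\; M_{0,1}(M_{0,0}+M_{1,0}),
\]
so $\rho_0>M_{0,0}+M_{1,0}$ and case~$1$ applies.

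The main obstacle is not difficulty but organization: keeping track of which inequalities use $\Delta>0$ and which use the vertical/horizontal strip bounds. The $(u,v)$ reparameterization and the three-regime split reduce the argument to a handful of arithmetic checks, each boiling down to the single fact $\Delta>0$ together with the trivial identities $\rho_0 M_{0,1}M_{1,0}-\rho_0 M_{0,0}M_{1,1}=\rho_0\Delta$ and $\rho_1 M_{0,1}M_{1,0}-\rho_1 M_{0,0}M_{1,1}=\rho_1\Delta$.
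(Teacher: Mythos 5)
Your overall plan is, in substance, the same argument as the paper's: an exhaustive case analysis over the four lines $u=0$, $v=0$, $u=\Delta$, $v=\Delta$ together with the strip conditions $M_{0,0}\le\rho_0\le M_{0,0}+M_{1,0}$ and $M_{1,1}\le\rho_1\le M_{1,1}+M_{0,1}$, with every arithmetic step resting on $\Delta>0$. The paper merely orders the split differently (it first peels off the outer regions $\rho_0\ge M_{0,0}+M_{1,0}$ and $\rho_1\ge M_{0,1}+M_{1,1}$, then the sign conditions $u\le 0$, $v\le 0$, then the thresholds $u,v\ge\Delta$); your regime-(i) computation $v\ge\rho_0\Delta/M_{0,0}\ge\Delta$ is essentially verbatim the paper's derivation for the corresponding case, so the $(u,v)$ reparameterization is a nice bookkeeping device rather than a genuinely different route.

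However, one step in your regime (iii) fails as written. To rule out $\rho_1\le M_{1,1}$ when $u>\Delta$, you rearrange $u>\Delta$ as $\rho_0 M_{0,1}>M_{0,1}M_{1,0}+M_{0,0}(\rho_1-M_{1,1})$ and assert ``hence $\rho_0>M_{1,0}$.'' But under the hypothesis $\rho_1\le M_{1,1}$ the term $M_{0,0}(\rho_1-M_{1,1})$ is nonpositive, so your displayed lower bound on $\rho_0 M_{0,1}$ is \emph{weaker} than $M_{0,1}M_{1,0}$, not stronger; the ``hence'' would be valid only under $\rho_1\ge M_{1,1}$, the opposite of what you assumed. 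Concretely, $M_{0,0}=M_{1,1}=1$, $M_{0,1}=M_{1,0}=10$, $\rho_0=9.95$, $\rho_1=0.1$ satisfies $u=99.4>99=\Delta$ and $\rho_1\le M_{1,1}$, yet $\rho_0<M_{1,0}$. The conclusion is still true and the fix stays inside your framework: derive $\rho_0>M_{1,0}$ from $v\ge 0$ rather than from $\rho_1\le M_{1,1}$. Indeed, $v\ge 0$ gives $\rho_1\ge\rho_0 M_{1,1}/M_{1,0}$, hence $u\le\rho_0\Delta/M_{1,0}$, and $u>\Delta$ then forces $\rho_0>M_{1,0}$; now $\rho_1\le M_{1,1}$ yields $v\le M_{1,1}(M_{1,0}-\rho_0)<0$, contradicting $v\ge 0$. (Even shorter: $v\ge 0$ and $\rho_1\le M_{1,1}$ give $u\le\rho_1\Delta/M_{1,1}\le\Delta$ directly, contradicting $u>\Delta$.) With this repair, and its mirror image in the symmetric $v>\Delta$ branch, all remaining steps of your proposal check out and the coverage argument is complete.
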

\begin{proof}
First, suppose $\rho_0 \geq M_{0,0}+M_{1,0}$. We claim that in this case, the value of $\rho_1$ determines which case holds.
\begin{itemize}
\item If $\rho_1 \leq M_{1,1}$ then case $3$ holds, since obviously $ M_{1,1}\tfrac{\rho_0}{M_{1,0}} > M_{1,1} \geq \rho_1$.
\item If $M_{1,1} < \rho_1 \leq M_{0,1}+M_{1,1}$ then case $5$ holds since 
\[\rho_0 M_{0,1} - \rho_1 M_{0,0} \geq (M_{0,0}+M_{1,0})M_{0,1}-\rho_1 M_{0,0} = M_{0,1}M_{1,0} + M_{0,0}(M_{0,1}-\rho_1) \geq M_{0,1}M_{1,0}-M_{0,0}M_{1,1}\]
\item If $\rho_1 > M_{0,1}+M_{1,1}$ then clearly case $1$ holds.
\end{itemize}
Similarly, if we have that $\rho_1 \geq M_{0,1}+M_{1,1}$, then the value of $\rho_0$ determines whether case $2,4$ or $1$ hold.

We therefore assume from now on that $\rho_0 < M_{0,0}+M_{1,0}$ and $\rho_1 < M_{0,1}+M_{1,1}$. 

Suppose that $\tfrac {\rho_0}{\rho_1} \leq \tfrac{M_{0,0}}{M_{0,1}}$. 
\begin{itemize}
\item If $\rho_0 \leq M_{0,0}$ then clearly case $2$ holds.
\item If $\rho_0 \geq M_{0,0}$ then we show case $4$ holds. Observe $\tfrac {\rho_1}{\rho_0} - \tfrac{M_{1,1}}{M_{1,0}} \geq \tfrac {M_{0,1}}{M_{0,0}} - \tfrac{M_{1,1}}{M_{1,0}}$, so $\tfrac{ \rho_1M_{1,0} - \rho_0M_{1,1}}{\rho_0 M_{1,0}} \geq \tfrac{M_{0,1}M_{1,0} - M_{0,0}M_{1,1}}{M_{0,0}M_{1,0}}$. We conclude that $\rho_1M_{1,0} - \rho_0M_{1,1} \geq \tfrac {\rho_0}{M_{0,0}} (M_{0,1}M_{1,0}-M_{0,0}M_{1,1})$. So the fact that $\rho_0 \geq M_{0,0}$ implies that the conditions of case $4$ hold.
\end{itemize}
Analogously, if we assume the $\frac {\rho_0}{\rho_1} \geq \tfrac {M_{1,0}}{M_{1,1}}$, then the same line of argument shows that either case $3$ or case $5$ hold.

So now, we assume both that $\rho_0 < M_{0,0}+M_{1,0}$, $\rho_1 < M_{0,1}+M_{1,1}$ and that $\tfrac {M_{0,0}}{M_{0,1}} < \tfrac {\rho_0}{\rho_1} < \tfrac {M_{1,0}}{M_{1,1}}$. 
\begin{itemize}
\item If $\rho_1 M_{1,0} -\rho_0M_{1,1} \geq M_{0,1}M_{1,0} - M_{0,0}M_{1,1}$, we argue that case $4$ holds. This is because we have both that $\rho_1 < \rho_0 \tfrac{M_{0,1}}{M_{0,0}}$ and that $\rho_1 \geq \rho_0 \tfrac{M_{1,1}}{M_{1,0}} +M_{0,1} - \tfrac {M_{0,0}M_{1,1}}{M_{1,0}}$. Combining the two we get
\[ \rho_0\left( \tfrac {M_{0,1}}{M_{0,0}} - \tfrac {M_{1,1}}{M_{1,0}}\right) > \tfrac {M_{0,1}M_{1,0}-M_{0,0}M_{1,1}}{M_{1,0}} ~~\Rightarrow ~~ \rho_0 > M_{0,0}\]
\item If $\rho_0 M_{0,1} - \rho_1M_{0,0} \geq M_{0,1}M_{1,0} - M_{0,0}M_{1,1}$ then we are in the analogous case, and  we can show, using the inequality $\tfrac {\rho_0}{\rho_1} < \tfrac {M_{1,0}}{M_{1,1}}$, that $\rho_1 > M_{1,1}$.
\end{itemize}
This leaves us with the case that $\rho_0 < M_{0,0}+M_{1,0}$, $\rho_1 < M_{0,1}+M_{1,1}$, $\tfrac {M_{0,0}}{M_{0,1}} < \tfrac {\rho_0}{\rho_1} < \tfrac {M_{1,0}}{M_{1,1}}$ and also $\rho_1 M_{1,0} -\rho_0M_{1,1} < M_{0,1}M_{1,0} - M_{0,0}M_{1,1}$ and $\rho_0 M_{0,1} - \rho_1M_{0,0} < M_{0,1}M_{1,0} - M_{0,0}M_{1,1}$. This is precisely case $6$.
\end{proof}
}
\ifx \cameraready \undefined
Recall, in addition to the conditions specifically stated in Case $6$ in Table~\ref{tab:Conditions_for_NE}, we also require that $D_0^2M_{0,0}M_{1,0} = D_1^2M_{0,1}M_{1,1}$ in order for the two types of agent $B$ to play Randomized Response. In other words, this condition implies that $B$'s BNE strategy, represented by the point \[P_5=\big( \frac {D_0D_1M_{0,1}M_{1,0} - D_1^2 M_{0,1}M_{1,1}} {D_0D_1 M_{0,1}M_{1,0} - D_0 D_1 M_{0,0}M_{1,1}},\frac {D_0D_1M_{0,1}M_{1,0} - D_0^2 M_{0,0}M_{1,0}} {D_0D_1 M_{0,1}M_{1,0} - D_0 D_1 M_{0,0}M_{1,1}} \big)\] lies on the $p=q$ line.
\cut{
\begin{proposition}
\label{clm:conditions_for_RR}
In a BNE of Case $6$, where $B$ plays a strictly randomized strategy (i.e. $p,q \in (0,1)$ ), we have that $p=q$ iff $\tfrac {D_0 M_{1,0}}{D_1M_{1,1}} = \tfrac {D_1 M_{0,1}}{D_0 M_{0,0}}$.
\end{proposition}
\begin{proof}
The coordinates of $P_5$ are $\left( \frac {D_0D_1M_{0,1}M_{1,0} - D_1^2 M_{0,1}M_{1,1}} {D_0D_1 M_{0,1}M_{1,0} - D_0 D_1 M_{0,0}M_{1,1}} , \frac {D_0D_1M_{0,1}M_{1,0} - D_0^2 M_{0,0}M_{1,0}} {D_0D_1 M_{0,1}M_{1,0} - D_0 D_1 M_{0,0}M_{1,1}} \right)$, so the proof follows immediately.
\end{proof}
}
And so, in this case the $B$ agent plays a Randomized Response strategy that preserves $\epsilon$-differential privacy for 
$\epsilon = \ln(\tfrac p {1-q}) =\ln\left( \tfrac {D_1M_{0,1}} {D_0M_{0,0}}\right)$.
Observe that this value of $\epsilon$ is \emph{independent} from the value of the coupon (i.e., from $\rho_0$ and $\rho_1$). This is due to the nature of BNE in which an agent plays her Nash-strategy in order to make her opponent indifferent between various strategies rather than maximizing her own utility. Therefore, the coordinates of $P_5$ are such that they make agent $A$ indifferent between opting out and playing $x_0=1$ (or opting out and $y_1=1$). Since the utility function of $A$ is independent of $\rho_0,\rho_1$, we have that perturbing the values of $\rho_0,\rho_1$ does not affect the coordinates of $P_5$. (Yet, perturbing the values of $\rho_0,\rho_1$ does affect the various relations between the parameters of the game, and so it may determine which of the $6$ cases in Table~\ref{tab:Conditions_for_NE} holds.)

\else

Recall, in addition to the conditions specifically stated in Equation~\eqref{eq:condition_of_randomized_BNE}, we also require that $D_0^2M_{0,0}M_{1,0} = D_1^2M_{0,1}M_{1,1}$ in order for the two types of agent $B$ to play Randomized Response. In other words, the feasibility condition in Equation~\eqref{eq:condition_of_randomized_BNE} implies that $B$'s BNE strategy, denoted by $p^* = \Pr[\sigma_B^*(0)=0]$ and $q^* = \Pr[\sigma_B^*(1)=1]$, is given by 
\[(p^*,q^*)=\big( \frac {D_0D_1M_{0,1}M_{1,0} - D_1^2 M_{0,1}M_{1,1}} {D_0D_1 M_{0,1}M_{1,0} - D_0 D_1 M_{0,0}M_{1,1}},\frac {D_0D_1M_{0,1}M_{1,0} - D_0^2 M_{0,0}M_{1,0}} {D_0D_1 M_{0,1}M_{1,0} - D_0 D_1 M_{0,0}M_{1,1}} \big)\] The additional condition of $D_0^2M_{0,0}M_{1,0} = D_1^2M_{0,1}M_{1,1}$ implies therefore that $p^*=q^*$.
And so, in this case the $B$ agent plays a Randomized Response strategy that preserves $\epsilon$-differential privacy for 
$\epsilon = \ln(\tfrac {p^*} {1-{q^*}}) =\ln\left( \tfrac {D_1M_{0,1}} {D_0M_{0,0}}\right)$.
Observe that this value of $\epsilon$ is \emph{independent} from the value of the coupon (i.e., from $\rho_0$ and $\rho_1$). This is due to the nature of BNE in which an agent plays her Nash-strategy in order to make her opponent indifferent between various strategies rather than maximizing her own utility. Therefore, the coordinates $(p^*,q^*)$ are such that they make agent $A$ indifferent between several pure strategies. And since the utility function of $A$ is independent of $\rho_0,\rho_1$, we have that perturbing the values of $\rho_0,\rho_1$ does not affect the coordinates $(p^*,q^*)$. (Yet, perturbing the values of $\rho_0,\rho_1$ does affect the various relations between the parameters of the game, and so it may determine which of the $6$ feasibility conditions does in fact hold.)
\fi

\subsection{Proof of Theorem~\ref{thm:NE_which_is_RR}: Finding a BNE Strategy for $B$}
\apxOptOut

\ifx \cameraready \undefined

\newpage
\section{Conclusions and Future Directions}
\label{sec:conclusions}
Our work is a first attempt at exposing and reconciling the competing conclusions of two different approaches to the same challenge: the theory of privacy-aware agents (where privacy loss is modeled using differential privacy), and the behavior of standard utility-maximizing agents once they explicitly assess future losses from having their behavior in the current game publicly exposed.
While the canonical privacy-aware agent randomizes her strategy, we show that different explicit privacy losses cause very different behavior among agents. This is best illustrated with the game studied in Section~\ref{sec:matching_pennies} (Theorem~\ref{thm:matching-pennies-continuous-valuations}). In that game, agents assess their future loss and their behavior is therefore quite simple: if the current gain is greater than the future loss, their behavior is to truthfully  report their type; otherwise they lie and report the opposite type. We believe this simple rule explains real-life phenomena, such as people trying to hide their medical condition from the general public while truthfully answering a doctor's questions.
\footnote{I'm likely gain little and potentially lose a lot from revealing my medical history to a random person, whereas I am likely to gain a lot from truthfully reporting my medical history to a doctor.}


Observe however that in all the games we analyzed, we still have not pinned down a game in which the behavior of a non-privacy aware agent \emph{fully} mimics the behavior of a privacy-aware agent. Privacy-aware agents' behavior is, after a fashion, quite reasonable. They trade-off between the value of the coupon they get and the amount of privacy (or change in belief) they are willing to risk. Naturally, the higher the value of the coupon, the more privacy they are willing to risk.
In contrast, in the game discussed in Section~\ref{sec:opt-out-possible}, even under settings where $B$'s BNE strategy $\sigma_B^*$ is randomized and satisfies $\Pr[\sigma_B^*(0)]=\Pr[\sigma_B^*(1)]$, we don't see a continuous change in $B$'s behavior based on the value of the coupon.  Changing solely the value of the coupon while keeping all other parameters the same, we see that $B$ plays the same BNE strategy, whereas $A$'s BNE strategy continuously changes.


It would be interesting to pursue this line of work further, by studying more complex games. In particular, we propose the following scenario, which resembles the standard narrative in differential-privacy literature and should provide a complementary approach to the ``sensitive surveyor'' problem~\cite{GhoshR11,NissimOS12,RothS12,NissimVX14,GhoshLRS14}. Suppose that the signal that $B$ sends is not for a type of coupon that gives $B$ an immediate and fixed reward, but rather a response of $B$ to a survey question. That is, suppose $B$ interacts with a benevolent data curator that wishes to learn the distribution of type-$0$ and type-$1$ agents in the population and $B$ may benefit from the effect of curator's analysis. (For example, the data curator may ask people with a certain disease about their exposure to some substance.) In such a case, $B$'s utility is a function of the curator's ability to well approximate the true answer. In addition to the potential gain, there is also potential loss, based on $B$'s concerns about her private information being publicly exposed. What formulation of this privacy loss results in $B$ playing according to a Randomized Response strategy? What explicit formulation of privacy loss causes $B$ to truthfully report her type knowing that $A$'s will data be published using an $\epsilon$-differential private mechanism?
\fi

\ifx \fullversion \undefined \vspace{-0.6cm} \fi
\section*{Acknowledgments}
\ifx \fullversion \undefined \vspace{-0.3cm} \fi
We would like to thank Kobbi Nissim for many helpful discussions and helping us in initiating this line of work.
\ifx \fullversion \undefined \vspace{-0.5cm} \fi

\ifx \fullversion \undefined
{\small
\bibliographystyle{splncs03}
\bibliography{paper}
}
\else
\bibliographystyle{alpha}
\bibliography{paper}
\fi

\ifx \cameraready \undefined
\appendix

\ifx \fullversion \undefined
\spnewtheorem*{apxthm}{Theorem}{\bfseries}{\itshape}
\else
\newtheorem*{apxthm}{Theorem}{\bfseries}{\itshape}

\fi

\ifx \fullversion \undefined
\section{Privacy Aware Agents.}
\label{apx_sec:privacy_aware_agent}
\begin{apxthm}[Theorem~\ref{thm:behavior_privacy_aware} restated]
\behaviorPrivacyAware
\end{apxthm}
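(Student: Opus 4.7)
The plan is to parametrize $B$'s strategy by $p = \Pr[\sigma_B(0)=0]$ and $q = \Pr[\sigma_B(1)=1]$ and write her utility as
\[ u_B(p,q) = D_0\rho_0\, p + D_1\rho_1\, q - v\ln\bigl(X_{\rm game}(p,q)\bigr), \]
where $X_{\rm game}(p,q) = \max\{p/(1-q),\, (1-q)/p,\, q/(1-p),\, (1-p)/q\} \geq 1$, with equality iff $p = 1-q$. First I would prune the search region. Any $(p,q)$ with $p+q<1$ (or with $p+q=1$ but $p,q\notin\{0,1\}$) is strictly dominated by one of the two pooling deterministic strategies $(1,0)$ or $(0,1)$, because the reward term is no larger while the privacy penalty is strictly positive. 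Any $(1,q)$ with $q\in(0,1)$ or $(p,1)$ with $p\in(0,1)$ immediately incurs $X_{\rm game}=\infty$ and can be discarded. So a maximizer either lies in $\{(1,0),(0,1)\}$, with respective utilities $D_0\rho_0$ and $D_1\rho_1$, or in the open triangle $T = \{(p,q)\in(0,1)^2 : p+q>1\}$.

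Next I would dispose of asymmetric interior critical points. Suppose $(p^*,q^*)\in T$ with $p^*<q^*$; then $X_{\rm game}=p^*/(1-q^*)>1$ and $u_B$ is smooth near $(p^*,q^*)$. Solving the first-order conditions
\[ D_0\rho_0 = \frac{v}{X_{\rm game}}\cdot\frac{1}{1-q^*}, \qquad D_1\rho_1 = \frac{v}{X_{\rm game}}\cdot\frac{p^*}{(1-q^*)^2}, \]
yields $p^*/(1-q^*) = D_1\rho_1/(D_0\rho_0)$, hence $D_0\rho_0\, p^* + D_1\rho_1\, q^* = D_1\rho_1$, and so
\[ u_B(p^*,q^*) = D_1\rho_1 - v\ln\!\bigl(D_1\rho_1/(D_0\rho_0)\bigr) \leq \max\{D_0\rho_0,D_1\rho_1\}. \]
Hence this asymmetric critical point cannot beat the pooling strategies. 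The case $p^*>q^*$ is symmetric.

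The remaining candidates lie on the symmetric ridge $p=q\in(1/2,1)$. On it $X_{\rm game}=p/(1-p)$ and $u_B$ reduces to the univariate function $Y\cdot p - v\ln(p/(1-p))$ with $Y := D_0\rho_0+D_1\rho_1$. Its unique stationary point on $(1/2,1)$ solves $Y = v/(p^*(1-p^*))$, giving $p^* = \tfrac12\bigl(1+\sqrt{1-4v/Y}\bigr)$ whenever $Y>4v$, and the second derivative is negative there so this is the unique local maximum on the ridge. It remains to compare $u_B(p^*)$ with $\max\{D_0\rho_0,D_1\rho_1\}$. The balance hypothesis $\min\{\rho_0,\rho_1\}\geq\alpha(\rho_0+\rho_1)$ implies $\max\{D_0\rho_0,D_1\rho_1\}\leq(1-c)Y$ for some constant $c = c(\alpha,D_0,D_1)>0$ independent of the common scale of $\rho$. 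Since $u_B(p^*)/Y\to 1$ as $Y\to\infty$ while $\max\{D_0\rho_0,D_1\rho_1\}/Y\leq 1-c$, the symmetric randomized strategy strictly dominates every alternative for all sufficiently large $\rho_0,\rho_1$. Uniqueness then follows because the objective restricted to the ridge is strictly concave on $(1/2,1)$, the asymmetric critical points are strictly worse, and the deterministic alternatives are already beaten.

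The main obstacle I anticipate is handling the degenerate case $D_0\rho_0=D_1\rho_1$ (where the whole segment $p+q=1$ yields zero privacy loss and identical utility) and verifying that no supremum is attained in the limit $p\to 1$ or $q\to 1$ with the other coordinate kept strictly in $(0,1)$ --- this requires noting that $X_{\rm game}\to\infty$ in those limits, so $-v\ln(X_{\rm game})$ swamps any bounded reward gain. Once those technicalities are in place, the argument reduces to the three-step comparison above: prune, compare asymmetric critical points with pooling, then compare the symmetric ridge with pooling using the balance hypothesis.
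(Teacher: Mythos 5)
Your proposal is correct and follows essentially the same route as the paper's proof: the same pruning of strategies with $p+q\le 1$ and of the boundary points $(1,q)$, $(p,1)$, the same first-order analysis showing an asymmetric interior critical point yields utility $D_1\rho_1 - v\ln\bigl(D_1\rho_1/(D_0\rho_0)\bigr) \le \max\{D_0\rho_0,D_1\rho_1\}$, and the same reduction to the ridge $p=q$ with $p^*=\tfrac12\bigl(1+\sqrt{1-4v/Y}\bigr)$ and the asymptotic comparison $u_B(p^*)/Y\to 1$ against $\max\{D_0\rho_0,D_1\rho_1\}\le(1-c)Y$. The only difference is cosmetic: you flag the degenerate case $D_0\rho_0=D_1\rho_1$ as a technicality to handle, which the paper disposes of via the standing assumption $D_0\rho_0\neq D_1\rho_1$ stated just before the theorem.
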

\PAA
\fi

\section{Missing Proofs -- Coupon Game with Proper Scoring Rules}
\label{apx_sec:proper_scoring_rules}

\backgroundProperScoringRules

\ifx\fullversion\undefined
\subsection{Proof of Theorem~\ref{thm:BNE_scoring_rules}}
\begin{apxthm}[Theorem~\ref{thm:BNE_scoring_rules} restated]
\BNEScoringRules
\end{apxthm}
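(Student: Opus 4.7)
My plan is to parametrize strategies by $p=\Pr[\sigma_B(0)=0]$, $q=\Pr[\sigma_B(1)=1]$, and $x_0,x_1$ for $A$'s two reports, then use the fact that a proper scoring rule forces $A$ to report her true posterior. By Bayes' rule, $A$'s best response to $(p,q)$ is $x_0=y_0(p,q):=\frac{D_1(1-q)}{D_0p+D_1(1-q)}$ and $x_1=y_1(p,q):=\frac{D_1q}{D_0(1-p)+D_1q}$. $B$'s best response to $(x_0,x_1)$ is governed, for each type, by comparing $\rho$ to $f_0(x_0)-f_0(x_1)$ (for type $0$) and to $f_1(x_1)-f_1(x_0)$ (for type $1$).

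The first step is to show that, in any BNE, $p^*+q^*>1$ (except possibly $(1,0)$ and $(0,1)$): if $p^*<1-q^*$ then $y_0>y_1$, so $x_0>x_1$, and since $f_0$ decreases and $f_1$ increases, $B$ strictly prefers to deviate to $(1,1)$. Similarly $p^*=1-q^*$ with both in $(0,1)$ yields $x_0=x_1=1/2$, again causing $B$ to deviate. Next I will rule out boundary BNEs using the stated hypotheses: the upper bound $\rho<f_0(0)-f_0(1)=f_1(1)-f_1(0)$ rules out $(p^*,q^*)=(1,1)$ (which would need $A$ to report $x_0=0$, $x_1=1$, making $\rho\ge f_1(1)-f_1(0)$ necessary to prevent deviation). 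The lower bound $\rho>f_1(D_0)-f_1(D_1)$ rules out the pooling BNE $(1,0)$: there $A$ sets $x_0=D_1$ and needs $x_1\in[D_1,D_0]$ (by monotonicity of $F_{1/2}$) with $f_1(x_1)-f_1(D_1)\ge\rho$, which is impossible since $f_1$ increases so $f_1(x_1)-f_1(D_1)\le f_1(D_0)-f_1(D_1)<\rho$. An analogous argument rules out $(0,1)$ (and the boundary-pure cases $p^*=1,q^*\in(0,1)$ and $q^*=1,p^*\in(0,1)$ are ruled out because they would force $x_1=1$ or $x_0=0$ and a symmetric $F_{1/2}$ comparison yields a contradiction).

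This leaves only interior BNE with $p^*,q^*\in(0,1)$ and $p^*>1-q^*$, so $y_0<y_1$. At such a BNE, $B$ must be indifferent for both types, which gives the two equalities
\begin{equation*}
\rho \;=\; f_0(y_0)-f_0(y_1) \;=\; f_1(y_1)-f_1(y_0).
\end{equation*}
Adding them yields $f_0(y_0)+f_1(y_0)=f_0(y_1)+f_1(y_1)$, i.e.\ $F_{1/2}(y_0)=F_{1/2}(y_1)$. The key structural fact is that for a symmetric proper scoring rule, $F_{1/2}$ is strictly concave and symmetric about $1/2$ (since $g$ is convex and symmetric about $1/2$), so the only way to have two distinct preimages $y_0<y_1$ with the same value is $y_1=1-y_0$. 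Translating back through Bayes' rule, $y_1=\Pr[t=1\mid\hat t=1]$ and $1-y_0=\Pr[t=0\mid\hat t=0]$, which is exactly the conclusion of the theorem.

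Finally, uniqueness of this BNE follows from the fact that $g$ is strictly convex (so $g'$ is strictly increasing), making the equation $g'(y_1)=\rho$ admit a unique root in $(1/2,1]$; the corresponding pair $(p^*,q^*)$ is then uniquely determined by the simultaneous equations $\frac{D_1 q^*}{D_0(1-p^*)+D_1 q^*}=y_1$ and $\frac{D_0 p^*}{D_0 p^*+D_1(1-q^*)}=1-y_0=y_1$. The main conceptual obstacle is ensuring that the upper and lower bounds on $\rho$ cleanly eliminate every pure/pooling BNE; the subtlety is that one must keep track of the three cases $(1,1)$, $(1,0)$, $(0,1)$ separately and use monotonicity of $f_0,f_1$ together with the second observation about $F_{1/2}$ forcing any compatible $A$-report into $[D_1,D_0]$. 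Once this bookkeeping is done, the symmetry argument collapsing $y_1=1-y_0$ is short and gives the posterior-symmetry conclusion.
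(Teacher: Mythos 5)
Your proposal is correct and follows essentially the same route as the paper's proof: the same $(p,q,x_0,x_1)$ parametrization with Bayes-rule posteriors, the same elimination of $p^*+q^*\leq 1$, the same case analysis using the hypotheses on $\rho$ and the two $F_{1/2}$ observations to rule out $(1,1)$, $(1,0)$, $(0,1)$ and the boundary cases, and the same indifference argument $F_{1/2}(y_0)=F_{1/2}(y_1)\Rightarrow y_1=1-y_0$ for the interior BNE. Your explicit uniqueness step (unique root of $g'(y_1)=\rho$ plus the two linear Bayes equations pinning down $(p^*,q^*)$) is a slightly more complete write-up of what the paper leaves implicit, but it is not a different argument.
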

\proofTheoremProperScoringRule
\else
\specificScoringRules

\ifx \fullversion \undefined
\section{Missing Proofs -- Coupon Game with Identity Matrix Payments}
\label{apx_sec:matching_pennies}

\subsection{Proof of Theorem~\ref{thm:coupon_matching_pennies}}
\begin{apxthm}[Theorem~\ref{thm:coupon_matching_pennies} restated]
\couponMatchingPennies
\end{apxthm}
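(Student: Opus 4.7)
\medskip
\noindent\textbf{Proof proposal.} The plan is to parameterize both agents' strategies and exploit the observation that $B$'s best-response condition depends only on $x+y-1$, so a single choice of $A$'s strategy cannot make both types of $B$ indifferent whenever $\rho_0\neq\rho_1$. Write $p = \Pr[\sigma_B(0)=0]$, $q=\Pr[\sigma_B(1)=1]$, $x = \Pr[\sigma_A(0)=0]$, $y=\Pr[\sigma_A(1)=1]$. Expanding the utility functions gives
\[
u_{B,0} \;=\; p\bigl(\rho_0 - (x+y-1)\bigr) + (y-1), \qquad u_{B,1} \;=\; q\bigl(\rho_1 - (x+y-1)\bigr) + (x-1),
\]
so type $t$'s best response is $\sigma_B(t)=t$ (resp.\ $\sigma_B(t)=1-t$) whenever $\rho_t > x+y-1$ (resp.\ $<$), and type $t$ is indifferent precisely when $\rho_t = x+y-1$.

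Now argue by contradiction. Suppose that in some BNE $(\sigma_A^*,\sigma_B^*)$ both types of $B$ play strictly mixed strategies, i.e.\ $p^*,q^*\in(0,1)$. By best-response, both types of $B$ must be indifferent against $A$'s BNE strategy $(x^*,y^*)$; by the formulas above this forces $\rho_0 = x^* + y^* - 1 = \rho_1$, contradicting the hypothesis $\rho_0\neq\rho_1$. Hence at least one of $p^*$ or $q^*$ is in $\{0,1\}$, which is exactly the conclusion: $\Pr[\sigma_B^*(t)=\hat t]=1$ for some $t,\hat t\in\{0,1\}$.

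I expect the main (modest) obstacle to be verifying that the best-response characterization is tight, i.e.\ that $B$ really does play pure unless the indifference condition holds as an equality, so the two-line check deriving $u_{B,0}$ and $u_{B,1}$ should be written out explicitly. Everything else is immediate from the linearity of both utilities in $(p,q)$. As a supplementary remark, the same argument shows why the theorem fails when $\rho_0=\rho_1$: there the single indifference locus $\rho_0 = \rho_1 = x+y-1$ can be arranged by $A$ (e.g.\ along the $\ell_2$ line of Figure~\ref{fig:strategy_space_for_B}), leaving $B$ free to mix, including via a Randomized Response strategy at the intersection with $p=q$.
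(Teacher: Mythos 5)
Your proof is correct, and it takes a genuinely more direct route than the paper. The structural fact you isolate --- that both types' incentives depend on $A$'s strategy only through the single scalar $x+y-1$, so that a strictly mixed $p^*$ forces $\rho_0 = x^*+y^*-1$ and a strictly mixed $q^*$ forces $\rho_1 = x^*+y^*-1$, which cannot hold simultaneously when $\rho_0\neq\rho_1$ --- appears in the paper's proof only implicitly, in the remark that once the equilibrium is pinned to the line $\ell_2$, agent $A$'s one remaining free parameter $y$ can make at most one type of $B$ indifferent. The paper instead runs a full best-response analysis: it derives $A$'s lines of indifference $\ell_1,\ell_2$ in the $(p,q)$-square, rules out BNE strategies of $B$ strictly above and strictly below $\ell_2$, and then locates all equilibria on $\ell_2$, treating separately the case $\max\{\rho_0,\rho_1\}>1$, the orderings of $\rho_0$ versus $\rho_1$, and the degenerate case $D_0=D_1$. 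Your argument buys brevity: after the two-line utility computation (which, as you note, is the only thing to verify --- $u_{B,t}$ is affine in $p$ resp.\ $q$ with coefficient $\rho_t-(x+y-1)$, so a strictly mixed best response forces that coefficient to vanish), the theorem follows by contradiction with no case analysis and no discussion of $A$'s best responses at all. The paper's longer analysis buys the complete equilibrium characterization --- the exact BNE strategies of both players, the conclusion that $A$ gains nothing over the benchmark game, and the Randomized Response equilibria when $\rho_0=\rho_1$ --- which the surrounding discussion relies on but which the theorem statement itself does not require. One small point to make explicit in a final write-up: your contradiction uses that both types of $B$ occur with positive probability, so that the BNE condition of Definition~\ref{def:NE} constrains both $\sigma_B^*(0)$ and $\sigma_B^*(1)$; this is the paper's standing assumption that $(D_0,D_1)$ is a genuine prior with $D_0\geq D_1>0$. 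Your closing remark about the failure mode at $\rho_0=\rho_1$ agrees with the paper's treatment of that case.
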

\proofTheoremIdentityMatrix

\subsection{Coupon Game with Continuous Coupon Valuations}
\label{apx_subsec:continuous_valuations}

We now consider the same coupon game with payments given in the form of the identity matrix, but under a different setting. Whereas before we assumed the valuations that the two types of $B$ agents have for the coupon are fixed (and known in advance), we now assume they are not fixed. In this section we assume the existence of a continuous prior over $\rho$, where each type $\type \in \{0,1\}$ has its own prior, so $\CDF_0(x) \stackrel {\rm def} = \Pr[\rho < x ~|~ t=0]$ with an analogous definition of $\CDF_1(x)$. We use $\CDF_B$ to denote the cumulative distribution function of the prior over $\rho$ (i.e., $\CDF_B(x) = \Pr[\rho < x] = D_0\CDF_0(x)+D_1\CDF_1(x)$). We assume the $\CDF$ is continuous and so $\Pr[\rho=y]=0$ for any $y$. Given any $z \geq 0$ we denote $\CDF_B^{-1}(z)$ the set $\{y :~ \CDF_B(y)=z\}$. We proceed by proving Theorem~\ref{thm:matching-pennies-continuous-valuations}.
\begin{apxthm}[Theorem~\ref{thm:matching-pennies-continuous-valuations} restated]
\matchingPenniesContinuousValuations
\end{apxthm}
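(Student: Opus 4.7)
\medskip

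\noindent\textbf{Proof proposal.} The plan is to reduce $B$'s strategy space to threshold strategies, then analyze $A$'s best response as a function of the threshold, and finally solve the fixed-point consistency condition between the two.

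First I would write out $B$'s utility as a function of the prospective $A$-strategy $(x,y)$, where $x = \Pr[\sigma_A(0)=0]$ and $y = \Pr[\sigma_A(1)=1]$. Following the same derivation as in the proof of Theorem~\ref{thm:coupon_matching_pennies}, a type-$t$, valuation-$\rho$ agent's expected net benefit from sending the truthful signal $\hat t = t$ over the lying signal $\hat t = 1-t$ is $\rho - (x+y-1)$. Since $\rho$ is drawn from a continuous distribution, the event $\rho = x+y-1$ has probability zero, so in any BNE agent $B$ plays a pure \emph{threshold strategy}: she reports truthfully if $\rho > T$ and lies if $\rho < T$, for some threshold $T = x+y-1$.

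Next I would compute $A$'s utility against an arbitrary threshold $T$. Using $\CDF_B = D_0 \CDF_0 + D_1 \CDF_1$, a short calculation gives
\[
u_A \;=\; \CDF_B(T) \,+\, x\bigl(D_0 - \CDF_B(T)\bigr) \,+\, y\bigl(D_1 - \CDF_B(T)\bigr).
\]
Since $D_0 \geq D_1$, agent $A$ optimally sets $x=1$ whenever $\CDF_B(T) \leq D_0$ (which always holds since $\CDF_B(T)\leq 1$ and we may replace $x$ by $1$ without loss in the relevant regime), and $A$ sets $y$ freely only when $\CDF_B(T) = D_1$, strictly prefers $y=1$ when $\CDF_B(T) < D_1$, and strictly prefers $y=0$ when $\CDF_B(T) > D_1$.

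Finally I would close the loop by requiring that the threshold $T$ that $B$ uses equals $x+y-1$ for $A$'s best-response $(x,y)$. There are three cases. If $\CDF_B(T) > D_1$, then $y=0$ and we need $T = x-1 \leq 0$, which forces $\CDF_B(T) \leq \CDF_B(0) = 0 < D_1$, a contradiction. If $\CDF_B(T) < D_1$, then $x=y=1$ forces $T=1$, so consistency requires $\CDF_B(1) < D_1$; this yields the ``edge'' BNE with $y^* = 1$. Otherwise $\CDF_B(T) = D_1$, which gives $x=1$ and any $y \in [0,1]$; consistency with $T=y$ requires $y^* \in \CDF_B^{-1}(D_1)$, producing the main BNE described in the statement. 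The main obstacle is not technical but rather bookkeeping: I need to carefully justify that the measure-zero indifference set $\{\rho = T\}$ does not create spurious equilibria, and to check that the edge case and the generic case cover all continuous $\CDF_B$ (including when $\CDF_B^{-1}(D_1)$ is an interval, which corresponds to a gap in the support of $\rho$ and yields a continuum of equilibria, all of which satisfy the stated characterization).
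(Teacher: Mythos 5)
Your proposal is correct and follows essentially the same route as the paper's own proof: reduce $B$ to threshold strategies with $T=x+y-1$, write $u_A = \CDF_B(T) + x\bigl(D_0-\CDF_B(T)\bigr) + y\bigl(D_1-\CDF_B(T)\bigr)$, and resolve the three cases $\CDF_B(T) \lessgtr D_1$ via the same fixed-point consistency argument, including the edge equilibrium $y^*=1$ when $\CDF_B(1)<D_1$. Your one parenthetical slip --- $\CDF_B(T)\leq 1$ does not imply $\CDF_B(T)\leq D_0$ --- is harmless, since your case analysis (like the paper's) derives the contradiction in the case $\CDF_B(T)>D_1$ from $T=x-1\leq 0$ for arbitrary $x\in[0,1]$ without ever needing $x=1$ there.
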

\continuousCouponValuations
\fi

\ifx\fullversion\undefined
\section{Missing Proofs -- Coupon Game with an Opt-Out Strategy}
\label{apx_sec:opt-out}
\begin{apxthm}[Theorem~\ref{thm:NE_which_is_RR} restated]
\NEWhichIsRR
\end{apxthm}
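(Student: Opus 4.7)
The plan is to prove the theorem by characterizing the best responses of both players, narrowing down the set of candidate BNE strategy profiles, and then verifying that under the hypotheses exactly one such profile survives and that it has the advertised Randomized Response form.

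First, I would set up the familiar parametrization: $p=\Pr[\sigma_B(0)=0]$, $q=\Pr[\sigma_B(1)=1]$, and for agent $A$ the four non-opt-out probabilities $x_0,x_1,y_0,y_1$ (with $x_0+x_1\leq 1$, $y_0+y_1\leq 1$). Using Bayes' rule, I would compute $A$'s expected utility conditioned on each signal for each of the three pure options (accuse $0$, accuse $1$, opt out), and compare them pairwise. This yields two ``indifference lines'' in the $(p,q)$-square, call them
\[
\ell_1:\ p=\frac{M_{0,1}D_1}{M_{0,0}D_0}(1-q),\qquad \ell_2:\ 1-p=\frac{M_{1,1}D_1}{M_{1,0}D_0}\,q,
\]
on which $A$ is indifferent between opt-out and accusing the type matching the signal. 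The strawman assumptions~\eqref{eq:strawman_game_assumptions} guarantee that in any BNE $A$ never plays $\tilde t=1-\hat t$: two short lemmas show that if the likelihood ratios of $B$'s signals stay away from the extreme values allowed by $M$ and the prior, then the cross accusations are strictly dominated by opting out, and otherwise $B$ would strictly prefer to deviate to $(1,1)$. Hence in any BNE we may assume $x_1=y_0=0$ and $A$'s strategy is determined by $(x_0,y_1)\in[0,1]^2$.

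Next I would write $B$'s two per-type utilities in terms of $(x_0,y_1)$, giving linear best-response conditions for $p$ and $q$ of the form ``$p=1$ iff $\rho_0>M_{0,0}x_0+M_{1,0}y_1$'' and ``$q=1$ iff $\rho_1>M_{1,1}y_1+M_{0,1}x_0$''. Now I would enumerate where a BNE-compatible $(p,q)$ can lie: above both lines, strictly between them, on one of them away from the intersection, or at the intersection $P_5$ of $\ell_1$ and $\ell_2$. For each of the first several cases I would derive a necessary algebraic condition on $(\rho_0,\rho_1,M)$; the crux is that the strict versions of the feasibility inequalities in the hypothesis
\[
0<\rho_1M_{1,0}-\rho_0M_{1,1}<M_{0,1}M_{1,0}-M_{0,0}M_{1,1},\quad 0<\rho_0M_{0,1}-\rho_1M_{0,0}<M_{0,1}M_{1,0}-M_{0,0}M_{1,1},
\]
rule out pure strategies (they would require one of $\rho_0,\rho_1$ to exceed an $M$-combination that the inequalities forbid) and also rule out the ``on a single line'' boundary strategies (they require equalities that contradict strict inequality). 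This part is the main obstacle: one has to check that each of the six cases from the full classification (paralleling Table~\ref{tab:Conditions_for_NE}) is incompatible with the strict hypothesis except for the intersection case.

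Once only $P_5$ survives, I would solve the $2\times 2$ linear system making $B$ indifferent, obtaining $A$'s mixed response by Cramer's rule, and solve the symmetric system making $A$ indifferent to get
\[
p^*=\frac{D_1M_{0,1}(D_0M_{1,0}-D_1M_{1,1})}{D_0D_1(M_{0,1}M_{1,0}-M_{0,0}M_{1,1})},\qquad q^*=\frac{D_0M_{1,0}(D_1M_{0,1}-D_0M_{0,0})}{D_0D_1(M_{0,1}M_{1,0}-M_{0,0}M_{1,1})}.
\]
The numerators are positive by~\eqref{eq:strawman_game_assumptions}, and the feasibility hypothesis ensures $(x_0^*,y_1^*)\in[0,1]^2$, so $(p^*,q^*)\in(0,1)^2$. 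Plugging the hypothesis $D_0^2M_{0,0}M_{1,0}=D_1^2M_{0,1}M_{1,1}$ directly into the two formulas yields $p^*=q^*$. Finally, to verify $p^*\geq \tfrac12$, I would clear denominators and reduce the inequality $2p^*\geq 1$ to
\[
(D_1M_{0,1}-D_0M_{0,0})(D_0M_{1,0}-D_1M_{1,1})\geq 0,
\]
which holds by the two strawman inequalities. Combining these three facts gives the claimed Randomized Response form and its uniqueness, completing the proof.
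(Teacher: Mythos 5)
Your proposal is correct and mirrors the paper's own proof essentially step for step: the same best-response analysis forcing $x_1=y_0=0$ in any BNE, the same enumeration of candidate $(p,q)$ over the regions and segments determined by the indifference lines (exactly the six cases of Table~\ref{tab:Conditions_for_NE}, with the strict inequalities excluding all cases but the intersection point $P_5$), the same Cramer's-rule solution for $A$'s strategy and the same closed-form $(p^*,q^*)$, and the same verification that $D_0^2M_{0,0}M_{1,0}=D_1^2M_{0,1}M_{1,1}$ forces $p^*=q^*\geq\tfrac 1 2$ via the strawman inequalities~\eqref{eq:strawman_game_assumptions}. Your factorization of $2p^*\geq 1$ into $(D_1M_{0,1}-D_0M_{0,0})(D_0M_{1,0}-D_1M_{1,1})\geq 0$ is merely a compact repackaging of the paper's chain of equivalences (note it, too, uses the hypothesis $D_0^2M_{0,0}M_{1,0}=D_1^2M_{0,1}M_{1,1}$, which is legitimately in force), so no genuinely different ideas are involved.
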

\apxOptOut
\fi

\fi
\end{document}